\title{On the role of connectivity in Linear Logic proofs}
\author{Raffaele {Di Donna}}{Université Paris Cité, France \and Università Roma Tre, Italy \and \url{https://www.irif.fr/~didonna/} }{didonna@irif.fr}{}{}%mandatory, please use full name; only 1 author per \author macro; first two parameters are mandatory, other parameters can be empty. Please provide at least the name of the affiliation and the country. The full address is optional. Use additional curly braces to indicate the correct name splitting when the last name consists of multiple name parts.
\author{Lorenzo {Tortora de Falco}}{Università Roma Tre, Italy \and \url{http://logica.uniroma3.it/~tortora/}}{tortora@uniroma3.it}{}{}
\authorrunning{R. Di Donna and L. Tortora de Falco} %mandatory. First: Use abbreviated first/middle names. Second (only in severe cases): Use first author plus 'et al.'
\keywords{Linear Logic, Sequentialization, Correctness Criteria} %mandatory; please add comma-separated list of keywords
\definecolor{headcolor}{RGB}{234,236,240}
\definecolor{rowcolor}{RGB}{248,249,250}
\def\infcstrut{\vrule width 0pt height 8.125pt depth 1.875pt}
\definecolor{myred}{RGB}{179,0,0}
\definecolor{myblue}{RGB}{0,0,179}
\definecolor{mygreen}{RGB}{0,128,0}
\definecolor{mypurple}{RGB}{222,0,174}
\definecolor{myyellow}{RGB}{252,222,30}
\definecolor{myorange}{RGB}{255,149,0}
\tikzstyle{cell}=[fill=black, draw=black, shape=circle]
\tikzstyle{lowLeftBoxCorner}=[yshift=-0.8425ex]
\tikzstyle{lowRightBoxCorner}=[yshift=-0.885ex]
\tikzstyle{squary}=[anchor=north, rounded corners=2pt, inner sep=1.2pt, fill=black, tikzit fill={rgb,255: red,11; green,96; blue,255}, shape=rectangle]
\tikzstyle{linkDOWN}=[anchor=north, trapezium, trapezium angle=110, rounded corners=2pt, inner sep=1.2pt, fill=black, tikzit fill={rgb,255: red,11; green,96; blue,255}]
\tikzstyle{linkUP}=[anchor=north, trapezium, rotate=180, trapezium angle=110, rounded corners=2pt, inner sep=1.2pt, fill=black, tikzit fill={rgb,255: red,255; green,73; blue,76}]
\tikzstyle{agent}=[inner sep=1.5pt, tikzit fill={rgb,255: red,255; green,224; blue,46}]
\tikzstyle{tee}=[line width=0.3pt]
\tikzstyle{tensorLink}=[anchor=north, trapezium, trapezium angle=110, rounded corners=2pt, inner sep=1.2pt, fill=black, label={[text=white]:$\otimes$}, tikzit fill={rgb,255: red,12; green,255; blue,0}]
\tikzstyle{parrLink}=[anchor=north, trapezium, trapezium angle=110, rounded corners=2pt, inner sep=1.2pt, fill=black, label={[text=white]:$\parr$}, tikzit fill={rgb,255: red,0; green,255; blue,213}]
\tikzstyle{cutLink}=[anchor=north, trapezium, trapezium angle=110, rounded corners=2pt, inner sep=1.2pt, fill=black, label={[text=white]:$\scriptscriptstyle\mathtt{cut}$}, tikzit fill={rgb,255: red,11; green,96; blue,255}, shape=rectangle]
\tikzstyle{daimonLink}=[anchor=north, trapezium, rotate=180, trapezium angle=110, rounded corners=2pt, inner sep=1.2pt, fill=black, label={[text=white]:$\maltese$}, tikzit fill={rgb,255: red,255; green,73; blue,76}, shape=rectangle]
\tikzstyle{wlinkDOWN}=[draw=black, anchor=north, trapezium, trapezium angle=110, rounded corners=2pt, inner sep=1.2pt, fill=white, tikzit fill={rgb,255: red,100; green,120; blue,255}]
\tikzstyle{wlinkUP}=[anchor=north, trapezium, rotate=180, trapezium angle=110, rounded corners=2pt, inner sep=1.2pt, fill=white, tikzit fill={rgb,255: red,100; green,120; blue,255}]
\tikzstyle{wlinkROUND}=[draw=black, shape=circle, inner sep=1.2pt, fill=white, tikzit fill={rgb,255: red,180; green,120; blue,255}]
\tikzstyle{tensorwLink}=[draw=black, anchor=north, trapezium, trapezium angle=110, rounded corners=2pt, inner sep=1.2pt, fill=white, font={$\scriptstyle\varotimes$}, tikzit fill={rgb,255: red,12; green,255; blue,0}]
\tikzstyle{parrwLink}=[draw=black, anchor=north, trapezium, trapezium angle=110, rounded corners=2pt, inner sep=1.2pt, fill=white, font={$\scriptstyle\parr$}, tikzit fill={rgb,255: red,0; green,255; blue,213}]
\tikzstyle{cutwLink}=[draw=black, anchor=north, trapezium, trapezium angle=110, rounded corners=2pt, inner sep=1.2pt, fill=white, font={$\scriptstyle\mathtt{cut}$}, tikzit fill={rgb,255: red,11; green,96; blue,255}, tikzit shape=rectangle]
\tikzstyle{daimonwLink}=[draw=black, anchor=north, trapezium, rotate=180, trapezium angle=110, rounded corners=2pt, inner sep=1.2pt, fill=white, font={\scalebox{0.8}{$\scriptscriptstyle\maltese$}}, tikzit fill={rgb,255: red,255; green,73; blue,76}, tikzit shape=rectangle]
\tikzstyle{axwLink}=[draw=black, anchor=north, trapezium, trapezium angle=110, rounded corners=2pt, inner sep=1.2pt, fill=white, font={$\scriptstyle\mathtt{ax}$}, tikzit fill={rgb,255: red,255; green,73; blue,76}, tikzit shape=rectangle]
\tikzstyle{auxwLink}=[draw=black, anchor=north, trapezium, trapezium angle=110, rounded corners=2pt, inner sep=1.2pt, fill=white, font={$\scriptstyle\mathtt{p}$}, tikzit fill={rgb,255: red,11; green,96; blue,255}, tikzit shape=rectangle]
\tikzstyle{wnwLink}=[draw=black, anchor=north, trapezium, trapezium angle=110, rounded corners=2pt, inner sep=1.2pt, fill=white, font={$\scriptstyle?$}, tikzit fill={rgb,255: red,200; green,255; blue,90}]
\tikzstyle{ocwLink}=[draw=black, anchor=north, trapezium, trapezium angle=110, rounded corners=2pt, inner sep=1.2pt, fill=white, font={$\scriptstyle!$}, tikzit fill={rgb,255: red,0; green,255; blue,213}]
\tikzstyle{boxLink}=[draw=black, anchor=north, trapezium, trapezium angle=110, rounded corners=2pt, inner sep=1.2pt, fill=white, font={$\scriptstyle\mathtt{b}$}, tikzit fill={rgb,255: red,0; green,255; blue,213}]
\tikzstyle{botwLink}=[draw=black, anchor=north, trapezium, trapezium angle=110, rounded corners=2pt, inner sep=1.2pt, fill=white, font={$\scriptstyle\bot$}, tikzit fill={rgb,255: red,0; green,255; blue,213}]
\tikzstyle{onewLink}=[draw=black, anchor=north, trapezium, trapezium angle=110, rounded corners=2pt, inner sep=1.2pt, fill=white, font={$\scriptstyle{\mathtt{1}}$}, tikzit fill={rgb,255: red,0; green,255; blue,213}]
\tikzstyle{cellORANGE}=[fill={rgb,255: red,100; green,159; blue,255}, shape=circle]
\tikzstyle{cellYELLOW}=[fill={rgb,255: red,137; green,252; blue,195}, shape=circle]
\tikzstyle{cellVIOLET}=[fill={rgb,255: red,148; green,137; blue,252}, shape=circle]
\tikzstyle{cellROSE}=[fill={rgb,255: red,252; green,137; blue,228}, shape=circle]
\tikzstyle{DOTagent}=[font={$\bullet$}, tikzit fill={rgb,255: red,255; green,224; blue,46}]
\tikzstyle{contrLink}=[draw=black, anchor=north, trapezium, trapezium angle=110, rounded corners=2pt, inner sep=1.2pt, fill=white, font={$\scriptstyle\mathtt{?c}$}, tikzit fill={rgb,67: red,109; green,90; blue,90}]
\tikzstyle{derLink}=[draw=black, anchor=north, trapezium, trapezium angle=110, rounded corners=2pt, inner sep=1.2pt, fill=white, font={$\scriptstyle{\mathtt{?d}}$}, tikzit fill={rgb,67: red,109; green,90; blue,90}]
\tikzstyle{weakLink}=[draw=black, anchor=north, trapezium, trapezium angle=110, rounded corners=2pt, inner sep=1.2pt, fill=white, font={$\scriptstyle{\mathtt{?w}}$}, tikzit fill={rgb,67: red,109; green,90; blue,90}]
\tikzstyle{genwLink}=[draw=black, anchor=north, trapezium, trapezium angle=110, rounded corners=2pt, inner sep=1.2pt, fill=white, font={$\scriptstyle l$}, tikzit fill={rgb,255: red,0; green,255; blue,213}]
\tikzstyle{genwBlue}=[draw=blue!95!black, anchor=north, trapezium, trapezium angle=110, rounded corners=2pt, inner sep=1.2pt, fill=white, font={$\scriptstyle\textcolor{blue!95!black}{\ell}$}, tikzit fill={rgb,255: red,0; green,0; blue,255}]
\tikzstyle{tenRed}=[draw=red!80!black, anchor=north, trapezium, trapezium angle=110, rounded corners=2pt, inner sep=1.2pt, fill=white, font={$\scriptstyle\textcolor{red!80!black}{\varotimes}$}, tikzit fill={rgb,255: red,255; green,0; blue,0}]
\tikzstyle{parBlue}=[draw=blue!95!black, anchor=north, trapezium, trapezium angle=110, rounded corners=2pt, inner sep=1.2pt, fill=white, font={$\scriptstyle\textcolor{blue!95!black}{\parr}$}, tikzit fill={rgb,255: red,0; green,0; blue,255}]
\tikzstyle{parOrange}=[draw=orange, anchor=north, trapezium, trapezium angle=110, rounded corners=2pt, inner sep=1.2pt, fill=white, font={$\scriptstyle\textcolor{orange}{\parr}$}, tikzit fill={rgb,255: red,155; green,100; blue,0}]
\tikzstyle{cutRed}=[draw=red!80!black, anchor=north, trapezium, trapezium angle=110, rounded corners=2pt, inner sep=1.2pt, fill=white, font={$\scriptstyle\textcolor{red!80!black}{\mathtt{cut}}$}, tikzit fill={rgb,255: red,255; green,0; blue,0}, tikzit shape=rectangle]
\tikzstyle{axGreen}=[draw=green!60!black, anchor=north, trapezium, trapezium angle=110, rounded corners=2pt, inner sep=1.2pt, fill=white, font={$\scriptstyle\textcolor{green!60!black}{\mathtt{ax}}$}, tikzit fill={rgb,255: red,0; green,255; blue,0}, tikzit shape=rectangle]
\tikzstyle{boxGreen}=[draw=green!60!black, anchor=north, trapezium, trapezium angle=110, rounded corners=2pt, inner sep=1.2pt, fill=white, font={$\scriptstyle\textcolor{green!60!black}{\mathtt{b}}$}, tikzit fill={rgb,255: red,0; green,255; blue,0}]
\tikzstyle{botOrange}=[draw=orange, anchor=north, trapezium, trapezium angle=110, rounded corners=2pt, inner sep=1.2pt, fill=white, font={$\scriptstyle\textcolor{orange}{\bot}$}, tikzit fill={rgb,255: red,155; green,100; blue,0}]
\tikzstyle{oneGreen}=[draw=green!60!black, anchor=north, trapezium, trapezium angle=110, rounded corners=2pt, inner sep=1.2pt, fill=white, font={$\scriptstyle\textcolor{green!60!black}{\mathtt{1}}$}, tikzit fill={rgb,255: red,0; green,255; blue,0}]
\tikzstyle{contrBlue}=[draw=blue!95!black, anchor=north, trapezium, trapezium angle=110, rounded corners=2pt, inner sep=1.2pt, fill=white, font={$\scriptstyle\textcolor{blue!95!black}{\mathtt{?c}}$}, tikzit fill={rgb,255: red,0; green,0; blue,255}]
\tikzstyle{derOrange}=[draw=orange, anchor=north, trapezium, trapezium angle=110, rounded corners=2pt, inner sep=1.2pt, fill=white, font={$\scriptstyle\textcolor{orange}{\mathtt{?d}}$}, tikzit fill={rgb,67: red,155; green,100; blue,0}]
\tikzstyle{weakOrange}=[draw=orange, anchor=north, trapezium, trapezium angle=110, rounded corners=2pt, inner sep=1.2pt, fill=white, font={$\scriptstyle\textcolor{orange}{\mathtt{?w}}$}, tikzit fill={rgb,255: red,155; green,100; blue,0}]
\tikzstyle{DOTorange}=[font={$\textcolor{orange}{\bullet}$}, tikzit fill={rgb,255: red,155; green,100; blue,0}]
\tikzstyle{parrwLinkHighlighted}=[draw=black, anchor=north, trapezium, trapezium angle=110, rounded corners=2pt, inner sep=1.2pt, fill=pink, font={$\scriptstyle\parr$}, tikzit fill={rgb,255: red,0; green,255; blue,213}]
\tikzstyle{botwLinkHighlighted}=[draw=black, anchor=north, trapezium, trapezium angle=110, rounded corners=2pt, inner sep=1.2pt, fill=pink, font={$\scriptstyle\bot$}, tikzit fill={rgb,255: red,0; green,255; blue,213}]
\tikzstyle{whiteTest}=[text=white]
\tikzstyle{linkInput}=[->, line width=0.2pt, draw opacity=0, latex reversed-, postaction={draw, opacity=1, -, line width=1pt, shorten <=1pt}, tikzit draw={rgb,255: red,245; green,75; blue,200}]
\tikzstyle{linkOutput}=[->, draw opacity=0, -latex, line width=0.2pt, postaction={draw, opacity=1, -, line width=1pt, shorten >=3pt}]
\tikzstyle{filler}=[-, fill={rgb,255: red,238; green,238; blue,238}, draw=none, fill opacity=0.5, tikzit draw=black]
\tikzstyle{fillerTransp}=[-, fill={rgb,255: red,238; green,238; blue,238}, draw=black, fill opacity=0, tikzit draw=black]
\tikzstyle{fillerRed}=[-, fill={rgb,255: red,100; green,159; blue,255}, draw=none, fill opacity=0.5, tikzit draw=black]
\tikzstyle{fillerDARK}=[-, fill={rgb,255: red,100; green,255; blue,255}, draw=none, fill opacity=0.5, tikzit draw=black]
\tikzstyle{fillerGreen}=[-, fill={rgb,255: red,252; green,137; blue,228}, draw=none, fill opacity=0.5, tikzit draw=black]
\tikzstyle{fillerYellow}=[-, fill={rgb,255: red,137; green,252; blue,195}, draw=none, fill opacity=0.5, tikzit draw=black]
\tikzstyle{dottedEdge}=[-, dotted, draw=black]
\tikzstyle{dottedArrow}=[draw=black, ->, dashed]
\tikzstyle{Dsimple}=[draw=black, ->, draw opacity=0, latex reversed-, line width=0.2pt, postaction={draw, opacity=1, -, line width=1pt, shorten >=-5pt}, tikzit draw={rgb,255: red,100; green,75; blue,200}]
\tikzstyle{DsimpleB}=[draw=black, ->, draw opacity=0, latex reversed-, line width=0.2pt, postaction={draw, opacity=1, -, line width=1pt}, tikzit draw={rgb,255: red,100; green,75; blue,200}]
\tikzstyle{simple}=[draw=black, -, draw opacity=0, -latex, line width=0.2pt, postaction={draw, opacity=1, -, line width=1pt, shorten >=3pt, shorten <=-1pt}, tikzit draw={rgb,255: red,100; green,75; blue,200}]
\tikzstyle{simpleB}=[draw=black, -, draw opacity=0, -latex, line width=0.2pt, postaction={draw, opacity=1, -, line width=1pt, shorten >=-1pt, shorten <=-1pt}, shorten >=-4pt, tikzit draw={rgb,255: red,100; green,75; blue,200}]
\tikzstyle{simpleC}=[draw=black, -, draw opacity=0, -latex, line width=0.2pt, postaction={draw, opacity=1, -, line width=1pt, shorten >=3pt, shorten <=1pt}, tikzit draw={rgb,255: red,100; green,75; blue,200}]
\tikzstyle{simpleD}=[draw=black, -, draw opacity=0, -latex, line width=0.2pt, postaction={draw, opacity=1, -, line width=1pt, shorten >=3pt}, tikzit draw={rgb,255: red,100; green,75; blue,200}]
\tikzstyle{reduction}=[->]
\tikzstyle{simpleOrange}=[->, color=orange, tikzit draw={rgb,255: red,255; green,93; blue,0}]
\tikzstyle{simpleBorange}=[draw=orange, -, draw opacity=0, -latex, line width=0.2pt, postaction={draw, opacity=1, -, line width=1pt, shorten >=-1pt, shorten <=-1pt}, shorten >=-4pt, tikzit draw={rgb,255: red,255; green,128; blue,0}]
\tikzstyle{implicarrow}=[->, draw=black, double]
\tikzstyle{thickedge}=[line width=2pt, draw=black, -]
\tikzstyle{DoutLEFT}=[out=180, in=90, looseness=0.85, ->, draw opacity=0, -latex, line width=0.2pt, postaction={draw, opacity=1, -, line width=1pt, shorten <=-1pt, shorten >=3pt}, tikzit draw={rgb,255: red,100; green,75; blue,200}]
\tikzstyle{DoutRIGHT}=[out=00, in=90, looseness=0.85, ->, draw opacity=0, -latex, line width=0.2pt, postaction={draw, opacity=1, -, line width=1pt, shorten <=-1pt, shorten >=3pt}, tikzit draw={rgb,255: red,245; green,200; blue,100}]
\tikzstyle{DinLEFT}=[out=-90, in=180, looseness=0.85, ->, draw opacity=0, latex reversed-, line width=0.2pt, postaction={draw, opacity=1, -, line width=1pt, shorten <=1pt, shorten >=-1pt}, tikzit draw={rgb,255: red,100; green,75; blue,200}]
\tikzstyle{DinRIGHT}=[out=-90, in=00, looseness=0.85, ->, draw opacity=0, latex reversed-, line width=0.2pt, postaction={draw, opacity=1, -, line width=1pt, shorten <=1pt, shorten >=-1pt}, tikzit draw={rgb,255: red,245; green,200; blue,100}]
\tikzstyle{Dout1LEFT}=[out=-90, in=180, looseness=0.85, ->, draw opacity=0, -latex, line width=0.2pt, postaction={draw, opacity=1, -, line width=1pt, shorten >=3pt}, tikzit draw={rgb,255: red,100; green,75; blue,200}]
\tikzstyle{Dout1RIGHT}=[out=-90, in=00, looseness=0.85, ->, draw opacity=0, -latex, line width=0.2pt, postaction={draw, opacity=1, -, line width=1pt, shorten >=3pt}, tikzit draw={rgb,255: red,245; green,200; blue,100}]
\tikzstyle{outLEFT}=[out=180, in=90, looseness=0.85, -, draw opacity=0, -latex, line width=0.2pt, postaction={draw, opacity=1, -, line width=1pt, shorten >=3pt, shorten <=-1pt}, tikzit draw={rgb,255: red,100; green,75; blue,200}]
\tikzstyle{outRIGHT}=[out=00, in=90, looseness=0.85, -, draw opacity=0, -latex, line width=0.2pt, postaction={draw, opacity=1, -, line width=1pt, shorten >=3pt, shorten <=-1pt}, tikzit draw={rgb,255: red,245; green,200; blue,100}]
\tikzstyle{EoutLEFT}=[out=180, in=90, looseness=0.85, -, draw opacity=0, -latex, line width=0.2pt, postaction={draw, opacity=1, -, line width=1pt, shorten >=-1pt, shorten <=-1pt}, shorten >=-4pt, tikzit draw={rgb,255: red,100; green,75; blue,200}]
\tikzstyle{EoutRIGHT}=[out=00, in=90, looseness=0.85, -, draw opacity=0, -latex, line width=0.2pt, postaction={draw, opacity=1, -, line width=1pt, shorten >=-1pt, shorten <=-1pt}, shorten >=-4pt, tikzit draw={rgb,255: red,245; green,200; blue,100}]
\tikzstyle{DoACLEFT}=[out=180, in=0, looseness=0.85, ->, draw opacity=0, -latex, line width=0.2pt, postaction={draw, opacity=1, -, line width=1pt, shorten <=-1pt, shorten >=3pt}, tikzit draw={rgb,255: red,100; green,75; blue,200}]
\tikzstyle{DoACRIGHT}=[out=0, in=180, looseness=0.85, ->, draw opacity=0, -latex, line width=0.2pt, postaction={draw, opacity=1, -, line width=1pt, shorten <=-1pt, shorten >=3pt}, tikzit draw={rgb,255: red,245; green,200; blue,100}]
\tikzstyle{DoAC1LEFT}=[out=180, in=180, looseness=0.85, ->, draw opacity=0, -latex, line width=0.2pt, postaction={draw, opacity=1, -, line width=1pt, shorten <=-1pt, shorten >=3pt}, tikzit draw={rgb,255: red,100; green,75; blue,200}]
\tikzstyle{DoAC1RIGHT}=[out=0, in=0, looseness=0.85, ->, draw opacity=0, -latex, line width=0.2pt, postaction={draw, opacity=1, -, line width=1pt, shorten <=-1pt, shorten >=3pt}, tikzit draw={rgb,255: red,245; green,200; blue,100}]
\tikzstyle{inLEFT}=[out=-90, in=180, looseness=0.85, -, draw opacity=0, latex reversed-, line width=0.2pt, postaction={draw, opacity=1, -, line width=1pt, shorten <=1pt, shorten >=-1pt}, tikzit draw={rgb,255: red,100; green,75; blue,200}]
\tikzstyle{inRIGHT}=[out=-90, in=00, looseness=0.85, -, draw opacity=0, latex reversed-, line width=0.2pt, postaction={draw, opacity=1, -, line width=1pt, shorten <=1pt, shorten >=-1pt}, tikzit draw={rgb,255: red,245; green,200; blue,100}]
\tikzstyle{out1LEFT}=[out=-90, in=180, looseness=0.85, -, draw opacity=0, -latex, line width=0.2pt, postaction={draw, opacity=1, -, line width=1pt, shorten >=3pt}, tikzit draw={rgb,255: red,100; green,75; blue,200}]
\tikzstyle{out1LEFTorange}=[draw=orange, out=-90, in=180, looseness=0.85, -, draw opacity=0, -latex, line width=0.2pt, postaction={draw, opacity=1, -, line width=1pt, shorten >=3pt}, tikzit draw={rgb,255: red,255; green,128; blue,0}]
\tikzstyle{out1RIGHT}=[out=-90, in=00, looseness=0.85, -, draw opacity=0, -latex, line width=0.2pt, postaction={draw, opacity=1, -, line width=1pt, shorten >=3pt}, tikzit draw={rgb,255: red,245; green,200; blue,100}]
\tikzstyle{out1LEFThighlight}=[draw=red, out=-90, in=180, looseness=0.85, -, draw opacity=0, -latex, line width=0.3pt, postaction={draw, opacity=1, -, line width=1.5pt, shorten >=3pt}, tikzit draw={rgb,255: red,255; green,0; blue,0}]
\tikzstyle{out1RIGHThighlight}=[draw=red, out=-90, in=00, looseness=0.85, -, draw opacity=0, -latex, line width=0.3pt, postaction={draw, opacity=1, -, line width=1.5pt, shorten >=3pt}, tikzit draw={rgb,255: red,245; green,10; blue,0}]
\tikzstyle{up1RIGHThighlight}=[draw=red, out=0, in=270, looseness=0.85, -, draw opacity=0, -latex, line width=0.3pt, postaction={draw, opacity=1, -, line width=1.5pt, shorten >=3pt, shorten <=-0.3pt}, tikzit draw={rgb,255: red,225; green,10; blue,20}]
\tikzstyle{simpleORANGE}=[draw={rgb,255: red,100; green,159; blue,255}, ->, -latex]
\tikzstyle{simpleYELLOW}=[draw={rgb,255: red,137; green,252; blue,195}, ->, -latex]
\tikzstyle{simpleVIOLET}=[draw={rgb,255: red,148; green,137; blue,252}, ->, -latex]
\tikzstyle{simpleROSE}=[draw={rgb,255: red,252; green,137; blue,228}, ->, -latex]
\tikzstyle{dottedORANGE}=[draw={rgb,255: red,100; green,159; blue,255}, ->, -latex, dashed]
\tikzstyle{dottedYELLOW}=[draw={rgb,255: red,137; green,252; blue,195}, ->, -latex, dashed]
\tikzstyle{dottedVIOLET}=[draw={rgb,255: red,148; green,137; blue,252}, ->, -latex, dashed]
\tikzstyle{dottedROSE}=[draw={rgb,255: red,252; green,137; blue,228}, ->, -latex, dashed]
\tikzstyle{lsimple}=[draw=black, ->, -latex]
\tikzstyle{ldottedArrow}=[draw=black, ->, -latex, dashed]
\tikzstyle{roundedCornerBlackFill}=[draw=black, line width=1pt, -, fill={rgb,255: red,0; green,0; blue,0}, rounded corners, tikzit draw=black]
\tikzset{
	dot diameter/.store in=\dot@diameter,
	dot diameter=0.25ex,
	dot spacing/.store in=\dot@spacing,
	dot spacing=1ex,
	dots/.style={
		line width=\dot@diameter,
		line cap=round,
		dash pattern=on 0pt off \dot@spacing
	}
}
\tikzset{
	deduceLine/.style = {line width=1.1pt, dots}
}
\declaretheorem[sibling=theorem]{notation}
\def\@textbottom{\vskip \z@ \@plus 2pt}
\let\@texttop\relax
\g@addto@macro{\UrlBreaks}{%
	\do\/\do\a\do\b\do\c\do\d\do\e\do\f%
	\do\g\do\h\do\i\do\j\do\k\do\l\do\m%
	\do\n\do\o\do\p\do\q\do\r\do\s\do\t%
	\do\u\do\v\do\w\do\x\do\y\do\z%
	\do\A\do\B\do\C\do\D\do\E\do\F\do\G%
	\do\H\do\I\do\J\do\K\do\L\do\M\do\N%
	\do\O\do\P\do\Q\do\R\do\S\do\T\do\U%
	\do\V\do\W\do\X\do\Y\do\Z}
\g@addto@macro\bfseries{\boldmath}
\def\thmhead@plain#1#2#3{%
	\thmname{#1}\thmnumber{\@ifnotempty{#1}{ }\@upn{#2}}%
	\thmnote{ {\the\thm@notefont\unboldmath(#3)}}}
\let\thmhead\thmhead@plain
\let\save@mathaccent\mathaccent
\newcommand*\if@single[3]{%
	\setbox0\hbox{${\mathaccent"0362{#1}}^H$}%
	\setbox2\hbox{${\mathaccent"0362{\kern0pt#1}}^H$}%
	\ifdim\ht0=\ht2 #3\else #2\fi
}
\newcommand*\rel@kern[1]{\kern#1\dimexpr\macc@kerna}
\newcommand*\widebar[1]{\@ifnextchar^{{\wide@bar{#1}{0}}}{\wide@bar{#1}{1}}}
\newcommand*\wide@bar[2]{\if@single{#1}{\wide@bar@{#1}{#2}{1}}{\wide@bar@{#1}{#2}{2}}}
\newcommand*\wide@bar@[3]{%
	\begingroup
	\def\mathaccent##1##2{%
		%Enable nesting of accents:
		\let\mathaccent\save@mathaccent
		%If there's more than a single symbol, use the first character instead (see below):
		\if#32 \let\macc@nucleus\first@char \fi
		%Determine the italic correction:
		\setbox\z@\hbox{$\macc@style{\macc@nucleus}_{}$}%
		\setbox\tw@\hbox{$\macc@style{\macc@nucleus}{}_{}$}%
		\dimen@\wd\tw@
		\advance\dimen@-\wd\z@
		%Now \dimen@ is the italic correction of the symbol.
		\divide\dimen@ 3
		\@tempdima\wd\tw@
		\advance\@tempdima-\scriptspace
		%Now \@tempdima is the width of the symbol.
		\divide\@tempdima 10
		\advance\dimen@-\@tempdima
		%Now \dimen@ = (italic correction / 3) - (Breite / 10)
		\ifdim\dimen@>\z@ \dimen@0pt\fi
		%The bar will be shortened in the case \dimen@<0 !
		\rel@kern{0.6}\kern-\dimen@
		\if#31
		\overline{\rel@kern{-0.6}\kern\dimen@\macc@nucleus\rel@kern{0.4}\kern\dimen@}%
		\advance\dimen@0.4\dimexpr\macc@kerna
		%Place the combined final kern (-\dimen@) if it is >0 or if a superscript follows:
		\let\final@kern#2%
		\ifdim\dimen@<\z@ \let\final@kern1\fi
		\if\final@kern1 \kern-\dimen@\fi
		\else
		\overline{\rel@kern{-0.6}\kern\dimen@#1}%
		\fi
	}%
	\macc@depth\@ne
	\let\math@bgroup\@empty \let\math@egroup\macc@set@skewchar
	\mathsurround\z@ \frozen@everymath{\mathgroup\macc@group\relax}%
	\macc@set@skewchar\relax
	\let\mathaccentV\macc@nested@a
	%The following initialises \macc@kerna and calls \mathaccent:
	\if#31
	\macc@nested@a\relax111{#1}%
	\else
	%If the argument consists of more than one symbol, and if the first token is
	%a letter, use that letter for the computations:
	\def\gobble@till@marker##1\endmarker{}%
	\futurelet\first@char\gobble@till@marker#1\endmarker
	\ifcat\noexpand\first@char A\else
	\def\first@char{}%
	\fi
	\macc@nested@a\relax111{\first@char}%
	\fi
	\endgroup
}
\DeclareFontFamily{U}{mathx}{}
\DeclareFontShape{U}{mathx}{m}{n}{<-> mathx10}{}
\DeclareSymbolFont{mathx}{U}{mathx}{m}{n}
\DeclareMathAccent{\widehat}{0}{mathx}{"70}
\DeclareMathAccent{\widecheck}{0}{mathx}{"71}
\renewcommand{\epsilon}{\varepsilon}
\theoremstyle{definition}
\newtheorem*{outline}{Outline}
\newtheorem*{notations}{Notations}
\newtheorem*{convention}{Convention}
\newtheorem*{fw}{Future work}
\begin{document}
	
 \maketitle
 
 % Proof theory
\newcommand{\cutelim}{cut elimination\xspace}
\newcommand{\Cutelim}{Cut elimination\xspace}
\newcommand{\cf}{cut-free\xspace}
\newcommand{\Cf}{Cut-free\xspace}
\newcommand{\one}{\boldsymbol{1}}

% Fragments
\newcommand{\linl}{\mathit{LL}}
\newcommand{\mell}{\mathit{MELL}}
\newcommand{\mllu}{\mathit{MLL}}
\newcommand{\mll}{\mllu^-}
\newcommand{\comll}{\mathit{COMLL}}
\newcommand{\imell}{\mathit{IMELL}}
\newcommand{\imll}{\mathit{IMLL}}
\newcommand{\icomll}{\mathit{ICOMLL}}
\newcommand{\wnten}{(\neg ? \otimes)}
\newcommand{\wntenll}{\wnten\linl}
\newcommand{\bten}{(\neg \bot \otimes)}
\newcommand{\btenll}{\bten\linl}
\newcommand{\btenlls}{\btenll^*}
\newcommand{\llpol}{\linl_\mathit{pol}}
\newcommand{\lamlinl}{\lambda\linl}

% Sequent calculus
\newcommand{\seqc}{sequent calculus\xspace}
\newcommand{\Seqc}{Sequent calculus\xspace}
\newcommand{\mix}{$\mathit{mix}$\xspace}
\newcommand{\axsc}{$\mathit{ax}$\xspace}
\newcommand{\cutsc}{$\mathit{cut}$\xspace}
\newcommand{\drsc}{$?\mathit{d}$\xspace}
\newcommand{\wksc}{$?\mathit{w}$\xspace}
\newcommand{\ctsc}{$?\mathit{c}$\xspace}
\newcommand{\exsc}{$\mathit{ex}$\xspace}
\newcommand{\equivseqc}{\sim}
\newcommand{\ol}{$\out$-light\xspace}

% Proof-nets
\newcommand{\ps}{proof-structure\xspace}
\newcommand{\Ps}{Proof-structure\xspace}
\newcommand{\pss}{proof-structures\xspace}
\newcommand{\Pss}{Proof-structures\xspace}
\newcommand{\pn}{proof-net\xspace}
\newcommand{\Pn}{Proof-net\xspace}
\newcommand{\pns}{proof-nets\xspace}
\newcommand{\Pns}{Proof-nets\xspace}
\newcommand{\cc}{\mathsf{cc}}
\newcommand{\w}{\mathsf{w}}
\newcommand{\AC}{\mathsf{AC}}
\newcommand{\C}{\mathsf{C}}
\newcommand{\Cw}{\C_{\sharp \w}}
\newcommand{\Prp}{\mathsf{P}}
\newcommand{\ACC}{\AC\C}
\newcommand{\ACCw}{\AC\Cw}
\newcommand{\sg}{switching graph\xspace}
\newcommand{\Sg}{Switching graph\xspace}
\newcommand{\sgs}{switching graphs\xspace}
\newcommand{\Sgs}{Switching graphs\xspace}
\newcommand{\wten}{(\neg\w\otimes)}
\newcommand{\llg}{linear logical graph\xspace}
\newcommand{\Llg}{Linear logical graph\xspace}
\newcommand{\llgs}{linear logical graphs\xspace}
\newcommand{\Llgs}{Linear logical graphs\xspace}
\newcommand{\seq}{sequential\xspace}
\newcommand{\Seq}{Sequential\xspace}
\newcommand{\er}{erasing\xspace}
\newcommand{\Er}{Erasing\xspace}
\newcommand{\ve}{\mathsf{n}}
\newcommand{\ar}{\mathsf{a}}
\newcommand{\dpt}{\mathsf{d}}
\newcommand{\des}{\circ}
\newcommand{\cwf}{\C_\w^\forall}
\newcommand{\axpn}{\mathtt{ax}}
\newcommand{\cutpn}{\mathtt{cut}}
\newcommand{\onepn}{\mathtt{1}}
\newcommand{\drpn}{?\mathtt{d}}
\newcommand{\wkpn}{?\mathtt{w}}
\newcommand{\ctpn}{?\mathtt{c}}
\newcommand{\dpath}{descent path\xspace}
\newcommand{\Dpath}{Descent path\xspace}
\newcommand{\dpaths}{descent paths\xspace}
\newcommand{\Dpaths}{Descent paths\xspace}
\newcommand{\axcut}{axiom cut\xspace}
\newcommand{\Axcut}{Axiom cut\xspace}
\newcommand{\axcuts}{axiom cuts\xspace}
\newcommand{\Axcuts}{Axiom cuts\xspace}
\newcommand{\ucut}{unit cut\xspace}
\newcommand{\Ucut}{Unit cut\xspace}
\newcommand{\ucuts}{unit cuts\xspace}
\newcommand{\Ucuts}{Unit cuts\xspace}
\newcommand{\mcut}{multiplicative cut\xspace}
\newcommand{\Mcut}{Multiplicative cut\xspace}
\newcommand{\mcuts}{multiplicative cuts\xspace}
\newcommand{\Mcuts}{Multiplicative cuts\xspace}
\newcommand{\spath}{switching path\xspace}
\newcommand{\Spath}{Switching path\xspace}
\newcommand{\spaths}{switching paths\xspace}
\newcommand{\Spaths}{Switching paths\xspace}
\newcommand{\deseq}{\circ}
\newcommand{\deseqjump}{\Mapsto}
\newcommand{\tocut}{\to}
\newcommand{\ws}{$\w$-switching\xspace}
\newcommand{\wsp}{\ws path\xspace}
\newcommand{\wsps}{\ws paths\xspace}
\newcommand{\wc}{$\w$-compatible\xspace}
\newcommand{\wcs}{\wc switching\xspace}
\newcommand{\wcss}{\wc switchings\xspace}
\newcommand{\thr}{erasing thread\xspace}
\newcommand{\Cwf}{\mathsf{C}_\w^\forall}
\newcommand{\jf}{jump-free\xspace}
\newcommand{\Jf}{Jump-free\xspace}
\newcommand{\jt}{jump-total\xspace}
\newcommand{\Jt}{Jump-total\xspace}
\newcommand{\jc}{jump-correct\xspace}
\newcommand{\Jc}{Jump-correct\xspace}
\newcommand{\ujf}[1]{\langle #1 \rangle}
\newcommand{\wparr}{\w_\parr}
\newcommand{\we}{\w_{\mathsf{e}}}
\newcommand{\inp}{\mathsf{i}}
\newcommand{\wi}{\w_{\mkern2mu\inp}}
\newcommand{\can}[1]{\ujf{#1}_\w}
\newcommand{\canext}[2]{\ujf{#1}_\w^#2}
\newcommand{\rewpns}{\overset{1}{\approx}}
\newcommand{\equivpns}{\approx}
\newcommand{\out}{\mathsf{o}}
\newcommand{\rob}{{\onepn/\bot}}
\newcommand{\rtp}{{\otimes/\parr}}
\newcommand{\intu}{intuitionistic\xspace}
\newcommand{\Intu}{Intuitionistic\xspace}
\newcommand{\is}{\intu switching\xspace}
\newcommand{\iss}{\intu switchings\xspace}
\newcommand{\Is}{\Intu switching\xspace}
\newcommand{\Iss}{\Intu switchings\xspace}
\newcommand{\nof}[1]{n_{#1}}
\newcommand{\ta}[1]{#1_\otimes}
\newcommand{\up}[3]{\theta_{#1,#2}^#3}
\newcommand{\less}[1]{<_#1}
\newcommand{\pa}{\mathsf{p}}

% Others
\newcommand{\lc}{$\lambda$-calculus\xspace}
\newcommand{\dom}{\mathsf{dom}}

% Fragments
\newcommand{\vimell}{$\mkern-2mu\mathit{VIMELL}$\xspace}
 
 \begin{abstract}
  We investigate a property that extends the Danos-Regnier correctness criterion for linear logic \pss. The property applies to the correctness graphs of a \ps: it states that any such graph is acyclic and the number of its connected components is exactly one more than the number of nodes bottom or weakening. This is known to be necessary but not sufficient in multiplicative exponential linear logic to recover a \seqc proof from a \ps. We present a geometric condition on untyped \pss allowing us to turn this necessary property into a sufficient one: we can thus isolate fragments of linear logic for which this property is indeed a correctness criterion. In a suitable fragment of multiplicative linear logic with units, the criterion yields a characterization of the equivalence induced by permutations of rules in \seqc. In intuitionistic linear logic, the property is equivalent to the familiar requirement of~having~exactly~one output conclusion, and it is sufficient for sequentialization in the axiom-free setting.
 \end{abstract}
 
 \section{Introduction}
  % %TODO
% Note:
% \begin{itemize}
%  \item
%   Bisogna giustificare la rilevanza di $(\neg \bot \otimes)$$\linl$, e dire che in generale, in $\mllu$, il quoziente indotto dalle reti è più grande di quello indotto dalle permutazioni di regole. Di conseguenza, è facile decidere l'equivalenza di due dimostrazioni in calcolo dei sequenti in $(\neg \bot \otimes)$$\linl$, contrariamente a quanto accade in $\mllu$;
%  \item
%   Bisogna parlare della non stabilità per eliminazione del taglio;
%  \item
%   Come presentare $\ACCw$: abbiamo trovato una condizione naturale $\wten$~sotto la quale è sufficiente. Successivamente, abbiamo indagato $\ACCw$ indipendentemente da $\wten$. Il primo passo è stato $\imell$, in cui, per le strutture soddisfacenti $\AC$, $\Cw$ equivale all'unicità della conclusione output.
% \end{itemize}
% 
% ------------------------------------

 It is widely acknowledged that one of the main contributions of Linear Logic ($\linl$~\cite{girard1987linear}) to the field of proof theory is the shift from a term-like or tree-like representation of proofs to a more general \emph{graph}-like one. Graphs representing proofs in $\linl$ are usually called \emph{\pns}: their syntax is richer and more expressive than the one of the \lc, or the one of \seqc, or natural deduction. Contrary to the mentioned formalisms, \pns are special inhabitants of the wider land of \emph{\pss}. A \ps is any ``graph'' that can be built in the language of \pns, and it need not represent a proof. \Pss are well-behaved for performing computations: right from the start, it appeared clearly that the procedure of \cutelim can be applied to \pss, and that they can be interpreted in denotational models thanks to the notion of experiment (\cite{girard1987linear}).

 Two of the main research lines in this area, to which nowadays one often refers as ``the theory of \pns'', are the study of their rewriting properties, like confluence and the normalization of the \cutelim procedure (\cite{girard1987linear,danos1990logique,regnier1992lambda,TdF2000,TdFAdditives03,Pag-TdF2010,GueManTdFVaux,accattoli:LIPIcs.RTA.2013.39}, ...) and the study of \emph{correctness criteria} for \pss: it is obviously an interesting theoretical question to find abstract properties allowing us to~isolate,~among~\pss,~exactly~those~that~are indeed proofs. The present work is a contribution~to~this~second~line~of~research.
 % in the theory of \pns.

 %After the inception in the landscape of this new representation of proofs, 
 A strong divergence appeared: on one hand the ideal (but restricted) purely multiplicative fragment $\mll$, with many correctness criteria  (\cite{girard1987linear,danos1989structure,rbpn,pnehrhard,Metayer1994,Bechet98}, ...), and on the other hand the wild ``rest of the world'' for which no really convincing correctness criterion has ever been found. Some rare notable exceptions are the multiplicative fragment with quantifiers (\cite{quantif2}) and the classical polarized fragment (\cite{laurent2002etude}); we can also quote the multiplicative~additive~fragment (\cite{pnJYGadditives,mallpnlong}), provided however one is not too picky.

 The property $\ACC$ (a.k.a.~Danos-Regnier criterion) states that every ``correctness graph'' of the \ps (called \sg in this paper, \Cref{def:switching}) is acyclic and connected: it is one of the most famous correctness criteria for $\mll$ (\cite{danos1989structure}). In the multiplicative exponential fragment $\mell$ (the fragment of $\linl$ in which it is natural to represent and study the majority of logical and computational phenomena), one often considers \pss satisfying the property $\AC$, obtained from $\ACC$ by dropping connectivity. Like $\ACC$ for \pss of $\mll$, $\AC$ is stable with respect to \cutelim of \pss of $\mell$; but $\AC$ allows us to isolate, among \pss of $\mell$, the ones that are proofs only in an extended sense: essentially, the \seqc proofs in $\mell$ with (some variants of) the \mix rule (\cite{Fleury1994TheMR,llhandbook}). The several attempts to add some other condition(s) to $\AC$ in order to obtain a correctness criterion for $\mell$ (without \mix) were never really successful, and for very good reasons: a straightforward consequence of the $\mathit{NP}$-hardness of provability in the ``constant-only multiplicative fragment'' $\comll$ proven in~\cite{LincolnW94} is that no feasible correctness criterion can ever be found as soon as one adds the multiplicative units to $\mll$ (the fragment thus obtained is denoted in this paper by $\mllu$, \Cref{sec:ps}). Indeed, a formula $A$ of $\comll$ yields a unique \ps ($A$'s tree) and were we able to ``easily''~determine~whether~or~not~the \ps is~correct~we~could~``easily''~decide~the~provability of $A$.
 
 % The property $\ACC$ (also called Danos-Regnier criterion) states that every ``correctness graph'' of the structure is acyclic and connected: it is one of the most famous correctness criteria for $\mll$ (\cite{danos1989structure}). In the multiplicative and exponential fragment $\mell$ (the fragment of $\linl$ in which it is natural to represent and study the majority of logical and computational phenomena), one often considers \pss satisfying the property $\AC$, obtained from $\ACC$ by dropping connectivity. Like $\ACC$ for \pss of $\mll$, $\AC$ is stable wrt \cutelim of \pss of $\mell$; but $\AC$ allows us to isolate, among \pss of $\mell$, the ones that are proofs only in an extended sense: essentially, the proofs of $\mell$ \seqc with (some variants) of the \mix rule (for example~\cite{Fleury1994TheMR,llhandbook}). The several attempts to add some other condition(s) to $\AC$ in order to obtain a correctness criterion for $\mell$ (without \mix) were never really successful, and for very good reasons: it has been eventually proven in~\cite{heijltjes2014no} that no~feasible~correctness~criterion~can~ever~be~found~as~soon~as~one~adds~just the multiplicative units to $\mll$ (this fragment is denoted by $\mllu$). 

 From a conceptual point of view, the inception of \pns and correctness criteria in proof-theory is a real breakthrough. Most notably, the Danos-Regnier correctness criterion relates a purely logical property (to be a logical proof) and a purely geometric one (\sgs are trees). Since connectivity is undeniably a simple and solid geometric property, one can try to reverse the point of view and address the question of the logical meaning of connectivity: that's our main motivation for this work. For example, in~\cite{guerrieri_et_al:LIPIcs.FSCD.2016.20} a remarkable property (that does not hold for general \pss of $\mell$) is proven for \emph{connected} \pss of $\mell$: the element of order $2$ of the Taylor expansion of a \ps is enough to entirely rebuild the \ps. In the same spirit, in this paper we revisit an old property which is in between $\AC$ and $\ACC$. Connectivity fails in $\mell$ for the presence of the weakening rule of logic and of the rule for the multiplicative constant bottom: it is then natural to relate the number of connected components ($\cc$) and the nodes weakening or bottom ($\w$) through the equality $\smash{\cc = \w + 1}$. We thus obtain the property denoted by $\ACCw$ in this paper: to our knowledge it was first considered in~\cite{regnier1992lambda}, where the author shows that it is a necessary property for a \ps to be a real proof, but it is not sufficient due to the counterexample in \Cref{subfig:counterexample-mllu}. It is known in the community that $\ACCw$ is stable with respect to \cutelim of \pss of $\mell$ (\Cref{thm:stability-cut-elimination}: in the absence of a precise reference, we give a full proof in \Cref{subsec:cut-elimination-steps}). It is common practice (\cite{TdF2000,lamarcheIMELL,llhandbook}, ...) to add ``jumps'' to \pss in order to recover the property $\ACC$, and well-known are also the drawbacks of such a technique (there is no canonical position for jumps in general, their behaviour through \cutelim is a mess, ...). Thus, following the idea that simple and solid geometric properties are likely to have a logical meaning, we wondered whether it was possible to identify interesting~subsets~of~\pss~of~$\mell$~for~which~$\ACCw$~is~a~correctness criterion.

 We found a geometric condition on \pss of $\mell$, which we called $\wten$,\footnote{The notation $\wten$ suggests that no premise of a $\otimes$ node is the conclusion of a weakening (\er~node).} such that every $\wten$-\ps satisfies $\ACCw$ if and only if it is a proof: this is our \Cref{thm:untyped-sequentiality}, which is obtained in an untyped framework. Indeed, in recent years it has become quite common in the theory of \pns to drop types when they are not needed: correctness criteria allow us to present \pss in a sequential way (\Cref{sec:ps}), which, in presence of types, yields the traditional ``sequentialization theorems'' describing how a proof of \seqc can be built from a \ps satisfying some correctness criterion; but the sequentialization process itself has nothing to do with types (this was known, but maybe clearly expressed for the first time in~\cite{llhandbook}). The condition $\wten$ should be thought as a \emph{local} condition in presence of which $\ACCw$ is enough to sequentialize a \ps of $\mell$. One easily checks that $\wten$ is not stable with respect to \cutelim (contrary to $\ACCw$), and we use types in \Cref{sec:btenll,sec:mell} to present some interesting fragments of $\mell$ the \cf \pss of which are $\wten$-\pss, so that our theorem applies. We also argue (\Cref{prop:sequentiality-cut-elimination}) why it is interesting to have a correctness criterion for \cf \pss (by the way, this is not new in the theory of \pns: for example~\cite{DBLP:journals/apal/AbrusciR99} in the non-commutative case and~\cite{LTdF04} in presence of the additives). To simplify the presentation, we decided to give the definitions and to state and prove all the results without exponentials, that is in the fragment $\mllu$: any reader acquainted with \pns knows that there is not the slightest technical difficulty in extending the results thus obtained to $\mell$, as we do in \Cref{sec:mell}. In \Cref{sec:btenll} we refine the sequentialization process of the proof of \Cref{thm:untyped-sequentiality} by putting jumps in a ``canonical'' position (\Cref{thm:untyped-sequentialityBottomTens}) and we prove that the restriction to $\wten$-\pss has an interesting consequence for the fragment $\mllu$: we know from~\cite{heijltjes2014no} that deciding the equivalence by permutations of rules of two \cf \seqc proofs of $\mllu$ is $\mathit{PSPACE}$-complete. On the contrary, we prove that, for the $\wten$ fragment of $\mllu$,\footnote{A very similar fragment had already been considered in Appendix B of~\cite{regnier1992lambda}: a sequentialization result is proven with a technique different from the one used in this paper (\Cref{remark:AppendixRegnier} for more details).} this decision problem is much easier (\Cref{cor:equivalenceFeasible}). Since our interest in $\ACCw$ has no reason to be limited to $\wten$-\pss, we start exploring, in \Cref{sec:imell}, the intuitionistic fragment $\imell$ (where $\wten$ does not hold). We expect that the correctness criterion~for~$\imell$~of~\cite{lamarcheIMELL}~can~be~expressed~in~terms~of~connected~components~(in the~style~of~$\ACCw$).
 
 \begin{outline}
  In \Cref{sec:ps}, we recall the basic definitions of formula and \seqc proof, we define \pss and describe the desequentialization process (\Cref{subsec:desequentialization-main}) that builds a \ps from a \seqc proof. \Cref{def:sequentialps} generalizes the notion of sequentializable \ps to the untyped setting: intuitively, a \seq \ps represents a \seqc proof with no formula. We then express for \pss the properties $\AC$, $\ACC$ and $\ACCw$: since several correctness criteria are considered in the paper, we stick to the term ``\ps'' (satisfying some property) rather than using the expression ``\pn'', which usually refers to \pss satisfying a particular correctness criterion or to \pss that are the desequentialization of some \seqc proof. 
  %The \cutelim procedure for \pss and its properties conclude the section, in particular the stability of $\ACCw$ (\Cref{thm:stability-cut-elimination}). 
  In \Cref{sec:untyped-sequentiality-theorem}, we introduce the condition $\wten$, and we give a characterization of it on \pss in terms of a property of graphs (\Cref{lemma:cw-forall-tensor}): this is the key tool to prove \Cref{thm:untyped-sequentiality}. In \Cref{sec:btenll}, we refine the results of \Cref{sec:untyped-sequentiality-theorem} and we characterize, for the $\wten$ fragment of $\mllu$, the equivalence induced by permutations of rules in \seqc as the equality of \pss satisfying $\ACCw$ (\Cref{cor:equivalenceFeasible}).
  % PENSO CHE SI POSSA TOGLIERE
  %: a \ps satisfying $\ACCw$ contains no jump and the equality is between \pss without jumps, 
  %(jumps are precisely what the property $\ACCw$ allows to avoid) but in this section we use the procedure of rewiring jumps (\cite{heijltjes2014no}) to prove this characterization. 
  In \Cref{sec:mell}, we present the results obtained in the more general framework of $\mell$, thus recovering the polarized fragment (\Cref{cor:sequentialization-polarized}). \Cref{sec:imell} is a first step to study $\ACCw$ beyond $\wten$-\pss: by studying paths in \pss of $\imell$ we prove that, for a \ps of $\imell$ satisfying $\AC$, having a unique output conclusion is equivalent to satisfying $\ACCw$~(\Cref{prop:accw-output-conclusion}). We also show that, in the intuitionistic fragment of $\comll$ (\cite{LincolnW94}), denoted~by~$\icomll$,~one~can~prove,~like~we~do~in~\Cref{sec:btenll},~a~sequentialization~theorem by adding jumps in a ``canonical'' position (\Cref{thm:sequentialization-icomll}).
 \end{outline}

 \begin{notations}
  For ease of reference, we make a list of common notations used in the paper. \medskip
  
  \begin{minipage}{0.4675\textwidth}
   \begin{itemize}
    \item
     $\ve(G)$, $\w(G)$, $\ar(G)$, $\cc(G)$ (\Cref{def:nodes});
    \item
     $\prec$ (\Cref{def:erasing-order});
    \item
     $R^\varphi$ (\Cref{def:switching});
    \item
     $\sharp S$ (\Cref{not:cardinality});
    \item
     $\pi^\des$ (\Cref{subsec:desequentialization-main} and \Cref{fig:desequentialization});
   \end{itemize}
  \end{minipage}
  \hfill
  \begin{minipage}{0.4675\textwidth}
   \begin{itemize}
    \item
     $R \models \AC$, $\C$, $\Cw$, $\ACCw$
          
     (\Cref{def:acyclicity-connectivity,def:acyclicity-connectivity-global});
    \item
     $R \models \cwf$ (\Cref{def:cw-forall});
    \item
     $\wten$-ps (\Cref{def:wten});
    \item
     $J_R$, $\ujf{R}$, $R_n^m$, $\pi \deseqjump R$ (\Cref{sec:btenll,sec:imell}).
   \end{itemize}
  \end{minipage}
  \hfill\null
 \end{notations}
 
 \section{Formulae, \texorpdfstring{\seqc}{sequent calculus} and \texorpdfstring{\pss}{proof-structures}}
  \label{sec:ps}
  \subsection{Definitions}

 %If $R$ is a binary relation on a set $X$, we write $R^+$ for its transitive closure, that is the smallest transitive relation on $X$ containing $R$.
 %For every set $S$, a \emph{partition} $P$ of $S$ is a set of subsets of $S$ such that any two elements~of~$P$ are disjoint and the union of all the elements of $P$ is $S$.
 We consider countably infinite \emph{atomic formulae}, which we denote by $X$.
 %We consider a partition~of~$\mathcal{X}$:
 %\[
 % \mathit{NOT} \vcentcolon = \{\{X,Y\} \colon X, Y \in \mathcal{X}, X \neq Y\}
 %\]
 The~set~of~formulae~of \emph{multiplicative linear logic with units} ($\mllu$) is defined by the following grammar:
 \[
  A \Coloneqq X \mid \one \mid \bot \mid A \otimes A \mid A \parr A
 \]
 $\mll$ is obtained from $\mllu$ by forgetting $\one$ and $\bot$. \emph{Linear negation} on atomic formulae~is~an involution $(\cdot)^\bot$ without fixed points, and is extended to~formulae~of~$\mllu$~inductively~as~follows:
 \[
  \one^\bot \coloneq \bot \qquad \bot^\bot \coloneq \one \qquad (A \otimes B)^\bot \coloneq A^\bot \parr B^\bot \qquad (A \parr B)^\bot \coloneq A^\bot \otimes B^\bot
 \]
% Sequents ($\Gamma,\Delta,\Sigma, \dots$) are finite sequences of formulae of $\mllu$. The rules are in \Cref{fig:sequent-calculus-rules}.
 %A sequent is a finite sequence of formulae of $\mell$. Sequents range over $\Gamma, \Delta, \Sigma, \dots$ The \seqc rules of $\mell$ are:

 \begin{definition}
  A \emph{fragment $\mathcal{F}$ of $\mllu$} is a set of formulae of $\mllu$ which is closed under sub-formulae, meaning that, for every $F \in \mathcal{F}$ and for every sub-formula $G$ of $F$, $G \in \mathcal{F}$. If $\mathcal{F}$ is a fragment of $\mllu$, we say that $\mathcal{F}$ is \emph{closed under linear negation} when, for every $F \in \mathcal{F}$, $F^\perp \in \mathcal{F}$. If $\mathcal{F}$ and $\mathcal{G}$ are fragments of $\mllu$, we say that $\mathcal{F}$ is a \emph{fragment of $\mathcal{G}$} when $\mathcal{F} \subseteq \mathcal{G}$.
 \end{definition}

 \begin{definition}
  A \emph{sequent} $\Gamma$ of a fragment $\mathcal{F}$ of $\mllu$ is a finite sequence of formulae of $\mathcal{F}$. A \emph{\seqc proof} in $\mathcal{F}$ with conclusion $\Gamma$ is a finite tree of sequents~of~$\mathcal{F}$~obtained~by~using the rules of Fig.~\ref{fig:sequent-calculus-rules}. $\Gamma$ is \emph{provable in $\mathcal{F}$} if there is a \seqc~proof~in~$\mathcal{F}$~with~conclusion~$\Gamma$.
  %A \emph{sequent} $\Gamma$ of a fragment $\mathcal{F}$ of $\mllu$ is a finite sequence of formulae of $\mathcal{F}$. A \emph{\seqc proof} in $\mathcal{F}$ with conclusion $\Gamma$ is a finite tree of sequents~of~$\mathcal{F}$~obtained~by~using the rules of \Cref{fig:sequent-calculus-rules}. We say that $\Gamma$ is \emph{provable in $\mathcal{F}$} if there exists a \seqc~proof~in~$\mathcal{F}$ with conclusion $\Gamma$.
  
  \begin{figure}
   \centering
   \AxiomC{\infcstrut}
   \RightLabel{\scriptsize\axsc}
   \UnaryInfC{\infcstrut$\vdash A,A^\perp$}
   \DisplayProof
   \!\quad
   \AxiomC{\infcstrut$\vdash \Gamma, A$}
   \AxiomC{\infcstrut$\vdash A^\perp, \Delta$}
   \RightLabel{\scriptsize\cutsc}
   \BinaryInfC{\infcstrut$\vdash \Gamma, \Delta$}
   \DisplayProof
   \!\quad
   \AxiomC{\infcstrut$\vdash \Gamma, A, B, \Delta$}
   \RightLabel{\scriptsize\exsc}
   \UnaryInfC{\infcstrut$\vdash \Gamma, B, A, \Delta$}
   \DisplayProof \\[1.5em]
   \AxiomC{\infcstrut$\vdash \Gamma, A$}
   \AxiomC{\infcstrut$\vdash B, \Delta$}
   \RightLabel{\scriptsize$\otimes$}
   \BinaryInfC{\infcstrut$\vdash \Gamma, A \otimes B, \Delta$}
   \DisplayProof
   \!\quad
   \AxiomC{\infcstrut}
   \RightLabel{\scriptsize$\one$}
   \UnaryInfC{\infcstrut$\vdash \boldsymbol{1}$}
   \DisplayProof
   \!\quad
   \AxiomC{\infcstrut$\vdash \Gamma, A, B$}
   \RightLabel{\scriptsize$\parr$}
   \UnaryInfC{\infcstrut$\vdash \Gamma, A \parr B$}
   \DisplayProof
   \!\quad
   \AxiomC{\infcstrut$\vdash \Gamma$}
   \RightLabel{\scriptsize$\bot$}
   \UnaryInfC{\infcstrut$\vdash \Gamma, \bot$}
   \DisplayProof
   \caption{\label{fig:sequent-calculus-rules}\Seqc rules.}
  \end{figure}
 \end{definition}

% \begin{remark}
    %  A \seqc proof $\pi$ in $\mell$ with conclusion a sequent $\Gamma$ of $\mathcal{F}$ is in $\mathcal{F}$ if and only if, for every instance of the \cutsc rule in $\pi$, its cut formulae are formulae of $\mathcal{F}$.
    % \end{remark}

% We now define two well-known fragments of $\mell$ which are closed under linear negation.
% 
%% \begin{definition}
    %%  \label{def:mll}
    %%  \emph{Multiplicative linear logic with units} is the fragment $\mllu$ of $\mell$~given~by $\smash{A \Coloneqq X \mid \one \mid \bot \mid A \otimes A \mid A \parr A}$. $\mll$ is obtained from $\mllu$ by forgetting the units $\one$~and~$\bot$.
    %% \end{definition}

% \begin{definition}
    %  \emph{Multiplicative linear logic} is the fragment $\mathit{MLL}$ of $\mathit{MLL}_u$ given by:
    %  \[
    %   A \Coloneqq X \mid A \otimes A \mid A \parr A
    %  \]
    % \end{definition}
% 
% \begin{remark}
    %  The fragments $\mathit{MLL}$ and $\mathit{MLL}_u$ are closed under linear negation.
    % \end{remark}

 In our setting, a directed graph $G$ is an ordered triple consisting of a set $N$ of \emph{nodes}, a set $A$, disjoint from $N$, of \emph{arcs}, and an \emph{incidence} function $I$ which associates with every arc of $G$ an ordered pair of distinct nodes of $G$. The \emph{empty graph} is the graph without nodes. If $a$ is an arc of $G$ and $I(a) = (m, n)$, then $m$ and $n$ are called the \emph{ends} of $a$. More specifically, $m$ is called the \emph{tail} of $a$, and $n$ is the \emph{head} of $a$. We~also~say~that~$a$~is~a~\emph{conclusion}~of~$m$~and~a~\emph{premise}~of~$n$.
 
% The \emph{underlying undirected graph} of $G$, written $U(G)$, is the undirected graph obtained from $G$ by keeping the same set of nodes and by replacing each arc by an edge with~the~same~ends.
 
 A \emph{path} of $G$ is a finite sequence $\gamma = n_0 a_1 n_1 a_2 \dots n_{k-1} a_k n_k$ of alternating nodes and arcs of $G$ such that $a_1 \neq a_k$ (when $k \geq 2$), for every $i \in \{1, \dots, k\}$ the ends of $a_i$ are $n_{i-1}$ and $n_i$, and, for every $i, j \in \{0, \dots, k\}$ with $i \neq j$, if $n_i = n_j$, then $\{i, j\} = \{0, k\}$. We say that $\gamma$ is a path \emph{from $n_0$ to $n_k$} and that the non-negative integer $k$ is the \emph{length} of $\gamma$. For every $i \in \{1, \dots, k\}$, we may write $a_i \in \gamma$ ($\gamma$ \emph{contains} $a_i$). We say that $\gamma$ is: \emph{directed} if, for every $i \in \{1, \dots, k\}$, the tail of $a_i$ is $n_{i-1}$ and the head of $a_i$ is $n_i$; \emph{empty} if $k = 0$; a \emph{cycle} if $k > 0$ and $n_0 = n_k$. We say that $G$ is: \emph{acyclic} (resp.~a \emph{dag}) if $G$ has no cycle (resp.~directed cycle); \emph{connected} if, for every $m, n$ nodes of $G$, there exists a path of $G$ from $m$ to $n$. Every~graph~$G$~can~be~expressed~as~a disjoint~union~of~non-empty~connected~graphs,~called~the~\emph{connected~components} of $G$.\footnote{The empty graph is connected, and it is the only graph with zero connected components. Therefore,~it~is~a \ps $R_\epsilon$ (see~\Cref{def:proof-structure}) such that neither $R_\epsilon \models \C$ or $R_\epsilon \models \Cw$ holds (see~\Cref{def:acyclicity-connectivity-global}).}
 
 In figures depicting graphs, every arc can be represented with its tail above its head. We then speak of \emph{\dpaths} rather than directed paths because, with this convention, every directed path runs downward in the plane.
 
% \begin{figure}
%  \centering
%  \begin{subfigure}{0.090625\textwidth}
%   \centering
%%   \scalebox{\figscale}{
%   	\input{Figures/axiom.tikz}
%%   }
%   \caption{$l \hspace{-0.3ex} = \hspace{-0.3ex} \axpn$}
%  \end{subfigure}
%  \begin{subfigure}{0.1\textwidth}
%   \centering
%%   \scalebox{\figscale}{
%   	\input{Figures/cut.tikz}
%%   }
%   \caption{$l \hspace{-0.3ex} = \hspace{-0.3ex} \cutpn$}
%  \end{subfigure}
%  \begin{subfigure}{0.1625\textwidth}
%   \centering
%%   \scalebox{\figscale}{
%   	\input{Figures/one-gen.tikz}
%%   }
%   \caption{$l \in \{\onepn, \bot, \wkpn\}$}
%  \end{subfigure}
%  \begin{subfigure}{0.16875\textwidth}
%   \centering
%%   \scalebox{\figscale}{
%   	\input{Figures/tensor-gen.tikz}
%%   }
%   \caption{$l \in \{\otimes, \parr, \ctpn\}$}
%  \end{subfigure}
%  \begin{subfigure}{0.084375\textwidth}
%   \centering
%%   \scalebox{\figscale}{
%   	\input{Figures/box.tikz}
%%   }
%   \caption{$l \hspace{-0.3ex} = \hspace{-0.3ex} \bpn$}
%  \end{subfigure}
%  \begin{subfigure}{0.20625\textwidth}
%   \centering
%%   \scalebox{\figscale}{
%   	\input{Figures/dereliction-gen.tikz}
%%   }
%   \caption{$l \in \{\parr, !, \auxpn, \drpn, \ctpn\}$}
%  \end{subfigure}
%  \begin{subfigure}{0.14\textwidth}
%   \centering
%%   \scalebox{\figscale}{
%   	\input{Figures/conclusion-gen.tikz}
%%   }
%   \caption{$l \in \{!, \auxpn, \bullet\}$}
%  \end{subfigure}
%  \caption{\label{fig:links}Local constraints for a node labeled by $l$ and its incident arcs.}
% \end{figure}
 
 \begin{definition}
  \label{def:graph}
  A \emph{\llg} is a dag $G$ whose nodes are labelled by exactly~one~symbol among $\axpn$, $\cutpn$, $\onepn$, $\bot$, $\otimes$, $\parr$, $\bullet$, such that:
  \begin{itemize}
   \item
    Every $\axpn$ node has exactly two conclusions and no premise;
   \item
    Every $\cutpn$ node has exactly two premises and no conclusion;
   \item
    Every $\onepn$ node or $\bot$ node has exactly one conclusion and no premise;
   \item
    Every $\otimes$ node has exactly two premises and one conclusion;
   \item
    Every $\parr$ node has one or two premises (in which case we call it \emph{binary})~and~one~conclusion;
   \item
    Every $\bullet$ node has exactly one premise and no conclusion.
  \end{itemize}
 \end{definition}
 
 From now on, except for \Cref{sec:btenll}, we refer to \llgs simply as graphs.
 
 \begin{definition}
  \label{def:nodes}
  Let $G$ be a graph. We denote by $\ve(G)$, $\w(G)$, $\ar(G)$, $\cc(G)$ the sets of nodes, $\bot$ nodes, arcs, connected components of $G$ respectively. A \emph{conclusion of $G$} is a premise~of~a~$\bullet$ node of $G$. A non-$\bullet$ node is \emph{terminal} if all of its conclusions are conclusions of $G$.
  %Finally, every false, par and dereliction node of $G$ is called a \emph{negative} node.
  %Finally, a \emph{\dpath} of $G$ is a directed path $\gamma$ of $G$ such that no arc of $\gamma$ is a jump.
 \end{definition}
 
 We introduce an order relation on the nodes of a graph, which is used in~\Cref{def:erasing}.
 
 \begin{definition}
  \label{def:erasing-order}
  Let $G$ be a graph. We define a binary relation $\prec$ on $\ve(G)$ in the following~way: we have $n \prec n'$ if and only if there exists a non-empty \dpath of $R$ from $n$ to $n'$.%We write $n \preceq n'$ if $n \prec n'$ or $n = n'$, and $n \succ n'$ (resp.~$n \succeq n'$) if $n' \prec n$ (resp.~$n' \preceq n$).
 \end{definition}
 
 \begin{remark}
  \label{rmk:prec}
  A graph $G$ is a dag (\Cref{def:graph}), hence $\prec$ is a strict order relation. Since $\prec$ is finite, $\prec$ is well-founded and converse well-founded: a non-empty subset~of~$\ve(G)$~has~a~minimal and a maximal element. For every node $n$ of $G$, $\prec$ is a total order on $\{m : n \prec m\}$.
 \end{remark}
 
 We adapt to graphs some standard notions from the theory of linear logic \pss.
 
 \begin{definition}
  \label{def:proof-structure}
  A \emph{\ps}
  \renewcommand{\ps}{ps\xspace}%
  \renewcommand{\pss}{pss\xspace}%
  (\ps for short) is a graph $R$ s.t.~every $\parr$ node is binary,~with:
  \begin{itemize}
   \item
    An ordering of the premises of each $\otimes$ node and of each $\parr$ node;
   \item
    A total ordering of the conclusions of $R$.
  \end{itemize}
  For every $\otimes$ or $\parr$ node $n$ of $R$, the first premise is the \emph{left premise} of $n$, and the second one is the \emph{right premise}. We say that $R$ is: \emph{\cf} if $R$ has no $\cutpn$ node; a \ps of $\mllu$ if it comes with a labelling $\ell$ of its arcs by formulae of $\mllu$ (\emph{types}), s.t., for every $a, a'$ conclusions (resp.~premises) of an $\axpn$ (resp.~$\cutpn$) node, $\ell(a) = \ell(a')^\bot$, and for every arc $a$ conclusion of:
  \begin{itemize}
   \item
    A $\onepn$ (resp.~$\bot$) node, $\ell(a) = \one$ (resp.~$\ell(a) = \bot$);
   \item
    A $\otimes$ node with left premise $a_1$ and right premise $a_2$, $\ell(a) = \ell(a_1) \otimes \ell(a_2)$;
   \item
    A $\parr$ node with left premise $a_1$ and right premise $a_2$, $\ell(a) = \ell(a_1) \parr \ell(a_2)$.
  \end{itemize}
  A \emph{\ps of $\mathcal{F}$}, with $\mathcal{F}$ fragment of $\mllu$, is a \ps of $\mllu$ with all its arcs typed by formulae~of~$\mathcal{F}$.
 \end{definition}
 
 \renewcommand{\ps}{ps\xspace}%
 \renewcommand{\pss}{pss\xspace}%
 
 \begin{definition}
  \label{def:switching}
  A \emph{switching} $\varphi$ of a \ps $R$ is a function which maps every $\parr$ node of $R$ to one of its premises. The \emph{\sg} $R^\varphi$ of $R$ induced by $\varphi$ is obtained by adding a fresh $\bullet$ node $n_0$ for each $\parr$ node $n$, and by setting the head of the premise $a$ of $n$ such~that~$a \neq \varphi(n)$~to~be~$n_0$.\footnote{Switching graphs are \llgs, because $\parr$ nodes can have one or two premises (\Cref{def:graph}).}
 \end{definition}
 
 \begin{notation}
  \label{not:cardinality}
  Let $S$ be a set. We denote by $\sharp S$ the cardinality of $S$.
 \end{notation}
 
 \begin{remark}
  \label{rmk:same-vertices-arcs}
%  Let $R$ be a \ps, and let $\varphi_1$ and $\varphi_2$ be switchings of $R$. Then:
%  \[
%   \sharp \ve(R^{\varphi_1}) = \sharp \ve(R^{\varphi_2}) \qquad \sharp \ar(R^{\varphi_1}) = \sharp \ar(R^{\varphi_2})
%  \]
  For $R$ \ps, $\varphi_1$ and $\varphi_2$ switchings of $R$: $\sharp \ve(R^{\varphi_1}) = \sharp \ve(R^{\varphi_2})$,~$\sharp \ar(R^{\varphi_1}) = \sharp \ar(R^{\varphi_2})$.
 \end{remark}
 
 A \spath cannot ``bounce'' on a pair of premises $a_1$ and $a_2$ of the same $\parr$ node. It is also forbidden for a \spath to contain $a_1$ and $a_2$ as its first and last arcs.
 
 \begin{definition}
 A path is \emph{switching} if it does not contain two premises of~the~same~$\parr$ node.
 %A path $\gamma$ is \emph{switching} if $\gamma$ does not contain two premises of the same $\parr$ or $\ctpn$ node.
  %Let $G$ be a graph. A path $\gamma$ of $G$ is a \emph{\spath} if, for every $\parr$ or $\ctpn$ node $n$ of $\gamma$, we have that $\gamma$ does not contain two premises of $n$.
 \end{definition}
 
 \begin{remark}
  A path of a \ps $R$ is switching if and only if it is a path of $R^\varphi$ for some~switching~$\varphi$.
 \end{remark}
 
% \begin{proof}
%  We prove that $<$ is irreflexive and transitive.
%  \begin{itemize}
%   \item
% 	\emph{Irreflexivity.} Let $n$ be a node of $G$. We know that $G$ is a directed acyclic graph by Definition~\ref{def:graph}. Hence, we have $n \not < n$;
%   \item
% 	\emph{Transitivity.} Let $n$, $n'$ and $n''$ be nodes of $G$ and assume $n < n'$ and $n' < n''$. Then there exist a non-trivial descent path $\gamma$ from $n$ to $n'$ and a non-trivial descent path $\gamma'$ from $n'$ to $n''$ in $G$. If there existed a node of $\gamma'$ which is also crossed by $\gamma$ and is different from $n'$, then we could consider the first such node $m$. Let $\sigma$ be the suffix of $\gamma$ starting from $m$ and let $\rho'$ be the prefix of $\gamma'$ ending in $m$. By minimality of $m$, $\sigma\rho'$ would be a non-trivial directed cycle of $G$, which is a contradiction, because $G$ is a directed acyclic graph by Definition~{def:graph}. Then the only common node of $\gamma$ and $\gamma'$ is $n'$. Therefore, $\gamma\gamma'$ is a non-trivial descent path from $n$ to $n''$. Hence, we have $n < n''$. \qedhere
%  \end{itemize}
% \end{proof}

\subsection{Desequentialization}
 
 \label{subsec:desequentialization-main}
 The well-known process of desequentialization maps a \seqc proof $\pi$ in $\mllu$ with conclusion $A_1, \dots, A_k$ to a \ps $\pi^\deseq$ of $\mllu$ with conclusions of types $A_1, \dots, A_k$. The function $\pi \mapsto \pi^\deseq$ is defined by induction on $\pi$ (see \Cref{fig:desequentialization}). Full details~are~in~\Cref{subsec:cut-elimination-steps}~and~in~\cite{llhandbook}.

 \begin{figure}
  \centering \hfill
  \begin{subfigure}{0.1125\textwidth}
   \centering
 		%  	\scalebox{\deseqscale}{
   \begin{tikzpicture}
	\begin{pgfonlayer}{nodelayer}
		\node [style=agent] (1) at (-0.5, -0.5) {\scriptsize$A\vphantom{A^\perp}$};
		\node [style=agent] (2) at (0.5, -0.5) {\scriptsize$A^\perp$};
		\node [style=axwLink] (6) at (0, 0.25) {};
		\node [style=none] (7) at (0.3, 0.325) {\tiny$n$};
		\node [style=DOTagent] (8) at (-0.5, -1.075) {};
		\node [style=DOTagent] (9) at (0.5, -1.075) {};
	\end{pgfonlayer}
	\begin{pgfonlayer}{edgelayer}
		\draw [style=DoutLEFT] (6) to (1);
		\draw [style=DoutRIGHT] (6) to (2);
		\draw [style=Dsimple] (1) to (8);
		\draw [style=Dsimple] (2) to (9);
	\end{pgfonlayer}
\end{tikzpicture}
 			%  	}
   \caption{\label{subfig:desequentialization-axiom}\axsc rule.}
  \end{subfigure}
  \hfill
  \begin{subfigure}{0.175\textwidth}
   \centering
 		%  	\scalebox{\deseqscale}{
   \begin{tikzpicture}
	\begin{pgfonlayer}{nodelayer}
		\node [style=none] (0) at (-1.1, 1.75) {};
		\node [style=none] (1) at (-0.25, 1.75) {};
		\node [style=none] (2) at (-1.1, 1.25) {};
		\node [style=none] (3) at (-0.25, 1.25) {};
		\node [style=none] (4) at (-1.1, 1.5) {};
		\node [style=none] (5) at (-0.25, 1.5) {};
		\node [style=none] (6) at (-0.25, 1) {};
		\node [style=none] (7) at (-1.1, 1) {};
		\node [style=none] (10) at (-0.5, 1) {};
		\node [style=whiteTest] (15) at (-0.675, 1.375) {\scriptsize$R_1$};
		\node [style=none] (16) at (0.25, 1.75) {};
		\node [style=none] (17) at (1.1, 1.75) {};
		\node [style=none] (18) at (0.25, 1.25) {};
		\node [style=none] (19) at (1.1, 1.25) {};
		\node [style=none] (20) at (0.25, 1.5) {};
		\node [style=none] (21) at (1.1, 1.5) {};
		\node [style=none] (22) at (1.1, 1) {};
		\node [style=none] (23) at (0.25, 1) {};
		\node [style=none] (26) at (0.5, 1) {};
		\node [style=whiteTest] (31) at (0.675, 1.375) {\scriptsize$R_2$};
		\node [style=cutwLink] (32) at (0, 0) {};
		\node [style=none] (33) at (0.3, -0.325) {\tiny$n$};
		\node [style=agent] (34) at (-0.5, 0.5) {\scriptsize$\smash{A}\vphantom{X}$};
		\node [style=agent] (35) at (0.5, 0.5) {\scriptsize$\smash{A^\perp}\vphantom{X}$};
		\node [style=none] (40) at (-0.875, 0.35) {};
		\node [style=none] (41) at (-0.875, 0.125) {};
		\node [style=none] (42) at (-0.825, 0.125) {};
		\node [style=none] (43) at (-0.825, 0.35) {};
		\node [style=DOTagent] (44) at (-0.85, -0.075) {};
		\node [style=none] (45) at (-0.875, 1) {};
		\node [style=none] (46) at (-0.875, 0.775) {};
		\node [style=none] (47) at (-0.825, 0.775) {};
		\node [style=none] (48) at (-0.825, 1) {};
		\node [style=agent] (49) at (-0.85, 0.5) {\scriptsize$\smash{\Gamma}\vphantom{X}$};
		\node [style=none] (50) at (0.825, 0.35) {};
		\node [style=none] (51) at (0.825, 0.125) {};
		\node [style=none] (52) at (0.875, 0.125) {};
		\node [style=none] (53) at (0.875, 0.35) {};
		\node [style=DOTagent] (54) at (0.85, -0.075) {};
		\node [style=none] (55) at (0.825, 1) {};
		\node [style=none] (56) at (0.825, 0.775) {};
		\node [style=none] (57) at (0.875, 0.775) {};
		\node [style=none] (58) at (0.875, 1) {};
		\node [style=agent] (59) at (0.85, 0.5) {\scriptsize$\smash{\Delta}\vphantom{X}$};
	\end{pgfonlayer}
	\begin{pgfonlayer}{edgelayer}
		\draw [style=roundedCornerBlackFill] (2.center)
			 to (0.center)
			 to (1.center)
			 to (3.center);
		\draw [style=roundedCornerBlackFill] (4.center)
			 to (7.center)
			 to (6.center)
			 to (5.center);
		\draw [style=roundedCornerBlackFill] (18.center)
			 to (16.center)
			 to (17.center)
			 to (19.center);
		\draw [style=roundedCornerBlackFill] (20.center)
			 to (23.center)
			 to (22.center)
			 to (21.center);
		\draw [style=DinLEFT] (34) to (32);
		\draw [style=DinRIGHT] (35) to (32);
		\draw [style=simple] (10.center) to (34);
		\draw [style=simple] (26.center) to (35);
		\draw [style=simpleB] (45.center) to (46.center);
		\draw [style=simpleB] (48.center) to (47.center);
		\draw [style=Dsimple] (43.center) to (42.center);
		\draw [style=Dsimple] (40.center) to (41.center);
		\draw [style=simpleB] (55.center) to (56.center);
		\draw [style=simpleB] (58.center) to (57.center);
		\draw [style=Dsimple] (53.center) to (52.center);
		\draw [style=Dsimple] (50.center) to (51.center);
	\end{pgfonlayer}
\end{tikzpicture}
 			%  	}
   \caption{\label{subfig:desequentialization-cut}\cutsc rule.}
  \end{subfigure}
  \hfill
  \begin{subfigure}{0.1\textwidth}
   \centering
 		%  	\scalebox{\deseqscale}{
   \begin{tikzpicture}
	\begin{pgfonlayer}{nodelayer}
		\node [style=agent] (1) at (0, -0.75) {\scriptsize$\boldsymbol{1}$};
		\node [style=onewLink] (7) at (0, 0) {};
		\node [style=none] (8) at (0.25, 0.075) {\tiny $n$};
		\node [style=DOTagent] (9) at (0, -1.3125) {};
	\end{pgfonlayer}
	\begin{pgfonlayer}{edgelayer}
		\draw [style=simple] (7) to (1);
		\draw [style=Dsimple] (1) to (9);
	\end{pgfonlayer}
\end{tikzpicture}
 			%  	}
   \caption{\label{subfig:desequentialization-one}$\one$ rule.}
  \end{subfigure}
  \hfill
  \begin{subfigure}{0.125\textwidth}
   \centering
 		%  	\scalebox{\deseqscale}{
   \begin{tikzpicture}
	\begin{pgfonlayer}{nodelayer}
		\node [style=none] (0) at (-1, 1.25) {};
		\node [style=none] (1) at (-0.5, 1.25) {};
		\node [style=none] (2) at (-1, 0.75) {};
		\node [style=none] (3) at (-0.5, 0.75) {};
		\node [style=none] (4) at (-1, 1) {};
		\node [style=none] (5) at (-0.5, 1) {};
		\node [style=none] (6) at (-0.5, 0.5) {};
		\node [style=none] (7) at (-1, 0.5) {};
		\node [style=whiteTest] (15) at (-0.75, 0.875) {\scriptsize$R_1$};
		\node [style=botwLink] (16) at (-0.125, 0.75) {};
		\node [style=agent] (17) at (-0.125, 0) {\scriptsize$\bot\vphantom{X}$};
		\node [style=none] (18) at (0.175, 0.825) {\tiny $n$};
		\node [style=DOTagent] (21) at (-0.125, -0.575) {};
		\node [style=none] (22) at (-0.775, -0.15) {};
		\node [style=none] (23) at (-0.775, -0.375) {};
		\node [style=none] (24) at (-0.725, -0.375) {};
		\node [style=none] (25) at (-0.725, -0.15) {};
		\node [style=DOTagent] (26) at (-0.75, -0.575) {};
		\node [style=none] (27) at (-0.775, 0.5) {};
		\node [style=none] (28) at (-0.775, 0.275) {};
		\node [style=none] (29) at (-0.725, 0.275) {};
		\node [style=none] (30) at (-0.725, 0.5) {};
		\node [style=agent] (31) at (-0.75, 0) {\scriptsize$\smash{\Gamma}\vphantom{X}$};
	\end{pgfonlayer}
	\begin{pgfonlayer}{edgelayer}
		\draw [style=roundedCornerBlackFill] (2.center)
			 to (0.center)
			 to (1.center)
			 to (3.center);
		\draw [style=roundedCornerBlackFill] (4.center)
			 to (7.center)
			 to (6.center)
			 to (5.center);
		\draw [style=simple] (16) to (17);
		\draw [style=Dsimple] (17) to (21);
		\draw [style=simpleB] (27.center) to (28.center);
		\draw [style=simpleB] (30.center) to (29.center);
		\draw [style=Dsimple] (25.center) to (24.center);
		\draw [style=Dsimple] (22.center) to (23.center);
	\end{pgfonlayer}
\end{tikzpicture}
 			%  	}
   \caption{\label{subfig:desequentialization-bottom}$\bot$ rule.}
  \end{subfigure}
  \hfill
  \begin{subfigure}{0.175\textwidth}
   \centering
 		%  	\scalebox{\deseqscale}{
   \begin{tikzpicture}
	\begin{pgfonlayer}{nodelayer}
		\node [style=none] (0) at (-1.1, 1.75) {};
		\node [style=none] (1) at (-0.25, 1.75) {};
		\node [style=none] (2) at (-1.1, 1.25) {};
		\node [style=none] (3) at (-0.25, 1.25) {};
		\node [style=none] (4) at (-1.1, 1.5) {};
		\node [style=none] (5) at (-0.25, 1.5) {};
		\node [style=none] (6) at (-0.25, 1) {};
		\node [style=none] (7) at (-1.1, 1) {};
		\node [style=none] (10) at (-0.5, 1) {};
		\node [style=whiteTest] (15) at (-0.675, 1.375) {\scriptsize$R_1$};
		\node [style=none] (16) at (0.25, 1.75) {};
		\node [style=none] (17) at (1.1, 1.75) {};
		\node [style=none] (18) at (0.25, 1.25) {};
		\node [style=none] (19) at (1.1, 1.25) {};
		\node [style=none] (20) at (0.25, 1.5) {};
		\node [style=none] (21) at (1.1, 1.5) {};
		\node [style=none] (22) at (1.1, 1) {};
		\node [style=none] (23) at (0.25, 1) {};
		\node [style=none] (26) at (0.5, 1) {};
		\node [style=whiteTest] (31) at (0.675, 1.375) {\scriptsize$R_2$};
		\node [style=tensorwLink] (32) at (0, 0) {};
		\node [style=none] (33) at (0.225, -0.35) {\tiny$n$};
		\node [style=agent] (34) at (-0.5, 0.5) {\scriptsize$\smash{A}\vphantom{X}$};
		\node [style=agent] (35) at (0.5, 0.5) {\scriptsize$\smash{B}\vphantom{X}$};
		\node [style=agent] (40) at (0, -0.75) {\scriptsize$\smash{A{\otimes}B}\vphantom{X}$};
		\node [style=DOTagent] (41) at (0, -1.3) {};
		\node [style=none] (42) at (-0.875, 0.35) {};
		\node [style=none] (43) at (-0.875, 0.125) {};
		\node [style=none] (44) at (-0.825, 0.125) {};
		\node [style=none] (45) at (-0.825, 0.35) {};
		\node [style=DOTagent] (46) at (-0.85, -0.075) {};
		\node [style=none] (47) at (-0.875, 1) {};
		\node [style=none] (48) at (-0.875, 0.775) {};
		\node [style=none] (49) at (-0.825, 0.775) {};
		\node [style=none] (50) at (-0.825, 1) {};
		\node [style=agent] (51) at (-0.85, 0.5) {\scriptsize$\smash{\Gamma}\vphantom{X}$};
		\node [style=none] (52) at (0.825, 0.35) {};
		\node [style=none] (53) at (0.825, 0.125) {};
		\node [style=none] (54) at (0.875, 0.125) {};
		\node [style=none] (55) at (0.875, 0.35) {};
		\node [style=DOTagent] (56) at (0.85, -0.075) {};
		\node [style=none] (57) at (0.825, 1) {};
		\node [style=none] (58) at (0.825, 0.775) {};
		\node [style=none] (59) at (0.875, 0.775) {};
		\node [style=none] (60) at (0.875, 1) {};
		\node [style=agent] (61) at (0.85, 0.5) {\scriptsize$\smash{\Delta}\vphantom{X}$};
	\end{pgfonlayer}
	\begin{pgfonlayer}{edgelayer}
		\draw [style=roundedCornerBlackFill] (2.center)
			 to (0.center)
			 to (1.center)
			 to (3.center);
		\draw [style=roundedCornerBlackFill] (4.center)
			 to (7.center)
			 to (6.center)
			 to (5.center);
		\draw [style=roundedCornerBlackFill] (18.center)
			 to (16.center)
			 to (17.center)
			 to (19.center);
		\draw [style=roundedCornerBlackFill] (20.center)
			 to (23.center)
			 to (22.center)
			 to (21.center);
		\draw [style=DinLEFT] (34) to (32);
		\draw [style=DinRIGHT] (35) to (32);
		\draw [style=simple] (10.center) to (34);
		\draw [style=simple] (26.center) to (35);
		\draw [style=simple] (32) to (40);
		\draw [style=Dsimple] (40) to (41);
		\draw [style=simpleB] (47.center) to (48.center);
		\draw [style=simpleB] (50.center) to (49.center);
		\draw [style=Dsimple] (45.center) to (44.center);
		\draw [style=Dsimple] (42.center) to (43.center);
		\draw [style=simpleB] (57.center) to (58.center);
		\draw [style=simpleB] (60.center) to (59.center);
		\draw [style=Dsimple] (55.center) to (54.center);
		\draw [style=Dsimple] (52.center) to (53.center);
	\end{pgfonlayer}
\end{tikzpicture}
 			%  	}
   \caption{\label{subfig:desequentialization-tensor}$\otimes$ rule.}
  \end{subfigure}
  \hfill
  \begin{subfigure}{0.15\textwidth}
   \centering
 		%  	\scalebox{\deseqscale}{
   \begin{tikzpicture}
	\begin{pgfonlayer}{nodelayer}
		\node [style=none] (0) at (-1.45, 1.75) {};
		\node [style=none] (1) at (0.4, 1.75) {};
		\node [style=none] (2) at (-1.45, 1.25) {};
		\node [style=none] (3) at (0.4, 1.25) {};
		\node [style=none] (4) at (-1.45, 1.5) {};
		\node [style=none] (5) at (0.4, 1.5) {};
		\node [style=none] (6) at (0.4, 1) {};
		\node [style=none] (7) at (-1.45, 1) {};
		\node [style=none] (10) at (-0.85, 1) {};
		\node [style=whiteTest] (15) at (-0.525, 1.375) {\scriptsize$R_1$};
		\node [style=none] (16) at (0.15, 1) {};
		\node [style=parrwLink] (17) at (-0.35, 0) {};
		\node [style=agent] (18) at (-0.35, -0.75) {\scriptsize$\smash{A{\parr}B}\vphantom{X}$};
		\node [style=none] (19) at (-0.125, -0.35) {\tiny $n$};
		\node [style=agent] (20) at (-0.85, 0.5) {\scriptsize$\smash{A}\vphantom{X}$};
		\node [style=agent] (21) at (0.15, 0.5) {\scriptsize$\smash{B}\vphantom{X}$};
		\node [style=DOTagent] (22) at (-0.35, -1.2875) {};
		\node [style=none] (23) at (-1.225, 0.35) {};
		\node [style=none] (24) at (-1.225, 0.125) {};
		\node [style=none] (25) at (-1.175, 0.125) {};
		\node [style=none] (26) at (-1.175, 0.35) {};
		\node [style=DOTagent] (27) at (-1.2, -0.075) {};
		\node [style=none] (28) at (-1.225, 1) {};
		\node [style=none] (29) at (-1.225, 0.775) {};
		\node [style=none] (30) at (-1.175, 0.775) {};
		\node [style=none] (31) at (-1.175, 1) {};
		\node [style=agent] (32) at (-1.2, 0.5) {\scriptsize$\smash{\Gamma}\vphantom{X}$};
	\end{pgfonlayer}
	\begin{pgfonlayer}{edgelayer}
		\draw [style=roundedCornerBlackFill] (2.center)
			 to (0.center)
			 to (1.center)
			 to (3.center);
		\draw [style=roundedCornerBlackFill] (4.center)
			 to (7.center)
			 to (6.center)
			 to (5.center);
		\draw [style=simple] (17) to (18);
		\draw [style=Dsimple] (18) to (22);
		\draw [style=simple] (10.center) to (20);
		\draw [style=simple] (16.center) to (21);
		\draw [style=DinLEFT] (20) to (17);
		\draw [style=DinRIGHT] (21) to (17);
		\draw [style=simpleB] (28.center) to (29.center);
		\draw [style=simpleB] (31.center) to (30.center);
		\draw [style=Dsimple] (26.center) to (25.center);
		\draw [style=Dsimple] (23.center) to (24.center);
	\end{pgfonlayer}
\end{tikzpicture}
 			%  	}
   \caption{\label{subfig:desequentialization-par}$\parr$ rule.}
  \end{subfigure}
  \hfill \null
  \caption{\label{fig:desequentialization}Desequentialization $(\cdot)^\deseq$ of \seqc proofs in $\mllu$ into \pss of $\mllu$. Double arrows to a $\bullet$ node are a shorthand for an arbitrary number of arcs, each targeting a distinct $\bullet$ node.}
 \end{figure}

 \begin{definition}
  Let $\mathcal{F}$ be a fragment of $\mllu$. A \ps $R$ of $\mathcal{F}$ is \emph{sequentializable} in $\mathcal{F}$ if~there exists a \seqc proof $\pi$ in $\mathcal{F}$ such that $R = \pi^\circ$.\footnote{In the rest of the paper, we will not give the full details of the correspondence between the conclusion sequent of $\pi$ and the conclusions of $R$.}
 \end{definition}
 
% The following definition generalizes the notion of sequentializable \ps to the untyped~setting. Intuitively, a \seq \ps represents a \seqc proof with no formula.
 
 \begin{definition}[By induction on $\sharp \ar(R)$]
  \label{def:sequentialps}
  Let $R$ be a \ps and let $n$ be a node of $R$. We say that $R$ is \emph{$n$-\seq} if it has the shape in the sub-figure of \Cref{fig:desequentialization} determined by the label of $n$ (in the picture, all formulae and sequents are to be ignored),~for~some~\seq~\pss~$R_1, R_2$,~where a \ps is \emph{\seq} if it is $n$-\seq for some $n$.
 \end{definition}

 \begin{remark}
  \label{rmk:sequential}
  Let $\mathcal{F}$ be a fragment of $\mllu$.
  \begin{itemize}
   \item
	If $\pi$ is a \seqc proof in $\mathcal{F}$, then $\pi^\deseq$ is a \seq \ps of $\mathcal{F}$;
   \item
	If $R$ is a \seq \ps of $\mathcal{F}$, then $R$ is sequentializable in $\mathcal{F}$.
  \end{itemize}
 \end{remark}

\subsection{Correctness criteria and \texorpdfstring{\cutelim}{cut elimination}}
 
 \begin{definition}
  \label{def:acyclicity-connectivity}
  Let $R$ be a \ps and let $\varphi$ be a switching of $R$. We write:
  
%  \noindent $R^\varphi \models \AC$ if $R^\varphi$ is acyclic; $R^\varphi \models \C$ if $\sharp \cc(R^\varphi) = 1$; $R^\varphi \models \Cw$ if $\smash{\sharp \cc(R^\varphi) = \sharp \w(R) + 1}$.

  \medskip

  \noindent $R^\varphi \models \AC$ if $R^\varphi$ is acyclic; \hfill $R^\varphi \models \C$ if $\sharp \cc(R^\varphi) = 1$; \hfill $R^\varphi \models \Cw$ if $\smash{\sharp \cc(R^\varphi) = \sharp \w(R) + 1}$.
 %  \begin{itemize}
%   \item
% 	$G^\varphi \models \AC$ if $G^\varphi$ is acyclic;
%   \item
% 	$G^\varphi \models \C$ if $\sharp \cc(G^\varphi) = 1$;
%   \item
% 	$G^\varphi \models \Cw$ if $\sharp \cc(G^\varphi) = \sharp \w(G) + 1$.
%  \end{itemize}
 \end{definition}
 
 \begin{definition}
  \label{def:acyclicity-connectivity-global}
  Let $R$ be a \ps. We write:
  \begin{itemize}
   \item
    $R \models \AC$ (resp.~$\C$, $\Cw$) if, for every switching $\varphi$ of $R$, we have $R^\varphi \models \AC$ (resp.~$\C$, $\Cw$);
   \item
    $R \models \ACC$ if $R \models \AC$ and $R \models \C$; $R \models \ACCw$ if $R \models \AC$ and $R \models \Cw$.
  \end{itemize}
 \end{definition}
 
 The following proposition %(for $G$ graph) 
 expresses some results on the number of connected components of a \sg, which all descend from a result of graph theory: the number of connected components of an acyclic graph equals the number of its vertices minus~the~number~of~its~arcs.
 
 \begin{proposition}
  \label{prop:acyclicity-cc}
  Let $R$ be a \ps.
  \begin{enumerate}[(i)]
   \item \label{itm:acyclicity-cc-number}
    For every switching $\varphi$ of $R$, if $R^\varphi \models \AC$, then $\sharp \cc(R^\varphi) = \sharp \ve(R^\varphi) - \sharp \ar(R^\varphi)$;
   \item \label{itm:acyclicity-cc-same}
    If $R \models \AC$, then, for every two switchings $\varphi_1$ and $\varphi_2$ of $R$, we have $\sharp \cc(R^{\varphi_1}) = \sharp \cc(R^{\varphi_2})$;
   \item \label{itm:acyclicity-cc-existential}
    If $R \models \AC$, then $R \models \C$ (resp.~$R \models \Cw$) if and only if there exists a switching $\varphi$ of $R$ such that $R^\varphi \models \C$ (resp.~$R^\varphi \models \Cw$).
  \end{enumerate}
 \end{proposition}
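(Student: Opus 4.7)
The plan is to deduce all three items from the classical graph-theoretic identity for forests: in an acyclic (multi-)graph $U$, the number of connected components equals $\sharp v(U) - \sharp e(U)$, because each connected component of a forest is a tree with one fewer edge than vertices.

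For item (\ref{itm:acyclicity-cc-number}), I would first establish (or invoke) this identity for the underlying undirected graph of $G^\varphi$. A clean way is induction on $\sharp a(G^\varphi)$: with no arcs, each vertex is its own component, so the formula reads $\sharp v(G^\varphi) = \sharp v(G^\varphi) - 0$; adding an acyclic arc merges two distinct components into one, decreasing $\sharp \mathit{cc}$ by $1$ while increasing $\sharp a$ by $1$, which preserves the identity. (Equivalently, sum over trees the identity $\sharp v = \sharp a + 1$.)

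For item (\ref{itm:acyclicity-cc-same}), the observation is that \Cref{rmk:same-vertices-arcs} already tells us that $\sharp v(G^{\varphi_1}) = \sharp v(G^{\varphi_2})$ and $\sharp a(G^{\varphi_1}) = \sharp a(G^{\varphi_2})$. Since $G \models \mathsf{AC}$ means both $G^{\varphi_1}$ and $G^{\varphi_2}$ are acyclic, item (\ref{itm:acyclicity-cc-number}) applies to both, and the common value $\sharp v - \sharp a$ yields the equality.

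For item (\ref{itm:acyclicity-cc-existential}), the left-to-right implication is immediate: $G \models \mathsf{C}$ (resp.~$\mathsf{C}_{\sharp w}$) means $G^\varphi \models \mathsf{C}$ (resp.~$\mathsf{C}_{\sharp w}$) for every switching $\varphi$, so in particular for at least one. For the converse, suppose some switching $\varphi_0$ satisfies $G^{\varphi_0} \models \mathsf{C}$ (resp.~$\mathsf{C}_{\sharp w}$). Since $\sharp w(G)$ is a property of $G$ and does not depend on the switching (weakening nodes are untouched by \Cref{def:switching}), item (\ref{itm:acyclicity-cc-same}) gives $\sharp \mathit{cc}(G^\varphi) = \sharp \mathit{cc}(G^{\varphi_0})$ for every switching $\varphi$, hence the desired property transfers uniformly. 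No real obstacle is anticipated; the only subtlety is keeping track of the fact that switching only changes where certain arcs land (replacing heads by $\bullet$ nodes), so neither the count of weakening nodes nor, by \Cref{rmk:same-vertices-arcs}, the totals $\sharp v$ and $\sharp a$ of switching graphs vary.
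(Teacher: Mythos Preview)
Your proposal is correct and matches the paper's approach exactly: the paper does not give a detailed proof of this proposition but simply notes that all three items ``descend from a result of graph theory: the number of connected components of an acyclic graph is exactly the number of its vertices minus the number of its edges,'' together with \Cref{rmk:same-vertices-arcs}. Your write-up fills in precisely these details.
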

 
% The following remark is proven straightforwardly by induction.
 
 \begin{remark}
  \label{rmk:sequent-connected-components}
  Let $\pi$ be a \seqc proof in $\mllu$. Then $\pi^\deseq \models \ACCw$.
 \end{remark}
 
% As an immediate consequence, we obtain a necessary condition for a \ps of $\mllu$ to~be~the desequentialization of a \seqc proof.
 
 \begin{proposition}
  \label{prop:sequentializable}
  Let $R$ be a \ps of $\mllu$. If $R$ is sequentializable in $\mllu$, then $R \models \ACCw$.
 \end{proposition}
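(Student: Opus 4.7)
The plan is to proceed by induction on the sequent calculus proof $\pi$ with $\pi^\circ = R$, establishing at each step both halves of $R \models \mathsf{ACC}_{\sharp w}$ as defined in \Cref{not:acyclicity-connectivity-ps} (equivalently, using \Cref{rmk:correctness-criteria-boxes}, checking $G_R \models \mathsf{ACC}_{\sharp w}$ and that every $R_0 \sqsubset R$ satisfies $\mathsf{ACC}_{\sharp w}$). The box condition is almost immediate: for every rule other than $!$, the boxes of $R$ are exactly those of the sub-derivations' ps's, so the inductive hypothesis transfers directly; for the $!$ rule, the unique new box has content equal, up to the $\bullet$-node relabelling of \Cref{def:proof-structure}, to the ps $R'$ of the immediate sub-derivation, and $R' \models \mathsf{ACC}_{\sharp w}$ by induction. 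The real work is therefore to verify $G_R \models \mathsf{ACC}_{\sharp w}$.

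For this I fix an arbitrary switching $\varphi$ of $G_R$, which restricts to switchings of the sub-derivations' correctness graphs, and compare $G_R^\varphi$ with the switching graphs obtained from the inductive hypotheses. The base cases ($\mathit{ax}$ and $\mathbf{1}$) are immediate: $G_R$ is a small tree with no $w$-node and a single connected component. For the inductive step my main tool is \Cref{prop:acyclicity-cc}, which in the acyclic case reduces the component count to $\sharp v - \sharp a$; once acyclicity is checked, the identity $\sharp \mathit{cc} = \sharp w + 1$ becomes a purely arithmetic statement about how the last rule modifies the vertex, arc and $w$-node counts.

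The inductive cases fall into four patterns. The $?d$ rule inserts one node and one arc in series with a previous conclusion, affecting neither $\sharp \mathit{cc}$ nor $\sharp w$. The $\bot$ and $?w$ rules create a fresh isolated edge consisting of a $w$-node and a $\bullet$-leaf, incrementing $\sharp w$ and $\sharp \mathit{cc}$ by one each. The binary rules $\mathit{cut}$ and $\otimes$ join two disjoint switching graphs $G_{R'}^{\varphi'}$ and $G_{R''}^{\varphi''}$ via a new node with two premises, contributing $+1$ vertex and $+2$ arcs, so $\sharp \mathit{cc}$ drops by one and $(\sharp w_1 + 1) + (\sharp w_2 + 1) - 1 = \sharp w + 1$. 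The switching rules $\parr$ and $?c$ are symmetric in the two choices of $\varphi$ on the new node: one premise becomes the live edge of the new node while the other is terminated by a fresh $\bullet$, and a new arc exits to a further $\bullet$; the net effect on $G_R^\varphi$ is one added leaf, preserving $\sharp \mathit{cc}$ and $\sharp w$. For the $!$ rule, $G_R$ collapses the promotion into a single box node connected to $\sharp \Gamma + 1$ $\bullet$-leaves, a tree with no $w$-node that trivially satisfies $\mathsf{ACC}_{\sharp w}$.

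The main point demanding attention is the bookkeeping in the switching cases: one must check that both choices of switching on a newly added $\parr$ or $?c$ node yield the same numerical effect on $G_R^\varphi$ and never close a cycle. Here the acyclic identity $\sharp \mathit{cc} = \sharp v - \sharp a$ is decisive, as it converts a potentially delicate path analysis into the verification of integer equalities.
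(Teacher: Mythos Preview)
Your proposal is correct and follows exactly the approach the paper indicates: a straightforward induction on a sequent calculus proof $\pi$ with $\pi^\circ = R$. The paper does not spell out the case analysis, but your use of \Cref{prop:acyclicity-cc} to reduce the connectivity check to the arithmetic of $\sharp v - \sharp a$ in each rule case, together with the observation that the box condition is inherited directly from the sub-derivations (with the $!$ case handled by identifying the content of the new box with $R'$), is precisely the intended argument; the minor discrepancy in your vertex/arc bookkeeping for the binary rules (you count ``$+1$ vertex, $+2$ arcs'' where the paper's $\bullet$-node conventions would give different increments) does not affect the computed $\Delta(\sharp\mathit{cc}) = -1$ and is purely a matter of which graph convention you have in mind.
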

 
 \begin{remark}
  \label{rmk:sequentializable}
  The converse of~\Cref{prop:sequentializable} does not hold, because of the example exhibited in \Cref{subfig:counterexample-mllu}, due to Laurent Regnier (common knowledge in the community, implicit in~\cite{regnier1992lambda}).
 	
  \begin{figure}
   \centering
   \hfill
   \begin{subfigure}{0.225\textwidth}
   	\centering
   	%  	\scalebox{\deseqscale}{
   	\begin{tikzpicture}
	\begin{pgfonlayer}{nodelayer}
		\node [style=parrwLink] (1) at (-0.5, 0.75) {};
		\node [style=tensorwLink] (2) at (0, -0.5) {};
		\node [style=DOTagent] (4) at (0, -1.75) {};
		\node [style=agent] (11) at (-1, 1.25) {\scriptsize$\boldsymbol{1}$};
		\node [style=agent] (12) at (0, 1.25) {\scriptsize$\boldsymbol{1}$};
		\node [style=agent] (13) at (-0.5, 0) {\scriptsize $\boldsymbol{1} \parr \boldsymbol{1}$};
		\node [style=agent] (14) at (0.5, 0) {\scriptsize $\bot$};
		\node [style=agent] (15) at (0, -1.25) {\scriptsize $(\boldsymbol{1} \parr \boldsymbol{1}) \otimes \bot$};
		\node [style=onewLink] (16) at (-1, 2) {};
		\node [style=onewLink] (17) at (0, 2) {};
		\node [style=botwLink] (18) at (0.5, 0.75) {};
	\end{pgfonlayer}
	\begin{pgfonlayer}{edgelayer}
		\draw [style=DinLEFT] (11) to (1);
		\draw [style=DinRIGHT] (12) to (1);
		\draw [style=simple] (1) to (13);
		\draw [style=DinLEFT] (13) to (2);
		\draw [style=DinRIGHT] (14) to (2);
		\draw [style=simple] (2) to (15);
		\draw [style=Dsimple] (15) to (4);
		\draw [style=simple] (16) to (11);
		\draw [style=simple] (17) to (12);
		\draw [style=simple] (18) to (14);
	\end{pgfonlayer}
\end{tikzpicture}
   		%  	}
   	\caption{\label{subfig:counterexample-mllu}Classical (Regnier).}
   \end{subfigure}
   \hfill
   \begin{subfigure}{0.275\textwidth}
   	\centering
   	%  	\scalebox{\deseqscale}{
   	\begin{tikzpicture}
	\begin{pgfonlayer}{nodelayer}
		\node [style=parrwLink] (1) at (-0.5, 0.75) {};
		\node [style=tensorwLink] (2) at (0, -0.5) {};
		\node [style=DOTagent] (4) at (0, -1.75) {};
		\node [style=agent] (12) at (0, 1.25) {\scriptsize$\boldsymbol{1}$};
		\node [style=agent] (13) at (-0.5, 0) {\scriptsize$\smash{X^\perp \parr \boldsymbol{1}}\vphantom{X}$};
		\node [style=agent] (14) at (0.5, 0) {\scriptsize$\smash{\bot}\vphantom{X}$};
		\node [style=agent] (15) at (0, -1.25) {\scriptsize$\smash{(X^\perp \parr \boldsymbol{1}) \otimes \bot}\vphantom{(\boldsymbol{1} \parr \boldsymbol{1}) \otimes \bot}$};
		\node [style=onewLink] (17) at (0, 2) {};
		\node [style=botwLink] (18) at (0.5, 0.75) {};
		\node [style=agent] (19) at (-1, 1.25) {\scriptsize$\smash{X^\perp}\vphantom{X}$};
		\node [style=axwLink] (20) at (-1.5, 2) {};
		\node [style=agent] (21) at (-2, 1.25) {\scriptsize$X$};
		\node [style=DOTagent] (22) at (-2, 0.75) {};
	\end{pgfonlayer}
	\begin{pgfonlayer}{edgelayer}
		\draw [style=DinRIGHT] (12) to (1);
		\draw [style=simple] (1) to (13);
		\draw [style=DinLEFT] (13) to (2);
		\draw [style=DinRIGHT] (14) to (2);
		\draw [style=simple] (2) to (15);
		\draw [style=Dsimple] (15) to (4);
		\draw [style=simple] (17) to (12);
		\draw [style=simple] (18) to (14);
		\draw [style=outRIGHT] (20) to (19);
		\draw [style=outLEFT] (20) to (21);
		\draw [style=Dsimple] (21) to (22);
		\draw [style=DinLEFT] (19) to (1);
	\end{pgfonlayer}
\end{tikzpicture}
   		%  	}
   	\caption{\label{subfig:counterexample-imllu}Intuitionistic (\Cref{sec:imell}).}
   \end{subfigure}
   \hfill \null
   \caption{\label{fig:counterexamples}Two \pss satisfying $\ACCw$ which are not sequentializable.}
  \end{figure}
 \end{remark}
 
 \begin{remark}
  In $\mll$ (no units), $\ACCw$ collapses to $\ACC$, thus it is sufficient for sequentialization (see~\cite{girard1987linear,danos1989structure,llhandbook}): if $R$ is a \ps of $\mll$~and~$R \models \ACCw$,~then~$R$~is~sequentializable~in~$\mll$.
 \end{remark}
 
 We recall the dynamics of \pss in \Cref{subsec:cut-elimination-steps}, by illustrating the \cutelim steps. A binary relation $\tocut$ on the set of \pss is then defined: we have that $R \tocut R'$ if $R$~reduces~to~$R'$~in exactly one \cutelim step. The following result is well-known (see~\cite{girard1987linear,danos1990logique,llhandbook}).
 
 % We recall the following properties of binary relations.
 % 
 % \begin{notation}
 	%  Let $\rho$ be a binary relation on a set. The reflexive transitive closure of $\rho$ is denoted by $\rho^*$.
 	% \end{notation}
 % 
 % \begin{definition}
 	%  Let $\rho$ be a binary relation on a set $X$. We say that $\rho$ is:
 	%  \begin{itemize}
 		%   \item
 		%    \emph{Confluent} if $\forall \, x, x_1, x_2 \in X : x \mathrel{\rho} x_1$ and $x \mathrel{\rho} x_2$ imply that $\exists \, x_0 \in X : x_1 \mathrel{\rho}^* x_0$ and $x_2 \mathrel{\rho}^* x_0$;
 		%   \item
 		%    \emph{Strongly normalizing} if $\forall \, x_0 \in X$ there exists no infinite sequence $x_1, x_2, \dots$ of elements of $X$ such that $x_{i-1} \mathrel{\rho} x_i$ for every positive integer $i$.
 		%  \end{itemize}
 	% \end{definition}
 % 
 % \begin{lemma}
 	%  Let $\rho$ be a binary relation on a set $X$. If $\rho$ is confluent and strongly normalizing, then every $x \in X$ has a unique normal form $x_0$.\footnote{Recall that $x_0$ is a normal form of $x$ if $x \mathrel{\rho}^* x_0$ and there is no $y \in X$ such that $x_0 \mathrel{\rho} y$.}
 	% \end{lemma}
 
 \begin{theorem}
  \label{thm:cut-elim}
  The binary relation $\tocut$ is confluent and strongly normalizing on $\{R : \text{$R$ \ps}\}$.\footnote{Consequently, every \ps $R$ has a unique normal form.}
 \end{theorem}
 
 % We can then define the notion of normal form of an untyped proof-structure.
 % 
 % \begin{definition}
 	%  \label{def:normal-form}
 	%  Let $R$ be an untyped proof-structure. The unique cut-free untyped proof-structure $R'$ such that $R \to^* R'$ is called the \emph{normal form} of $R$.
 	% \end{definition}
 
 A remarkable property of the condition $\ACCw$ is the fact that it is stable with respect to \cutelim. This is common knowledge in the community (see~\cite{laurent2016introduction,llhandbook}~for~the~crucial~elements). Anyway, we provide a complete proof in \Cref{subsec:cut-elimination-steps}.
 
 \begin{theorem}[restate = stabilityCutElimination, name = ]
  \label{thm:stability-cut-elimination}
  Let $R,R'$ be \pss such that $R \tocut R'$. If $R \models \ACCw$, then $R' \models \ACCw$.
 \end{theorem}
 
 We conclude this section by pointing out the interest of $\ACCw$ as a correctness criterion only for the \cf \pss of some fragment of $\mllu$. The next result is proven in \Cref{subsec:cut-elimination-steps}.
 
 \begin{proposition}[restate = sequentialityCutElimination, name = ]
  \label{prop:sequentiality-cut-elimination}
  Let $R, R'$ be \pss such that $R \tocut R'$. If $R$ is \seq,~then~$R'$~is~\seq.
 \end{proposition}
 
 By \Cref{prop:sequentiality-cut-elimination} we know, in particular, that the normal form of every sequentializable~\ps is sequentializable. Moreover, the stability of $\ACCw$ entails the following proposition.
 
 \begin{proposition}
  \label{prop:correctness-criterion}
  If, for every \cf \ps $R$ of a fragment $\mathcal{F}$ of $\mllu$, $R \models \ACCw$ $\Leftrightarrow$ $R$ is sequentializable in $\mathcal{F}$, the normal form of a \ps $R$ of $\mathcal{F}$ s.t.~$R \models \ACCw$ is~sequentializable~in~$\mathcal{F}$.
 \end{proposition}
 
 \section{Untyped sequentiality theorem}
  \label{sec:untyped-sequentiality-theorem}
   Sequentialization proofs find a node which ``splits'' the ps into two pss. Girard's original~proof relies on the existence of a terminal ``splitting'' $\otimes$ node (see~\cite{girard1987linear,llhandbook}). In our setting:
 
 \begin{definition}
  \label{def:splitting}
  Let $R$ be a \ps and let $n$ be a terminal $\cutpn$ (resp.~$\otimes$) node of $R$. We say~that:
  \begin{itemize}
   \item
    $n$ \emph{splits} $R$ into two \pss $R_1$ and $R_2$ if $R$ (possibly untyped) is as in \Cref{subfig:desequentialization-cut} (resp.~\ref{subfig:desequentialization-tensor});
   \item
    $n$ is \emph{splitting} for $R$ if there exist two \pss $R_1$ and $R_2$ such that $n$ splits $R$ into $R_1$ and $R_2$.
  \end{itemize}
 \end{definition}
 
 \begin{lemma}
  \label{lemma:splitting}
  Let $R$ be a \ps with at least one terminal $\cutpn$ or $\otimes$ node and no terminal~$\parr$~node. If $R \models \AC$, then there exists a splitting $\cutpn$ or $\otimes$ node of $R$.
 \end{lemma}

 \begin{remark}
  Let $R$ be a \ps and let $n$ be a splitting $\cutpn$ or $\otimes$ node of $R$.
  \begin{itemize}
   \item
  	If $R \models \ACC$, then $n$ splits $R$ into two \pss which are uniquely determined;
  	\item
  	If $R \models \AC$, then $n$ splits $R$ into two \pss $R_1$ and $R_2$ which are \emph{not} uniquely determined: they depend on the choice of distribution of the connected components. But, whatever~the choice, one has $R_1 \models \AC$ and $R_2 \models \AC$;
  	\item
  	If $R \models \ACCw$, then $n$ splits $R$ into two \pss $R_1$ and $R_2$ which are \emph{not} uniquely determined, as before. And it might be the case that $R_1 \models \ACCw$ and $R_2 \models \ACCw$ for some splitting but not for all. Consider, for instance, the \ps $R$ in \Cref{subfig:example-connected-components}. For every $i \in \{0, 1, 2\}$, let $R_i$ be the disjoint union of a terminal $\bot$ node and $i$ terminal $\one$ nodes. Then $n$ splits $R$ into two copies of $R_1$, but $n$ also splits $R$ into $R_0$ and $R_2$,~and~$R_i \models \ACCw$~if~and~only~if~$i = 1$.
  \end{itemize}
  
  \begin{figure}
   \centering
   \hfill
   \begin{subfigure}{0.3\textwidth}
  	\centering
  	\AxiomC{}
  	\RightLabel{\scriptsize$\one$}
  	\UnaryInfC{\infcstrut$\vdash \one$}
  	\RightLabel{\scriptsize$\bot$}
  	\UnaryInfC{\infcstrut$\vdash \one, \bot$}
  	\AxiomC{}
  	\RightLabel{\scriptsize$\one$}
  	\UnaryInfC{\infcstrut$\vdash \one$}
  	\RightLabel{\scriptsize$\bot$}
  	\UnaryInfC{\infcstrut$\vdash \one, \bot$}
  	\RightLabel{\scriptsize\exsc}
  	\UnaryInfC{\infcstrut$\vdash \bot, \one$}
  	\RightLabel{\scriptsize$\otimes$}
  	\BinaryInfC{\infcstrut$\vdash \one, \bot \otimes \bot, \one$}
  	\DisplayProof
  	\caption{A \seqc~proof~$\pi$.}
   \end{subfigure}
   \hfill
   \begin{subfigure}{0.3875\textwidth}
  	\centering
  	\begin{tikzpicture}
	\begin{pgfonlayer}{nodelayer}
		\node [style=tensorwLink] (2) at (0, -0.5) {};
		\node [style=DOTagent] (4) at (0, -1.75) {};
		\node [style=agent] (14) at (0.5, 0) {\scriptsize$\bot$};
		\node [style=agent] (15) at (0, -1.25) {\scriptsize$\smash{\bot{\otimes}\bot}\vphantom{\boldsymbol{1}}$};
		\node [style=botwLink] (18) at (0.5, 0.75) {};
		\node [style=none] (19) at (0.225, -0.85) {\tiny$n$};
		\node [style=agent] (20) at (-0.5, 0) {\scriptsize$\bot$};
		\node [style=botwLink] (21) at (-0.5, 0.75) {};
		\node [style=agent] (22) at (-1, -1.25) {\scriptsize$\boldsymbol{1}$};
		\node [style=onewLink] (23) at (-1, -0.5) {};
		\node [style=DOTagent] (25) at (-1, -1.7625) {};
		\node [style=agent] (26) at (1, -1.25) {\scriptsize$\boldsymbol{1}$};
		\node [style=onewLink] (27) at (1, -0.5) {};
		\node [style=DOTagent] (28) at (1, -1.7625) {};
	\end{pgfonlayer}
	\begin{pgfonlayer}{edgelayer}
		\draw [style=DinRIGHT] (14) to (2);
		\draw [style=simple] (2) to (15);
		\draw [style=Dsimple] (15) to (4);
		\draw [style=simple] (18) to (14);
		\draw [style=simple] (21) to (20);
		\draw [style=DinLEFT] (20) to (2);
		\draw [style=simple] (23) to (22);
		\draw [style=Dsimple] (22) to (25);
		\draw [style=simple] (27) to (26);
		\draw [style=Dsimple] (26) to (28);
	\end{pgfonlayer}
\end{tikzpicture}
  	\caption{\label{subfig:example-connected-components}The desequentialization $R = \pi^\deseq$~of~$\pi$.}
   \end{subfigure}
   \hfill \null
   \caption{A \ps for which the splitting is not uniquely determined.}
  \end{figure}
 \end{remark}
 
 We show that, if a \ps $R$ satisfies a particular geometric condition (it is a $\wten$-\ps), $R$ has no terminal $\bot$ or $\parr$ node and $R \models \Cw$, then $R$ is connected (\Cref{lemma:wten-connected}), so the splitting of $R$ into two \pss $R_1$ and $R_2$ is uniquely determined. And we prove that, if $R \models \ACCw$, then $R_1 \models \ACCw$ and $R_2 \models \ACCw$: this is the key step in the proof of \Cref{thm:untyped-sequentiality}.
 
 The notions we introduce next are illustrated by the examples in \Cref{fig:examples}.
 
 \begin{figure}
  \centering
  \begin{subfigure}{0.19\textwidth}
   \centering
   \begin{tikzpicture}
	\begin{pgfonlayer}{nodelayer}
		\node [style=cutwLink] (32) at (0, 0.625) {};
		\node [style=onewLink] (33) at (0.625, 1.25) {};
		\node [style=parrwLink] (34) at (-0.625, 1.25) {};
		\node [style=onewLink] (35) at (-1.25, 1.875) {};
		\node [style=parrwLinkHighlighted] (36) at (0, 1.875) {};
		\node [style=botwLinkHighlighted] (37) at (-0.625, 2.5) {};
		\node [style=botwLinkHighlighted] (38) at (0.625, 2.5) {};
	\end{pgfonlayer}
	\begin{pgfonlayer}{edgelayer}
		\draw [style=out1LEFT] (34) to (32);
		\draw [style=out1RIGHT] (33) to (32);
		\draw [style=out1RIGHT] (36) to (34);
		\draw [style=out1LEFT] (35) to (34);
		\draw [style=out1LEFT] (37) to (36);
		\draw [style=out1RIGHT] (38) to (36);
	\end{pgfonlayer}
\end{tikzpicture}
   \caption{\label{subfig:example-erasing}}
  \end{subfigure}
  \begin{subfigure}{0.19\textwidth}
   \centering
   \begin{tikzpicture}
	\begin{pgfonlayer}{nodelayer}
		\node [style=cutwLink] (32) at (0, 0.625) {};
		\node [style=onewLink] (33) at (0.625, 1.25) {};
		\node [style=parrwLink] (34) at (-0.625, 1.25) {};
		\node [style=onewLink] (35) at (-1.25, 1.875) {};
		\node [style=parrwLink] (36) at (0, 1.875) {};
		\node [style=botwLink] (37) at (-0.625, 2.5) {};
		\node [style=botwLink] (38) at (0.625, 2.5) {};
		\node [style=DOTagent] (39) at (0, 1.125) {};
		\node [style=DOTagent] (40) at (0.625, 1.75) {};
	\end{pgfonlayer}
	\begin{pgfonlayer}{edgelayer}
		\draw [style=out1LEFT] (34) to (32);
		\draw [style=out1RIGHT] (33) to (32);
		\draw [style=out1LEFT] (35) to (34);
		\draw [style=out1LEFT] (37) to (36);
		\draw [style=simpleB] (38) to (40);
		\draw [style=simpleB] (36) to (39);
	\end{pgfonlayer}
\end{tikzpicture}
   \caption{\label{subfig:example-w-compatible}}
  \end{subfigure}
  \begin{subfigure}{0.19\textwidth}
   \centering
   \begin{tikzpicture}
	\begin{pgfonlayer}{nodelayer}
		\node [style=cutwLink] (32) at (0, 0.625) {};
		\node [style=onewLink] (33) at (0.625, 1.25) {};
		\node [style=parrwLink] (34) at (-0.625, 1.25) {};
		\node [style=onewLink] (35) at (-1.25, 1.875) {};
		\node [style=parrwLink] (36) at (0, 1.875) {};
		\node [style=botwLink] (37) at (-0.625, 2.5) {};
		\node [style=botwLink] (38) at (0.625, 2.5) {};
		\node [style=DOTagent] (39) at (-1.25, 1.125) {};
		\node [style=DOTagent] (40) at (0.625, 1.75) {};
	\end{pgfonlayer}
	\begin{pgfonlayer}{edgelayer}
		\draw [style=out1LEFT] (34) to (32);
		\draw [style=out1RIGHT] (33) to (32);
		\draw [style=out1RIGHT] (36) to (34);
		\draw [style=out1LEFT] (37) to (36);
		\draw [style=simpleB] (38) to (40);
		\draw [style=simpleB] (35) to (39);
	\end{pgfonlayer}
\end{tikzpicture}
   \caption{\label{subfig:example-not-w-compatible}}
  \end{subfigure}
  \begin{subfigure}{0.19\textwidth}
   \centering
   \begin{tikzpicture}
	\begin{pgfonlayer}{nodelayer}
		\node [style=cutwLink] (32) at (0, 0.625) {};
		\node [style=onewLink] (33) at (0.575, 1.25) {};
		\node [style=parrwLink] (34) at (-0.625, 1.25) {};
		\node [style=onewLink] (35) at (-1.25, 1.875) {};
		\node [style=parrwLink] (36) at (0, 1.875) {};
		\node [style=botwLink] (37) at (-0.625, 2.5) {};
		\node [style=botwLink] (38) at (0.625, 2.5) {};
	\end{pgfonlayer}
	\begin{pgfonlayer}{edgelayer}
		\draw [style=out1RIGHT] (36) to (34);
		\draw [style=out1LEFT] (37) to (36);
		\draw [style=out1RIGHT] (38) to (36);
		\draw [style=out1LEFThighlight] (35) to (34);
		\draw [style=out1LEFThighlight] (34) to (32);
		\draw [style=up1RIGHThighlight] (32) to (33);
	\end{pgfonlayer}
\end{tikzpicture}
   \caption{}
  \end{subfigure}
  \begin{subfigure}{0.19\textwidth}
   \centering
   \begin{tikzpicture}
	\begin{pgfonlayer}{nodelayer}
		\node [style=cutwLink] (32) at (0, 0.625) {};
		\node [style=onewLink] (33) at (0.625, 1.25) {};
		\node [style=parrwLink] (34) at (-0.625, 1.25) {};
		\node [style=onewLink] (35) at (-1.25, 1.875) {};
		\node [style=parrwLink] (36) at (0, 1.875) {};
		\node [style=botwLink] (37) at (-0.625, 2.5) {};
		\node [style=botwLink] (38) at (0.625, 2.5) {};
	\end{pgfonlayer}
	\begin{pgfonlayer}{edgelayer}
		\draw [style=out1RIGHT] (33) to (32);
		\draw [style=out1LEFT] (35) to (34);
		\draw [style=out1RIGHT] (38) to (36);
		\draw [style=out1LEFThighlight] (37) to (36);
		\draw [style=out1RIGHThighlight] (36) to (34);
		\draw [style=out1LEFThighlight] (34) to (32);
	\end{pgfonlayer}
\end{tikzpicture}
   \caption{}
  \end{subfigure}
  \caption{\label{fig:examples}From left to right: a $\wten$-\ps with \er nodes highlighted, a \wcs, a non-\wcs, a \wsp, a non-\wsp.}
 \end{figure}
 
 \begin{definition}[By induction on $\prec$]
  \label{def:erasing}
  A node of a graph is \emph{\er} when it is a $\bot$, $\parr$~or~$\bullet$~node whose premises are all conclusions of \er nodes.
 \end{definition}
 
 \begin{remark}
  An erasing node of a \ps $R$ is erasing in $R^\varphi$ too, for every switching $\varphi$ of $R$.
 \end{remark}
 
 \begin{definition}
  A switching $\psi$ of a \ps $R$ is \emph{\wc} if, for every $\parr$ node $n$ of $R$ such that exactly one of its premises, say $a$, is not the conclusion of an~\er~node,~we~have~$\psi(n) = a$.
 \end{definition}
 
 \begin{remark}
  Let $R$ be a \ps and let $n$ be a $\parr$ node of $R$ with exactly one premise $a$ which is the conclusion of an \er node. If $\psi$ is a \wcs, then $a$ is a~conclusion~of~$R^\psi$.
% Let $G$ be a graph and let $n$ be a $\parr$ or $\ctpn$ node of $G$ with two premises~such~that~exactly one premise $a$ of $n$ is the conclusion of an \er node. If $\psi$ is a \wcs of $G$, then $a$ is a conclusion of $G^\psi$.
 \end{remark}
 
 \begin{definition}
  A \spath $\gamma$ of a \ps $R$ is \emph{\ws} if, for every $\parr$~node~$n$~of~$\gamma$~such that exactly one premise $a$ of $n$ is the conclusion of an \er node, we have $a \notin \gamma$.
%Let $G$ be a graph. A \spath $\gamma$ of $G$ is \emph{\ws} if, for every $\parr$ or $\ctpn$ node $n$ of $\gamma$ with two premises such that exactly one of its premises, say $a$, is~the~conclusion of an \er node, we have $a \notin \gamma$.
 \end{definition}
 
 \begin{remark}
  \label{rmk:w-switching-erasing}
  Let $R$ be a \ps, let $a$ be the conclusion of an \er node and a premise of a $\parr$ node $n$ of $R$, and let $\gamma$ be a \wsp of $R$. If $a \in \gamma$, then $n$ is \er.
 \end{remark}
 
 \begin{remark}
  \label{rmk:w-switching-paths}
  $\gamma$ path of a \ps $R$ is \ws iff $\gamma$ is~a~path~of~$R^\psi$~for~some~\wc~$\psi$~of~$R$.
 \end{remark}
 
 \newcommand{\Thrs}{Threads\xspace}
 
 \begin{definition}
  \label{def:thread}
  An \emph{\thr}, or
  \renewcommand{\thr}{thread\xspace}%
  \emph{\thr} for short,~is~a~(connected)~graph~$G$~with~exactly one $\bot$ node and such that every other node of $G$ has exactly one premise.
 \end{definition}
 
 \renewcommand{\thr}{thread\xspace}
 \newcommand{\thrs}{threads\xspace}
 \newcommand{\Thr}{Thread\xspace}
 
 \begin{example}
  In \Cref{subfig:example-w-compatible}, two connected components are threads: the one which contains~a terminal $\bot$ node, and the one containing a $\bot$ node and a terminal $\parr$ node.
 \end{example}
 
 \begin{remark}
  \label{rmk:thread}
  Let $R$ be a \ps and let $\varphi$ be a switching of $R$. Then every \er node of $R$ has at most one premise in $R^\varphi$. In particular, if $\varphi$ is \wc, then, for every $C \in \cc(R^\varphi)$,~$C$~is~a \thr if and only if $C$ contains only \er nodes of $R$ and an (\er)~$\bullet$~node of~$R^\varphi$.
 \end{remark}
 
 \begin{definition}
  \label{def:cw-forall}
  Let $R$ be a \ps and let $\varphi$ be a switching of $R$. We write:
  \begin{itemize}
   \item
    $R^\varphi \models \Cwf$ if any connected component of $R^\varphi$ either has no \er node of $R$ or is a~\thr;
   \item
    $R \models \Cwf$ if, for every \wcs $\psi$ of $R$, we have $R^\psi \models \Cwf$.
  \end{itemize}
 \end{definition}
 
 \begin{example}
  \label{example:cw-forall}
  Let $R$ be the \ps in \Cref{subfig:example-erasing}, and let $\psi$ (resp.~$\varphi$) be the switching~of~$R$~which induces the graph in \Cref{subfig:example-w-compatible} (resp.~\ref{subfig:example-not-w-compatible}). Then $R^\psi \models \Cwf$, but $R^\varphi \models \Cwf$ does not hold: the connected component which contains the $\cutpn$ node is not a thread, and it contains a $\bot$ node (\er). However, $R \models \Cwf$: $\varphi$ is \emph{not} a \wcs! We will~see~that~the~condition $R \models \Cwf$ is equivalent to $R$ being a $\wten$-\ps (\Cref{def:wten,lemma:cw-forall-tensor}).
 \end{example}
 
 \begin{remark}
  \label{rmk:cw-one}
  For $R$ \ps, $\varphi$ switching of $R$ s.t.~$R^\varphi \models \Cwf$: $R^\varphi \models \Cw$ if and only if exactly one connected component of $R^\varphi$ has no \er node of $R$ (thus,~the~other~components~are~threads).
 \end{remark}
 
 The proof of Lemmas~\ref{lemma:cw-forall},~\ref{lemma:cw-forall-tensor} and~\ref{lemma:wten-connected} can be found in \Cref{subsec:cw-forall}.
 
 \begin{lemma}[restate = cwForall, name = ]
  \label{lemma:cw-forall}
  Let $R$ be a \ps. We have $R \models \Cwf$ iff every \wsp of $R$~starting~from~an \er node $n$ and having the conclusion of $n$ as its first arc crosses only \er nodes.
 \end{lemma}

 \begin{definition}
  \label{def:wten}
  A \ps $R$ is a \emph{$\wten$-\ps} if, for every $\cutpn$ or $\otimes$ node $n$ of $R$, no~premise~of~$n$~is the conclusion of an \er node.
 \end{definition}
 
% The proof of the two following results can be found in \Cref{subsec:cw-forall}.
 
 \begin{lemma}[restate = cwForallTensor, name = ]
  \label{lemma:cw-forall-tensor}
  Let $R$ be a \ps. The following are equivalent:
  \begin{enumerate}[(i)]
   \item \label{itm:wten-graph}
    $R$ is a $\wten$-\ps;
   \item \label{itm:cw-forall}
    $R \models \Cwf$;
   \item \label{itm:cw-forall-existential}
    There exists a \wcs $\psi$ of $R$ such that $R^\psi \models \Cwf$.
  \end{enumerate}
 \end{lemma}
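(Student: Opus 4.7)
The plan is to establish the cycle $(i) \Rightarrow (ii) \Rightarrow (iii) \Rightarrow (i)$. The implication $(ii) \Rightarrow (iii)$ follows at once from the remark that at least one $w$-compatible switching of $G$ exists: for every par or contraction node with two premises exactly one of which is the conclusion of an erasing node, the switching is forced to pick the non-erasing premise, while any choice is legal elsewhere; (ii) applied to such a $\psi$ yields (iii).

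For $(iii) \Rightarrow (i)$, I would argue by contrapositive. Suppose $G$ is not a \wten-graph, and fix a cut or tensor node $n$ of $G$ with a premise $a$ whose tail $m$ is erasing. For any $w$-compatible switching $\psi$, the edge underlying $a$ is untouched by $\psi$, since switchings only modify the heads of premises of par or contraction nodes, so $m$ and $n$ lie in the same connected component $C$ of $G^\psi$. Now $n$ still has two premises in $G^\psi$ and is not erasing (cuts and tensors never are), so by \Cref{rmk:thread} the component $C$ cannot be a thread; since $C$ does contain the erasing node $m$, this shows that $G^\psi \not\models \mathsf{C}_w^\forall$ for every $w$-compatible $\psi$.

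The heart of the argument is $(i) \Rightarrow (ii)$. Assuming $G$ is a \wten-graph, by \Cref{lemma:cw-forall} it suffices to show that every $w$-switching path $\gamma = v_0 \, e_1 \, v_1 \, \dots \, e_k \, v_k$ starting at an erasing node $v_0$ with the conclusion of $v_0$ as its first edge crosses only erasing nodes. I would proceed by induction on $k$, splitting the inductive step on the orientation of the last edge $e_k$. If $e_k$ is a conclusion of $v_{k-1}$, and hence a premise of $v_k$, then the \wten hypothesis rules out $v_k$ being a cut or tensor; nodes without premises are impossible; if $v_k$ is a dereliction, auxiliary or $\bullet$ node, it is erasing by definition since its premise comes from the erasing $v_{k-1}$; and if $v_k$ is a par or contraction node, the $w$-switching discipline forbids $e_k$ from belonging to $\gamma$ whenever exactly one premise of $v_k$ is a conclusion of an erasing node, so $e_k \in \gamma$ forces both premises of $v_k$ to be conclusions of erasing nodes, making $v_k$ itself erasing. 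If instead $e_k$ is a premise of $v_{k-1}$, then, since $v_{k-1}$ is erasing with $e_k$ as one of its premises, the inductive definition of erasing nodes directly forces the tail $v_k$ of $e_k$ to be erasing.

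The main obstacle lies in the inductive step of $(i) \Rightarrow (ii)$, in particular when $\gamma$ descends into a par or contraction node: the geometric \wten restriction alone covers only tensor and cut nodes, so keeping the path within the erasing region must be ensured by the interplay between the definition of erasing and the local $w$-switching discipline, with careful attention to the number of premises of each crossed node. It is exactly \Cref{lemma:cw-forall} that makes the argument manageable, by reducing the global quantification over $w$-compatible switchings to a purely local propagation statement along paths.
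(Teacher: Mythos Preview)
Your argument is correct and follows the same route as the paper: the same cycle of implications, with $(iii)\Rightarrow(i)$ by contrapositive via \Cref{rmk:thread}, and $(i)\Rightarrow(ii)$ via \Cref{lemma:cw-forall} reduced to an induction on the length of a $w$-switching path. The only stylistic difference is that your induction peels off the \emph{last} edge of the path (splitting on its orientation, your Case~2 being in fact vacuous once all earlier vertices are known to be erasing), whereas the paper peels off the \emph{first} edge and observes that such a path necessarily descends one conclusion at a time.
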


 \begin{lemma}[restate = wtenConnected, name = ]
  \label{lemma:wten-connected}
  Let $R$ be a $\wten$-\ps such that $R \models \Cw$. Then:
  \begin{enumerate}[(i)]
   \item \label{itm:existence-non-erasing}
    There exists a non-\er node of $R$;
   \item \label{itm:erasing-component}
    If $R$ contains a $\cutpn$ or $\otimes$ node $n$ and $R_0$ is a connected component~of~$R$~which~does~not contain $n$, then every node of $R_0$ is \er;
   \item \label{itm:connectivity}
    If $R$ has no terminal $\bot$ or $\parr$ node, then $R$ is a connected graph.
  \end{enumerate}
 \end{lemma}
 
 We now prove our main result: in order for a $\wten$-\ps $R$ to be \seq, it is sufficient that every \sg of $R$ is acyclic and that the number of its connected components~is exactly one more than the number of $\bot$ nodes.
 
 \begin{theorem}[restate = untypedSequentiality, name = ]
  \label{thm:untyped-sequentiality}
  Let $R$ be a $\wten$-\ps. If $R \models \ACCw$, then $R$ is \seq.
 \end{theorem}
 
 \begin{proof}
  By induction on $\sharp \ar(R)$. We provide here the details only of the key case (full proof in \Cref{subsec:cw-forall}). If $R$ has a terminal $\bot$ or $\parr$ node, the result is a straightforward application of the induction hypothesis. Otherwise, $R$ is connected (\Cref{lemma:wten-connected}-\ref{itm:connectivity}). If $R$~has~no~terminal~$\cutpn$ or $\otimes$ node, then $R$ is a terminal $\axpn$ or $\onepn$ node, and one easily concludes. 
  
  The crucial case is when $R$ has a terminal $\cutpn$ or $\otimes$ node. By \Cref{lemma:splitting}, there exist a $\cutpn$ or $\otimes$ node $n$ and \pss $R_1, R_2$ such that $n$ splits $R$ into $R_1$ and $R_2$. Moreover, since $R$ is connected, $R_1, R_2$ are uniquely determined. It is sufficient to prove that $R_1, R_2$ are \seq: then $R$ is $n$-\seq. We only prove that $R_1$ is \seq: $R_2$ is \seq by an analogous argument. Since $\smash{\sharp \ar(R_1) \leq \sharp \ar(R) - 1 < \sharp \ar(R)}$, we can conclude by applying~the~induction~hypothesis on $R_1$. It is easy to check that $R_1 \models \AC$, and $R_1$ is a $\wten$-\ps~because $R$ is.
  
  The only thing left to prove is that $R_1 \models \Cw$. By \Cref{lemma:cw-forall-tensor}, $R \models \Cwf$ and $R_1 \models \Cwf$. Let $\psi_1$ be a \wcs of $R_1$, and let $\psi$ be any extension of $\psi_1$ to a \wcs of $R$. Then $R^\psi \models \Cwf$ and $R_1^{\psi_1} \models \Cwf$. Since $R^\psi \models \Cwf$ and $R^\psi \models \Cw$, by \Cref{rmk:cw-one} exactly one connected component $C_{\neg \w}$ of $R^\psi$ has no \er node of $R$. And since $\cutpn$ and $\otimes$ nodes are never \er, $n$ is a node of $C_{\neg \w}$. Then the connected component $C_{\neg \w}'$ of $R_1^{\psi_1}$ which contains a premise of $n$ has no \er node of $R$. Also, for any~$C_1 \in \cc(R_1^{\psi_1})$~with~$C_1 \neq C_{\neg \w}'$, $C_1$ is actually a connected component of $R^\psi$, and $C_1 \neq C_{\neg \w}$. Since $R^\psi \models \Cwf$, $C_1$ is a \thr, so $C_1$ has a $\bot$ node. Thus,~$C_{\neg \w}'$~is~the~only~connected~component~of~$R_1^{\psi_1}$~with~no~\er~node~of~$R$. Since $R_1^{\psi_1} \models \Cwf$, by \Cref{rmk:cw-one} we have $R_1^{\psi_1} \models \Cw$. Then~$R_1 \models \Cw$ (\Cref{prop:acyclicity-cc}-\ref{itm:acyclicity-cc-existential}).
 \end{proof}
 
 \begin{remark}
  \label{remark:AppendixRegnier}
  In Appendix B of~\cite{regnier1992lambda}, the author proves a sequentialization result in the typed framework of $\mllu$ for every $\wten$-\ps $R$ such that $R \models \ACCw$: he actually states the result for the fragment $\btenlls$ (see also \Cref{rem:CCforMLLu,def:btenlls}). The proof given in~\cite{regnier1992lambda} is a bit sketchy and uses a technique different from ours (it relies on the mutilation of some $\parr$ node); it might be the case that one could adapt that proof to the (more~general)~untyped~framework considered in \Cref{thm:untyped-sequentiality}, which would yield~an~alternative~proof~of~the~result.
 \end{remark}
 
 \begin{remark}
  The \ps in \Cref{subfig:example-erasing} satisfies the hypotheses of \Cref{thm:untyped-sequentiality} and is \seq. The \pss in \Cref{fig:counterexamples} are not \seq: they satisfy $\ACCw$, but they are not $\wten$-\pss.
 \end{remark}
 
 \begin{convention}
  In view of \Cref{prop:sequentiality-cut-elimination,prop:correctness-criterion}, \seqc proofs and \pss in the following sections are assumed to be \cf, unless explicitly stated otherwise.
 \end{convention}
 
% \begin{corollary}
%  Let $R$ be an untyped $\wten$-proof-structure. If $R \models \mathsf{ACC}_{\sharp w}$, the normal form of $R$ is sequential.
% \end{corollary}
% 
% \begin{proof}
%  The result follows immediately from Lemma~\ref{theorem:stability-cut-elimination} and Corollary~\ref{thm:untyped-sequentiality}.
% \end{proof}
 
 \section{The fragment \texorpdfstring{$\btenll$}{(NBT)LL}}
  \label{sec:btenll}
   We present a fragment of $\mllu$ which satisfies the hypothesis of \Cref{prop:correctness-criterion} and prove two notable properties: 1) a refined version of the sequentialization process, allowing us to assign a ``canonical'' position to $\bot$ rules (\Cref{thm:untyped-sequentialityBottomTens}); 2) the equivalence on sequent calculus proofs is ``easy'' to decide: just check that they have the same desequentialization (\Cref{cor:equivalenceFeasible}). There are two kinds of formulae in it, written $A$ and $E$: formulae that can be the type of a conclusion of an \er node (of kind $E$) are forbidden~from~being~types~of~premises~of~$\otimes$~nodes.
 
 %We present a fragment of $\mllu$ which satisfies the hypothesis of \Cref{prop:correctness-criterion}. There are~two kinds of formulae in it, denoted by $A$ and $E$: formulae that can be the type of a~conclusion~of~an \er node (of kind $E$) are forbidden~from~being~types~of~premises~of~$\otimes$~nodes.
 
 % We present some concrete fragments of $\mell$ satisfying the hypothesis of \Cref{prop:correctness-criterion}. In each of these fragments there are two kinds of formulae, which will be denoted by $A$ and $E$ (possibly with indexes). Therefore, in this section we have three kinds of formulae: formulae of kind $A$, formulae of kind $E$, and general formulae which can be of any kind~and~are~denoted~by $F, G, H$. The idea of the following definition is to restrain the usage of formulae of $\mell$ that can be the type of a conclusion of an \er node: we forbid any such formula, which~will~be~of kind $E$, from being the type of a premise of a $\otimes$ node.
 
 \begin{definition}
  \label{def:btenll}
  The fragment $\btenll$ of $\mllu$ is defined by the following grammar:
  \[
   A \Coloneqq X \mid \one \mid A \otimes A \mid A \parr A \mid A \parr E \mid E \parr A \qquad E \Coloneqq \bot \mid E \parr E
  \]
 \end{definition}
 
% \begin{definition}
% 	\label{def:tenll}
% 	The fragment $\wntenlls$ of $\mell$ is defined by the following grammar: \small
% 	\[
% 	A \Coloneqq X \mid \one \mid A \otimes A \mid A \parr A \mid A \parr E \mid E \parr A \mid{} !A \mid{} !E \qquad E \Coloneqq \bot \mid E \parr E \mid{} ?A \mid{} ?E
% 	\]
% \end{definition} \normalsize
% 
% \begin{remark}
% 	If $F$ is a formula of $\wntenlls$, then $F^\perp$ may not be a formula of $\wntenlls$.
% \end{remark}
% 
% In $\wntenlls$ a generalization of a crucial property of polarized $\linl$ holds: if a provable sequent $\Gamma$ contains a positive formula, then every other formula of $\Gamma$ is negative (\Cref{cor:polarized-context}).
 
% \begin{proposition}[restate = tenll, name = ]
% 	\label{prop:tenll}
% 	\quad
% 	\begin{enumerate}[(i)]
% 		\item \label{itm:tenll-context}
% 		Let $\Sigma$ be a sequent of $\wntenlls$ and let $E$ be a formula of $\wntenlls$ s.t.~$E^\perp$ is a formula of $\wntenlls$ too. If $\Sigma, E^\perp$ is provable, then $\Sigma = E_1, \dots, E_k$ for some $k \geq 0$;
% 		\item \label{itm:tenll-context-par}
% 		Let $\Sigma$ be a sequent of $\wntenlls$, let $E, F$ be formulae of $\wntenlls$, let $H \in \{E^\perp \parr F, F \parr E^\perp\}$. If the sequent $\Sigma, H$ is provable, then $\Sigma = E_1, \dots, E_k$ for some $k \geq 0$;
% 		\item \label{itm:tenll-not-provable}
% 		Let $E_1, E_2$ be formulae of $\wntenlls$. Then $\smash{\Gamma, E_1^\perp \parr E_2^\perp}$ is \emph{not} provable in $\wntenlls$, for every sequent $\Gamma$ of $\wntenlls$.
% 	\end{enumerate}
% \end{proposition}
 
 \begin{remark}
  \label{rmk:erasing}
  \hspace{-1em}
  \begin{minipage}{0.42\textwidth}
   \begin{enumerate}[(i)]
    \item
     $\mll$ is a fragment of $\btenll$.
   \end{enumerate}    
  \end{minipage}
  \begin{minipage}{0.43\textwidth}
   \begin{enumerate}[(i)]
    \setcounter{enumi}{1}
    \item \label{itm:wten-proof-structure}
     Any \ps of $\btenll$ is a $\wten$-\ps.
   \end{enumerate} 
  \end{minipage}

  %More precisely, for every formula $F$ of $\mll$, there exists a formula $A$ of $\btenll$ such that $F = A$.
  %$\mll$ is a fragment of $\btenll$: for $F$ $\mll$ formula: there is $A$ $\btenll$ formula s.t.~$F = A$ or $F = A^\perp$.
 \end{remark}
 
% \begin{remark}
%  \label{rmk:erasing}
%  Let $R$ be a ps of $\btenll$.
%  \begin{enumerate}[(i)]
%   \item \label{itm:conclusion-erasing-type}
%    If $n$ is an \er node of $R$, then the conclusion of $n$ has type $E$;
%   \item 
%    $R$ is a $\wten$-\ps. Indeed, every premise of a~$\otimes$~node~of~$R$~has~type~$A$ and thus, by \Cref{itm:conclusion-erasing-type}, it is not the conclusion of an \er node.
%  \end{enumerate}
% \end{remark}
 
 By \Cref{thm:untyped-sequentiality,rmk:erasing} (and \ref{rmk:sequential}), $\ACCw$ is a correctness criterion for $\btenll$:
 
 \begin{corollary}
  \label{cor:sequentialization}
  For any \ps $R$ of $\btenll$: $R \models \ACCw$ $\Leftrightarrow$ $R$ is sequentializable in $\btenll$.
 \end{corollary}
 
 \begin{remark}
  The two \pss in \Cref{fig:counterexamples} are \emph{not} \pss of $\btenll$. Indeed, neither $\smash{(\one \parr \one) \otimes \bot}$ or $\smash{(X^\perp \parr \one) \otimes \bot}$ is a formula of $\btenll$. Therefore, \Cref{cor:sequentialization}~is~consistent~with~\Cref{rmk:sequentializable}.
 \end{remark}
 
 \begin{remark}
  \label{rem:CCforMLLu}
  \Cref{rmk:sequentializable} entails that $\ACCw$ is not enough to sequentialize all the \pss of $\mllu$. More deeply, no feasible correctness criterion can ever be found for $\mllu$ due to the complexity result in~\cite{LincolnW94}. Therefore, $\btenll$ is a fragment of $\mllu$ that contains the units and for which a feasible correctness criterion exists, as~it~was~already~noticed~in~the~restricted~framework of $\btenlls$ (\Cref{def:btenlls}) in Appendix B of~\cite{regnier1992lambda}.
 \end{remark}
 
% \begin{figure}
%  \centering
%  %
%  \AxiomC{\infcstrut$\vdash \Gamma$}
%  \RightLabel{\scriptsize$\bot$}
%  \UnaryInfC{\infcstrut$\vdash \Gamma, \bot_1$}
%  \RightLabel{\scriptsize$\bot$}
%  \UnaryInfC{\infcstrut$\vdash \Gamma, \bot_1, \bot_2$}
%  \DisplayProof
%  $\equivseqc$\hspace{-0.625em}
%  \AxiomC{\infcstrut$\vdash \Gamma$}
%  \RightLabel{\scriptsize$\bot$}
%  \UnaryInfC{\infcstrut$\vdash \Gamma, \bot_2$}
%  \RightLabel{\scriptsize$\bot$}
%  \UnaryInfC{\infcstrut$\vdash \Gamma, \bot_2, \bot_1$}
%  \RightLabel{\scriptsize\exsc}
%  \UnaryInfC{\infcstrut$\vdash \Gamma, \bot_1, \bot_2$}
%  \DisplayProof
%  %
%  \quad
%  %
%  \AxiomC{\infcstrut$\vdash \Gamma, A, B$}
%  \RightLabel{\scriptsize$\parr$}
%  \UnaryInfC{\infcstrut$\Gamma, A \parr B$}
%  \RightLabel{\scriptsize$\bot$}
%  \UnaryInfC{\infcstrut$\Gamma, A \parr B, \bot$}
%  \DisplayProof
%  $\equivseqc$\hspace{-1.25em}
%  \AxiomC{\infcstrut$\vdash \Gamma, A, B$}
%  \RightLabel{\scriptsize$\bot$}
%  \UnaryInfC{\infcstrut$\Gamma, A, B, \bot$}
%  \RightLabel{\scriptsize\exsc}
%  \UnaryInfC{\infcstrut$\Gamma, A, \bot, B$}
%  \RightLabel{\scriptsize\exsc}
%  \UnaryInfC{\infcstrut$\Gamma, \bot, A, B$}
%  \RightLabel{\scriptsize$\parr$}
%  \UnaryInfC{\infcstrut$\Gamma, \bot, A \parr B$}
%  \RightLabel{\scriptsize\exsc}
%  \UnaryInfC{\infcstrut$\Gamma, A \parr B, \bot$}
%  \DisplayProof
%  %
%  \caption{\label{fig:rule-permutations}Rule permutations.}
% \end{figure}
 
Until the end of this section: we consider general graphs (not only \llgs); a \ps $R$ comes with a \emph{partial} jump function $J_R$ from $\w(R)$ to $\ve(R)$, with domain $\dom(J_R)$; if $\varphi$ is a switching of $R$, $R^\varphi$ denotes the \sg of $R$ induced by $\varphi$ where, for every $n \in \dom(J_R)$, we add an arc (\emph{jump}) from $n$ to $J_R(n)$. We generalize $\Cw$ (and $\ACCw$): $R^\varphi \models \Cw$ if $\smash{\sharp \cc(R^\varphi) = \sharp \w(R) - \sharp \dom(J_R) + 1}$, and $R \models \Cw$~if~this~holds~for~every~switching~$\varphi$~of~$R$.
 
 \begin{definition}
 \label{def:Rjumps}
  A \ps $R$ is \emph{\jt} if $\dom(J_R) = \w(R)$, \emph{\jf} if $\dom(J_R)$ is empty, \emph{\jc} if it is \jt and $R \models \ACC$. For $n \in \dom(J_R)$, we denote by $R_{\widecheck{n}}$ the \ps obtained from $R$ by restricting the domain of $J_R$ as follows: $\smash{\dom(J_{R_{\widecheck{n}}}) \coloneq \dom(J_R) \setminus \{n\}}$. The \emph{underlying \jf \ps of $R$} is $\smash{\ujf{R} \coloneq (R_{\widecheck{n_1}} \cdots)_{\widecheck{n_k}}}$, where $\dom(J_R) = \{n_1, \dots, n_k\}$.
 \end{definition}
 
 \begin{remark}
  \label{rmk:jump}
  Let $R$ be a \ps.
  \begin{enumerate}[(i)]
   \item \label{itm:acc-jump}
 	If $R$ is \jt, then $R \models \ACC$ if and only if $R \models \ACCw$.
   \item \label{itm:removing-jump}
    If $R \models \ACCw$ then, for $n \in \dom(J_R)$, $R_{\widecheck{n}} \models \ACCw$ (\Cref{prop:acyclicity-cc}), thus $\ujf{R} \models \ACCw$.
%    For $n \in \dom(J_R)$: $R \models \ACCw \Rightarrow R_{\widecheck{n}} \models \ACCw$ (\Cref{prop:acyclicity-cc}). Thus, $\ujf{R} \models \ACCw$.
  \end{enumerate}
 \end{remark}

% \begin{remark}
%  \label{rmk:jump}
%  Let $R$ be a \ps of $\mllu$.
%  \begin{enumerate}[(i)]
%   \item \label{itm:acc-jump}
%    If $R$ is \jt, then $R \models \ACC$ if and only if $R \models \ACCw$;
%   \item \label{itm:removing-jump}
%    For every $n \in \dom(J_R)$, if $R \models \ACCw$, then $R_{\widecheck{n}} \models \ACCw$ (by \Cref{itm:acyclicity-cc-number} of \Cref{prop:acyclicity-cc});
%   \item \label{itm:underlying-jump-free}
%    The \ps $\ujf{R}$ is \jf and, if $R \models \ACCw$, then $\ujf{R} \models \ACCw$ (by \Cref{itm:removing-jump}).
%  \end{enumerate}
% \end{remark}
 
 There are several ways of adding jumps to a non-\jt \ps. In general, the condition $\ACCw$ is not preserved. There is however a ``canonical~way''~that~preserves~$\ACCw$~in~$\btenll$.
 
 \begin{definition}
  Let $R$ be a \ps of $\btenll$. We split $\w(R)$ in two: $\wparr(R) \coloneq \{n : \text{$n \in \w(R)$}$ $\text{and there exists a non-\er $\parr$ node $n_0$ of $R$ such that $n \prec n_0$}\}$, $\smash{\we(R) \coloneq \w(R) \setminus \wparr(R)}$.
 \end{definition}
 
 \begin{definition}[\Cref{fig:canonical-jumps-btenll}]
  \label{def:canonical-jump}
  For $R$ \ps of $\btenll$, $\smash{n \in \w(R) \setminus \dom(J_R)}$ and $m \in \ve(R)$,~we denote by $R_n^m$ the \ps obtained by extending $J_R$ as follows: $J_{R_n^m}(n) \coloneq m$. For every $n \in \wparr(R)$, we set $R_n \coloneq R_n^{n'}$, where $n'$ is the non-\er node whose conclusion~is~a~premise~of~the~least (\Cref{rmk:prec}) non-erasing $\parr$ node $n_0$ of $R$ s.t.~$n \prec n_0$. Finally, we set $\smash{\can{R} \coloneq (\ujf{R}_{n_1} \cdots)_{n_k}}$, where $\wparr(R) = \{n_1, \dots, n_k\}$,\footnote{It is obvious that $\can{R}$ is well-defined: for all $\smash{n_1, n_2 \in \w(R) \setminus \dom(J_R)}$, we have that $(R_{n_1})_{n_2} = (R_{n_2})_{n_1}$.} and $\canext{R}{m} \coloneq ((\can{R})_{m_1}^m \cdots)_{m_h}^m$, where $\we(R) = \{m_1, \dots, m_h\}$.
  
  \begin{figure}
   \centering
   \begin{subfigure}{0.49\textwidth}
    \centering
    \begin{tikzpicture}
	\begin{pgfonlayer}{nodelayer}
		\node [style=DOTagent] (4) at (0, -2.3) {};
		\node [style=agent] (13) at (1, -0.05) {\scriptsize$\smash{\bot{\parr}\bot}\vphantom{(}$};
		\node [style=agent] (14) at (-1, -0.05) {\scriptsize$\smash{X^\perp{\parr}(X{\otimes}\boldsymbol{1})}\vphantom{(}$};
		\node [style=agent] (15) at (0, -1.8) {\scriptsize$\smash{A}\vphantom{X}$};
		\node [style=parrwLink] (23) at (0, -1.05) {};
		\node [style=parrwLink] (33) at (-1, 0.7) {};
		\node [style=agent] (35) at (-1.5, 1.2) {\scriptsize$\smash{X^\perp}\vphantom{X}$};
		\node [style=agent] (36) at (0, 2.45) {\scriptsize$\smash{\boldsymbol{1}}\vphantom{X}$};
		\node [style=agent] (37) at (-1, 2.45) {\scriptsize$\smash{X}\vphantom{X}$};
		\node [style=tensorwLink] (38) at (-0.5, 1.95) {};
		\node [style=agent] (41) at (-0.5, 1.2) {\scriptsize$\smash{X{\otimes}\boldsymbol{1}}\vphantom{X}$};
		\node [style=parrwLink] (42) at (1, 0.7) {};
		\node [style=botwLink] (43) at (0.5, 1.95) {};
		\node [style=agent] (44) at (0.5, 1.2) {\scriptsize$\smash{\bot}\vphantom{X}$};
		\node [style=agent] (46) at (1.5, 1.2) {\scriptsize$\smash{\bot}\vphantom{X}$};
		\node [style=botwLink] (47) at (1.5, 1.95) {};
		\node [style=axwLink] (48) at (-1.375, 3.2) {};
		\node [style=onewLink] (49) at (0, 3.2) {};
		\node [style=botwLink] (50) at (-1.75, -1.05) {};
		\node [style=DOTagent] (51) at (-1.75, -2.3) {};
		\node [style=agent] (52) at (-1.75, -1.8) {\scriptsize$\smash{\bot}\vphantom{X}$};
		\node [style=none] (53) at (-2.075, -1.325) {\tiny$n_1$};
		\node [style=none] (54) at (0.875, 1.675) {\tiny$n_2$};
		\node [style=none] (55) at (1.875, 1.675) {\tiny$n_3$};
		\node [style=none] (56) at (0.275, -1.425) {\tiny$n_0$};
		\node [style=none] (57) at (-1.225, 0.825) {\tiny$n'$};
        \coordinate (58) at ($(43)-(0.45,0)$) {};
        \coordinate (59) at ($(33)+(1,0)$) {};
        \coordinate (60) at ($(33)+(0.75,0)$) {};
        \coordinate (61) at ($(33)+(-0.625,-0.3125)$) {};
        \node [draw=none, anchor=north, trapezium, trapezium angle=110, rounded corners=2pt, inner sep=1.2pt, font={$\scriptstyle\phantom{\bot}$}] (62) at (-1.7, -1.05) {};
        \coordinate (63) at ($(48)+(-0.4,0.35)$) {};
        \node [draw=none, anchor=north, trapezium, trapezium angle=110, rounded corners=2pt, inner sep=1.2pt, font={$\scriptstyle\phantom{\bot}$}] (64) at (-1.8, -1.05) {};
        \coordinate (65) at ($(63)+(-0.19,0.19)$) {};
	\end{pgfonlayer}
	\begin{pgfonlayer}{edgelayer}
		\draw [style=Dsimple] (15) to (4);
		\draw [style=simple] (23) to (15);
		\draw [style=simple] (33) to (14);
		\draw [style=DinLEFT] (35) to (33);
		\draw [style=DinLEFT] (37) to (38);
		\draw [style=DinRIGHT] (36) to (38);
		\draw [style=simple] (38) to (41);
		\draw [style=DinRIGHT] (41) to (33);
		\draw [style=simple] (43) to (44);
		\draw [style=DinLEFT] (44) to (42);
		\draw [style=DinRIGHT] (46) to (42);
		\draw [style=simple] (42) to (13);
		\draw [style=simple] (47) to (46);
		\draw [style=outRIGHT] (48) to (37);
		\draw [style=outLEFT, looseness = 0.625] (48) to (35);
		\draw [style=simple] (49) to (36);
		\draw [style=Dsimple] (52) to (51);
		\draw [style=simple] (50) to (52);
		\draw [style=DinLEFT] (14) to (23);
		\draw [style=DinRIGHT] (13) to (23);
        
        \draw [draw=red, -, draw opacity=0, -latex, line width=0.3pt, postaction={draw, opacity=1, -, line width=1.5pt, shorten >=3pt, shorten <=-1pt}] (43) to [in=60, out=210] (59) to [in=300, out=240] (33);
        \draw [draw=red, -, draw opacity=0, -latex, line width=0.3pt, postaction={draw, opacity=1, -, line width=1.5pt, shorten >=3pt, shorten <=-1pt}] (47) to [in=60, out=150] (58) to [in=60, out=240] (60) to [in=330, out=240] (33);
        \draw [draw=red, dashed, draw opacity=0, -latex, line width=0.3pt, postaction={draw, opacity=1, -, line width=1.5pt, shorten >=3pt, shorten <=-1pt}, in=210, out=330] (50) to (23);
        \draw [draw=red, dashed, draw opacity=0, -latex, line width=0.3pt, postaction={draw, opacity=1, -, line width=1.5pt, shorten >=3pt, shorten <=-1pt}] (62) to [in=195, out=105, looseness=1.3] (33);
        \draw [draw=red, dashed, draw opacity=0, -latex, line width=0.3pt, postaction={draw, opacity=1, -, line width=1.5pt, shorten >=3pt, shorten <=-1pt}] (50) to [in=165, out=120, looseness=0.6] (48);
        \draw [draw=red, dashed, draw opacity=0, -latex, line width=0.3pt, postaction={draw, opacity=1, -, line width=1.5pt, shorten >=3pt, shorten <=-1pt}] (64) to [in=180, out=135, looseness=0.6] (63) to [in=90, out=0, looseness=1.2] (38);
        \draw [draw=red, dashed, draw opacity=0, -latex, line width=0.3pt, postaction={draw, opacity=1, -, line width=1.5pt, shorten >=3pt, shorten <=-1pt}] (64) to [in=180, out=135, looseness=0.6] (63) to [in=90, out=0, looseness=1.2] (38);
        \draw [draw=red, dashed, draw opacity=0, -latex, line width=0.3pt, postaction={draw, opacity=1, -, line width=1.5pt, shorten >=3pt, shorten <=-1pt}] (50) to [in=190, out=160, looseness=0.6] (65) to [in=150, out=10] (49);
	\end{pgfonlayer}
\end{tikzpicture}
    \caption{\label{fig:canonical-jumps-btenll}$\btenll$, $\smash{A \coloneq (X^\perp \parr (X \otimes \one)) \parr (\bot \parr \bot)}$}
   \end{subfigure}
   \begin{subfigure}{0.49\textwidth}
    \centering
    \begin{tikzpicture}
	\begin{pgfonlayer}{nodelayer}
		\node [style=DOTagent] (4) at (0, -2.3) {};
		\node [style=agent] (12) at (1.5, 1.2) {\scriptsize$\smash{\boldsymbol{1}}\vphantom{X}$};
		\node [style=agent] (13) at (1, -0.05) {\scriptsize$\smash{\bot{\otimes}\boldsymbol{1}}\vphantom{(}$};
		\node [style=agent] (14) at (-1, -0.05) {\scriptsize$\smash{\bot{\parr}(\boldsymbol{1}{\otimes}\boldsymbol{1})}\vphantom{(}$};
		\node [style=agent] (15) at (0, -1.8) {\scriptsize$\smash{O}\vphantom{X}$};
		\node [style=agent] (19) at (0.5, 1.2) {\scriptsize$\smash{\bot}\vphantom{X}$};
		\node [style=parrwLink] (23) at (0, -1.05) {};
		\node [style=tensorwLink] (25) at (1, 0.7) {};
		\node [style=botwLink] (26) at (0.5, 1.95) {};
		\node [style=onewLink] (32) at (1.5, 1.95) {};
		\node [style=parrwLink] (33) at (-1, 0.7) {};
		\node [style=botwLink] (34) at (-1.5, 1.95) {};
		\node [style=agent] (35) at (-1.5, 1.2) {\scriptsize$\smash{\bot}\vphantom{X}$};
		\node [style=agent] (36) at (0, 2.45) {\scriptsize$\smash{\boldsymbol{1}}\vphantom{X}$};
		\node [style=agent] (37) at (-1, 2.45) {\scriptsize$\smash{\boldsymbol{1}}\vphantom{X}$};
		\node [style=tensorwLink] (38) at (-0.5, 1.95) {};
		\node [style=onewLink] (39) at (-1, 3.2) {};
		\node [style=agent] (41) at (-0.5, 1.2) {\scriptsize$\smash{\boldsymbol{1}{\otimes}\boldsymbol{1}}\vphantom{X}$};
		\node [style=onewLink] (42) at (0, 3.2) {};
		\node [style=botwLink] (43) at (-1.75, -1.05) {};
		\node [style=DOTagent] (44) at (-1.75, -2.3) {};
		\node [style=agent] (45) at (-1.75, -1.8) {\scriptsize$\smash{\bot}\vphantom{X}$};
		\node [style=none] (46) at (-2.075, -1.325) {\tiny$n_1$};
		\node [style=none] (47) at (-1.85, 1.675) {\tiny$n_2$};
		\node [style=none] (48) at (0.875, 1.675) {\tiny$n_3$};
		\node [style=none] (49) at (0.225, -1.4) {\tiny$m$};
		\node [style=none] (50) at (-0.325, 2.05) {\tiny$m_\otimes$};
        \node [draw=none, trapezium, trapezium angle=110, rounded corners=2pt, inner sep=1.2pt, font={$\scriptstyle\phantom{\varotimes}$}] (51) at ($(38) - (0.125,0)$) {};
        \coordinate (52) at ($(39)+(0,0.3)$) {};
        \coordinate (53) at ($(39)+(-0.05,0.3)$) {};
        \coordinate (54) at ($(51)+(0,0.3)$) {};
        \coordinate (55) at ($(43)+(0,2.125)$) {};
	\end{pgfonlayer}
	\begin{pgfonlayer}{edgelayer}
		\draw [style=Dsimple] (15) to (4);
		\draw [style=simple] (23) to (15);
		\draw [style=simple] (25) to (13);
		\draw [style=DinLEFT] (19) to (25);
		\draw [style=DinRIGHT] (12) to (25);
		\draw [style=simple] (26) to (19);
		\draw [style=simple] (32) to (12);
		\draw [style=simple] (33) to (14);
		\draw [style=simple] (34) to (35);
		\draw [style=DinLEFT] (35) to (33);
		\draw [style=DinLEFT] (37) to (38);
		\draw [style=DinRIGHT] (36) to (38);
		\draw [style=simple] (39) to (37);
		\draw [style=simple] (38) to (41);
		\draw [style=DinRIGHT] (41) to (33);
		\draw [style=simple] (42) to (36);
		\draw [style=Dsimple] (45) to (44);
		\draw [style=simple] (43) to (45);
		\draw [style=DinLEFT] (14) to (23);
		\draw [style=DinRIGHT] (13) to (23);
		\draw [draw=red, -, draw opacity=0, -latex, line width=0.3pt, postaction={draw, opacity=1, -, line width=1.5pt, shorten >=3pt, shorten <=-1pt}, in=210, out=330] (34) to (38);
		\draw [draw=red, -, draw opacity=0, -latex, line width=0.3pt, postaction={draw, opacity=1, -, line width=1.5pt, shorten >=3pt, shorten <=-1pt}, in=330, out=210] (26) to (38);
        \draw [draw=red, -, draw opacity=0, -latex, line width=0.3pt, postaction={draw, opacity=1, -, line width=1.5pt, shorten >=3pt, shorten <=-1pt}] (43) to [in=180, out=90, looseness=1.1] (53) to [in=90, out=0, looseness=1.2] (54) to [in=90, out=270] (51);
	\end{pgfonlayer}
\end{tikzpicture}
    \caption{$\icomll$, $\smash{O \coloneq (\bot \parr (\one \otimes \one)) \parr (\bot \otimes \one)}$}
   \end{subfigure}
   \caption{\label{fig:canonical-jumps}Highlighted, canonical jumps in $\btenll$ (\Cref{def:canonical-jump}) and $\icomll$ (\Cref{def:canonical-jump-intuitionistic}). In the $\btenll$ case, every dashed arc represents a possible choice of jump for $n_1$.}
  \end{figure}
 \end{definition}
 
% \begin{example}
%  Let $R$ be the unique \jf \ps with no $\axpn$ node and with exactly one conclusion of type $\smash{(1 \otimes 1) \parr \bot}$, and let $n$ be the $\bot$ node of $R$. Then $J_{R_n}(n)$ is the~$\otimes$~node~of~$R$.
% \end{example}
 
 \begin{remark}
  \label{rmk:canonical-jump}
  Let $R$ be a \ps of $\btenll$, and let $m$ be a node of $R$. Then:
  \begin{enumerate}[(i)]
   \item \label{itm:canonical-jump-total}
    $\canext{R}{m}$ is \jt;
   \item \label{itm:canonical-jump-terminal}
    If $R$ has no terminal \er node, then $\w(R) = \wparr(R)$;
   \item \label{itm:canonical-jump}
    If $\w(R) = \wparr(R)$, then $\canext{R}{m} = \can{R}$.
  \end{enumerate}
 \end{remark}
 
% The two following results are established in \Cref{subsec:equivalence-canonical-jumps} by using \Cref{itm:acyclicity-cc-number} of \Cref{prop:acyclicity-cc}.
 
% \begin{proof}
%  It is enough to prove $R_n \models \AC$ (\Cref{subsec:equivalence-canonical-jumps}) and apply \Cref{itm:acyclicity-cc-number} of \Cref{prop:acyclicity-cc}.
% \end{proof}

% From \Cref{rmk:jump,rmk:canonical-jump} and \Cref{lemma:adding-jump} we deduce the following result.
%
% \begin{corollary}
%  \label{cor:adding-jump}
%  For $R$ \ps of $\btenll$, $m$ non-\er node: $R \models \ACCw$ $\Rightarrow$ $\canext{R}{m} \models \ACC$.
% \end{corollary}
 
 With every \seqc proof $\pi$ in $\mllu$ are associated several \jt \pss: the desequentialization function $\pi \mapsto \pi^\deseq$ (see~\Cref{fig:desequentialization}) is extended to~a~binary~relation~$\deseqjump$~between \seqc proofs and \jt \pss (Definition~3.2 in~\cite{heijltjes2014no}).
% With a proof $\pi$ in $\mllu$ are associated several \jt \pss: the function $\pi \mapsto \pi^\deseq$ (\Cref{fig:desequentialization}) becomes a binary relation~$\deseqjump$~between~\seqc~proofs~and~\jt~\pss~(\cite{heijltjes2014no}).
 
 \begin{remark}
  \label{rmk:underlying-jump-free}
  Let $\pi$ be a \seqc proof in $\mllu$, and let $R$ be~a~\jt~\ps~of~$\mllu$. If $\pi \deseqjump R$, then $\ujf{R} = \pi^\deseq$.
%  For every $\pi$ proof in $\mllu$, $R$ \jt \ps of $\mllu$: if $\pi \deseqjump R$, then $\ujf{R} = \pi^\deseq$.
 \end{remark}
 
 \begin{proposition}[\cite{TdF2000,llhandbook}]
  \label{prop:acc-jump}
  Let $R$ be a \jt \ps of $\mllu$. Then $R \models \ACC$ if and only~if there exists a \seqc proof $\pi$ of $\mllu$ such that $\pi \deseqjump R$.
 \end{proposition}
 
  In $\btenll$, we refine \Cref{thm:untyped-sequentiality} (by \Cref{rmk:underlying-jump-free}): when $R \models \ACCw$, $R$ has a sequentialization which can be desequentialized to a~\jt~\ps with jumps in a ``canonical''~position.
 
 \begin{theorem}[restate = sequentializationCanonicalJump, name = ]
 \label{thm:untyped-sequentialityBottomTens}
  Let $R$ be a \jf \ps of $\btenll$ s.t.~$R \models \ACCw$, let $m$ be a non-\er node of $R$. Then there exists a \seqc proof $\pi$ in $\btenll$ such that $\pi \deseqjump \canext{R}{m}$.
 \end{theorem}
 
 \begin{proof}
  We reason by induction on $\sharp \ar(R)$. Suppose $R$ has a terminal \er node $n$. Then $n$ is either a $\bot$ or $\parr$ node. Hence, there exists a \ps $R_1$ such that $R$ has the shape~in~\Cref{subfig:desequentialization-bottom}~or~\ref{subfig:desequentialization-par}.  Since $n$ is \er and $m$ is non-\er, we have $n \neq m$, so $m$ is a node of~$R_1$. It is easy to check that $R_1$ satisfies the induction hypothesis. Therefore, there exists~a~\seqc proof $\pi_1$ in $\btenll$ such that $\pi_1 \deseqjump \canext{R_1}{m}$. Let $\pi$ be the \seqc~proof~obtained from $\pi_1$ by applying a $\bot$ or $\parr$ rule introducing the type of the conclusion~of~$n$. Then~$\pi \deseqjump \canext{R}{m}$,~because:
  \begin{itemize}
  	\item
  	If $n$ is a $\bot$ node, by \Cref{def:canonical-jump}, we have $J_{\canext{R}{m}}(n) = m$;
  	\item
  	If $n$ is a $\parr$ node, we have $\wparr(R) = \wparr(R_1)$, and thus $J_{\canext{R}{m}} = J_{\canext{R_1}{m}}$.
  \end{itemize}
  
  We can now suppose that $R$ has no terminal \er node. By \Cref{itm:canonical-jump-terminal,itm:canonical-jump} of \Cref{rmk:canonical-jump}, $\w(R) = \wparr(R)$ and $\canext{R}{m} = \can{R}$. We assume that $R$ has a terminal non-\er $\parr$ node $n$. By \Cref{def:erasing}, there exists a non-\er node $m_1$ having a~premise~of~$n$~among~its~conclusions. As before, there is a \ps $R_1$ such that $R$ is as in \Cref{subfig:desequentialization-par} and, since $m_1 \in \ve(R_1)$, we can apply the induction hypothesis on $R_1, m_1$. Hence, there is a \seqc~proof~$\pi_1$~in~$\btenll$ s.t.~$\pi_1 \deseqjump \canext{R_1}{{m_1}}$. Let $\pi$ be the \seqc proof obtained from $\pi_1$ by applying a $\parr$ rule introducing the type of the conclusion of $n$. Then $\pi \deseqjump \can{R}$. Indeed,~let~$n_1 \in \w(R_1) = \w(R)$:
  \begin{itemize}
   \item
    If $n_1 \in \wparr(R_1)$, then we immediately get $J_{\can{R}}(n_1) = J_{\canext{R_1}{{m_1}}}(n_1)$;
   \item
    If $n_1 \in \we(R_1)$, then $n$ is the least non-\er node of $R$ such that $n_1 \prec n$, and therefore:
    \[
     J_{\can{R}}(n_1) = m_1 = J_{\canext{R_1}{{m_1}}}(n_1)
    \]
  \end{itemize}
  
  Now suppose $R$ has no terminal $\bot$ or $\parr$ node. If $R$ has no terminal $\otimes$ node,~$R$~is~a~terminal $\axpn$ or $\onepn$ node, and we are done. Otherwise, it is sufficient to follow the argument in the proof of \Cref{thm:untyped-sequentiality} and apply the induction hypothesis straightforwardly (see \Cref{subsec:equivalence-canonical-jumps}).
 \end{proof}
 
% \begin{remark}
%  \label{rmk:sequentialization}
%  Canonical jumps allow us to prove sequentialization for $\btenll$ (\Cref{cor:sequentialization}) without using \Cref{thm:untyped-sequentiality}. Indeed, consider a (\cf) \jf \ps $R$ of $\btenll$~such~that $R \models \ACCw$. By \Cref{rmk:cw-one}, \Cref{lemma:cw-forall-tensor} and \Cref{itm:wten-proof-structure} of \Cref{rmk:erasing}, there is a non-\er node $m$ of $R$. Then, by \Cref{cor:adding-jump}, $\canext{R}{m} \models \ACC$. Hence, by \Cref{prop:acc-jump}, there exists a \seqc proof $\pi$ of $\mllu$ s.t.~$\canext{R}{m} = \pi^\deseqjump$. Therefore, by \Cref{rmk:underlying-jump-free}:~$R = \ujf{\canext{R}{m}} = \ujf{\pi^\deseqjump} = \pi^\deseq$.
% \end{remark}

 In $\btenll$, the equivalence relation $\equivseqc$ between \seqc proofs (precisely defined in Figure 4 of~\cite{heijltjes2014no}) is ``easy'' to decide: by \Cref{cor:equivalenceFeasible}, $\pi_1 \equivseqc \pi_2$ holds if and only if $\pi_1$ and $\pi_2$ desequentialize to the same \jf \ps. This is in sharp contrast~with~the~full~fragment~$\mllu$, for which this decision problem is proven to be $\mathit{PSPACE}$-complete in~\cite{heijltjes2014no}.
 
 \begin{definition}[\cite{heijltjes2014no}]
  \label{def:rewiring}
  \emph{Rewiring} is defined on \jc \pss: $R \rewpns R'$ if $R'$ is obtained from $R$ by redirecting exactly one jump. $\equivpns$~is the smallest equivalence relation~which~contains~$\rewpns$.
%  \emph{Rewiring} is the binary relation $\rewpns$ on the set of \jc \pss defined as follows: $R \rewpns R'$ if $R'$ is obtained from $R$ by redirecting exactly one jump. \emph{Equivalence}~$\equivpns$~is the smallest equivalence relation containing $\rewpns$.
 \end{definition}
 
% The next result follows from \Cref{rmk:jump} and \Cref{lemma:adding-jump} (see \Cref{subsec:equivalence-canonical-jumps}).
 
 \begin{proposition}[restate = equivalenceCanonicalJump, name = ]
  \label{prop:equivalence-canonical-jump}
  For $R$ \jc \ps of $\btenll$, $m$ non-\er node of $R$: $R \equivpns \canext{R}{m}$.
 \end{proposition}
 
 \begin{corollary}[restate = equivalence, name = ]
  \label{cor:equivalence}
  Let $R_1, R_2$ be \jc \pss of $\btenll$. If $\ujf{R_1} = \ujf{R_2}$, then $R_1 \equivpns R_2$.
 \end{corollary}
 
 \begin{corollary}[restate = rulePermutationsRewiring, name = ]
  \label{cor:equivalenceFeasible}
  Let $\pi_1$, $\pi_2$ be \seqc proofs in $\btenll$, let $R_1$, $R_2$ be \jt \pss of $\btenll$ s.t.~$\pi_1 \deseqjump R_1$, $\pi_2 \deseqjump R_2$. By \Cref{prop:acc-jump}, $R_1$ and $R_2$~are~\jc,~and:
  \[
   \pi_1 \sim \pi_2 \quad \underset{1}{\Leftrightarrow} \quad R_1 \equivpns R_2 \quad \underset{2}{\Leftrightarrow} \quad \pi_1^\deseq = \pi_2^\deseq
  \]
  %$\pi_1 \sim \pi_2$ $\underset{1}{\Leftrightarrow}$ $R_1 \equivpns R_2$ $\underset{2}{\Leftrightarrow}$ $\pi_1^\deseq = \pi_2^\deseq$.
  %where $(\cdot)^\circ$ is desequentialization to \jf \pss (\Cref{fig:desequentialization}).
 \end{corollary}
 
 \begin{proof}
  (1) is Proposition~3.8 in~\cite{heijltjes2014no}, (2) is a consequence of \Cref{cor:equivalence} (\Cref{subsec:equivalence-canonical-jumps}).
 \end{proof}
 
 We finally consider a variant of the fragment $\btenll$ which is closed by linear negation.
 
 \begin{definition}
  \label{def:btenlls}
  The fragment $\btenlls$ of $\mllu$ is defined by the following grammar:
  \begin{align*}
   \widebar{A} \Coloneqq{} & A \mid A^\perp & A \Coloneqq{} & X \mid \widebar{A} \parr \widebar{A} \mid \widebar{A} \parr E \mid E \parr \widebar{A} \\
   \widebar{E} \Coloneqq{} & E \mid E^\perp & E \Coloneqq{} & \bot \mid E \parr E \\
   \intertext{\indent \textup{For the sake of readability, we present the possible shapes of a formula of~kind~$A^\perp$~or~$E^\perp$:}}
   A^\perp ={} & X \mid \widebar{A} \otimes \widebar{A} \mid \widebar{A} \otimes E^\perp \mid E^\perp \otimes \widebar{A} & E^\perp ={} & \one \mid E^\perp \otimes E^\perp
  \end{align*}
 \end{definition}
 
 \begin{remark}
  While containing $\mll$, the fragment $\btenlls$ is quite small: it does not even include formulas such as $\smash{\bot \parr \one}$ (i.e.~the linear implication $\smash{\one \multimap \one}$). The more general $\btenll$ contains this formula, but is not closed under linear negation. Observe that $\smash{\bot \parr \one}$ is polarized by the intuitionistic polarization, which motivates the study of $\imell$~(\Cref{sec:imell}).
 \end{remark}

 \section{Extensions to \texorpdfstring{$\mell$}{MELL}}
  \label{sec:mell}
   %In this section, we examine how our results extend to the more general setting of multiplicative exponential linear logic ($\mell$). We recall the grammar which~defines~the~formulae~of~$\mell$:
 
 Many of the results presented extend straightforwardly to multiplicative exponential linear logic ($\mell$); we discuss some of them in this section. Recall the grammar defining~$\mell$:
 \[
  A \Coloneqq X \mid \one \mid \bot \mid A \otimes A \mid A \parr A \mid{} !A \mid{} ?A
 \]
% Linear negation is extended by defining $(!A)^\perp \coloneq{} ?A^\perp$ and $(?A)^\perp \coloneq{} !A^\perp$. \Seqc is enriched with the rules of promotion, dereliction, contraction and weakening (\cite{girard1987linear}). These~are mirrored in \pss by the addition of exponential boxes and of $\drpn$, $\ctpn$ and $\wkpn$ nodes. Every $\drpn$ node has exactly one premise and one conclusion, whereas $\ctpn$ and $\wkpn$ nodes have the~same~geometric structure as $\parr$ and $\bot$ nodes respectively. Every property introduced by \Cref{def:acyclicity-connectivity-global} (and in particular $\ACCw$) is extended to these \pss by requiring that it is satisfied by the ``ground structure'' and, inductively, by the content of every box. $\wten$-\ps are similarly defined, as long as $\drpn$, $\ctpn$ and $\wkpn$ nodes are included in \Cref{def:erasing}, and \Cref{thm:untyped-sequentiality} holds in this setting. \Cutelim is extended, but the following statement (\cite{Pag-TdF2010}) holds~instead~of~\Cref{thm:cut-elim}:

%\begin{theorem}
%  The binary relation $\to$ is:
%  \begin{itemize}
%   \item
%    Confluent on $\{R : \text{$R$ \ps and $R \models \AC$}\}$;
%   \item
%    Strongly normalizing on $\{R : \text{$R$ \ps of $\mell$ and $R \models \AC$}\}$.
%  \end{itemize}
% \end{theorem}

 Linear negation is extended by defining $(!A)^\perp \coloneq{} ?A^\perp$ and $(?A)^\perp \coloneq{} !A^\perp$. \Seqc is enriched with the rules of promotion, dereliction, contraction and weakening (\cite{girard1987linear}). These~are mirrored in \pss by the addition of exponential boxes and of $\drpn$, $\ctpn$ and $\wkpn$ nodes. Every $\drpn$ node has exactly one premise and one conclusion, whereas $\ctpn$ and $\wkpn$ nodes have the~same~geometric structure as $\parr$ and $\bot$ nodes respectively. Every~property~introduced~by~\Cref{def:acyclicity-connectivity-global} (and in particular $\ACCw$) can be extended to these \pss by requiring that it is satisfied by the ``ground structure'' and, inductively, by the content of every box. \Cutelim is extended too, and \Cref{thm:cut-elim} holds for the set $\{R : \text{$R$ \ps and $R \models \AC$}\}$ (\cite{Pag-TdF2010}). The stability of $\ACCw$ with respect to \cutelim (\Cref{thm:stability-cut-elimination}) so as \Cref{prop:sequentiality-cut-elimination,prop:correctness-criterion} straightforwardly~extend to this broader setting. The main result of \Cref{sec:untyped-sequentiality-theorem} is extended as well: $\wten$-\ps are similarly defined, as long as $\drpn$, $\ctpn$ and $\wkpn$ nodes are included in~\Cref{def:erasing},~and~\Cref{thm:untyped-sequentiality}~holds. 

 Concerning \Cref{sec:btenll}, the fragment $\btenll$ can be enriched with the exponentials:

 \begin{definition}
  The fragment $\wntenll$ of $\mell$ is defined by the following grammar:
  \[
   A \Coloneqq X \mid \one \mid A \otimes A \mid A \parr A \mid A \parr E \mid E \parr A \mid{} !A \mid{} !E \qquad E \Coloneqq \bot \mid E \parr E \mid{} ?A \mid{} ?E
  \]
 \end{definition}

 A~variant~that~is~closed~under~linear~negation is obtained by adding $?$ formulae to the formulae of kind $E$ in \Cref{def:btenlls}. By \Cref{thm:untyped-sequentiality}, $\ACCw$ is a correctness~criterion~for~$\wntenll$:

 \begin{corollary}
  For any \ps $R$ of $\wntenll$: $R \models \ACCw$ $\Leftrightarrow$ $R$ is sequentializable in $\wntenll$.
 \end{corollary}

 The classical polarization of linear logic without the additives (\cite{laurent1999polarized,laurent2005polarized} for the full fragment) is a fragment of $\wntenll$ (positive formulae are of kind~$A$,~negative~formulae~are~of~kind~$E$):

 \begin{definition}
  The \emph{polarized} fragment $\llpol$ of $\mell$ is defined~by~the~following~grammar:
  \[
   P \Coloneqq \one \mid{} !X \mid{} !N \mid{} P \otimes P \qquad \text{(\emph{positive})} \qquad N \Coloneqq \bot \mid{} ?X \mid{} ?P \mid N \parr N \qquad \text{(\emph{negative})}
  \]
 \end{definition}
 
 Hence, $\ACCw$ is a correctness criterion for $\llpol$, and indeed in such a polarized framework $\ACCw$ is known to be equivalent to the criterion studied by Laurent (see~\cite{laurent1999polarized,laurent2002etude,laurent2003polarized}).
 
 \begin{corollary}
  \label{cor:sequentialization-polarized}
  Let $R$ be a \ps of $\llpol$. Then: $R \models \ACCw$ $\Leftrightarrow$ $R$ is~sequentializable~in~$\llpol$.
 \end{corollary}
 
 The call-by-name \lc can be embedded into \pss of $\llpol$ (see~\cite{laurent2003polarized}). We have an analogous result to \Cref{cor:sequentialization-polarized} for the fragment of $\llpol$ needed for the translation.
 
 \begin{proposition}
  Let $\lamlinl$ be the fragment of $\llpol$ defined by the following grammar:
  \[
   P \Coloneqq{} !X \mid{} !N \otimes P \qquad \text{(\emph{positive})} \qquad N \Coloneqq{} ?X \mid{} ?P \parr N \qquad \text{(\emph{negative})}
  \]
  For every \ps $R$ of $\lamlinl$, we have: $R \models \ACCw$ $\Leftrightarrow$ $R$ is sequentializable in $\lamlinl$.
 \end{proposition}
 
 As we did for $\btenll$ in \Cref{sec:btenll}, one can refine \Cref{thm:untyped-sequentiality} and prove the analogue of \Cref{thm:untyped-sequentialityBottomTens} for $\wntenll$ with ``the same'' canonical position for jumps. On the other hand, although some analogue of \Cref{cor:equivalenceFeasible} certainly holds for $\wntenll$,~a~precise~statement would require a definition of the equivalence relation between \seqc~proofs~in~$\mell$.
 
 \section{The intuitionistic polarization \texorpdfstring{(\cite{danos1990logique,regnier1992lambda,lamarcheIMELL})}{}}
  \label{sec:imell}
  % We now turn our attention to the intuitionistic polarization of linear logic (see \cite{danos1990logique,regnier1992lambda,lamarcheIMELL}).
 
 \begin{definition}
  \label{def:imell}
  The fragment $\imell$ of $\mell$ is defined by splitting the set of atomic formulae into two disjoint subsets, one with output atoms (denoted by $X$)~and~the~other~one with their duals (denoted by $X^\perp$) and by considering the following grammar:
  \begin{align*}
   O & \Coloneqq X \mid \one \mid O \otimes O \mid O \parr I \mid I \parr O \mid{} !O && \text{(\emph{outputs})} \\
   I & \Coloneqq X^\perp \mid \bot \mid I \parr I \mid I \otimes O \mid O \otimes I \mid{} ?I && \text{(\emph{inputs})}
  \end{align*}
 \end{definition}
 
% As explained in \cite{lamarcheIMELL}, having exactly one formula on the right side of the turnstile in intuitionistic $\linl$ with two-sided sequents is equivalent to having exactly one output formula.
 
 \begin{proposition}[\cite{lamarcheIMELL}]
  \label{prop:exactly-one-output}
  Every provable sequent of $\imell$ has exactly one~output~formula.
 \end{proposition}
 
 For \pss satisfying $\AC$, the necessary condition given by \Cref{prop:exactly-one-output} can be characterized geometrically: it is equivalent to $\ACCw$ (\Cref{prop:accw-output-conclusion}, proof in \Cref{subsec:connectivity-polarities}).
 
 \begin{definition}
  \label{def:polarities-arcs-nodes}
  Let $R$ be a \ps of $\imll$. An arc of $R$ is \emph{output} (resp.~\emph{input}) if its type is output (resp.~input). A node of $R$ is \emph{output} (resp.~\emph{input}) if it has exactly one~output~(resp.~input) conclusion. We denote by $\out(R)$ the set of output conclusions of $R$.% If~$\sharp \out(R) = 1$,~we~denote~by $n_\out$ the node of $R$ having among its conclusions the unique output conclusion of $R$.
 \end{definition}
 
 \begin{proposition}[restate = accwOutputConclusion, name = ]
  \label{prop:accw-output-conclusion}
  For $R$ \ps ($\cutpn$ allowed) of $\imell$ s.t.~$R \models \AC$: $R \models \ACCw$ $\Leftrightarrow$~$\sharp \out(R) = 1$.
 \end{proposition}
 
 The fragment of $\imell$ obtained by dropping the exponentials is denoted by $\imll$.
 
 \begin{remark}
  \label{rm:ACCwnoIMELL}
  By \Cref{thm:untyped-sequentiality}, $\ACCw$ is enough to sequentialize $\wten$-\pss of $\imll$. However, some pss of $\imll$ are not $\wten$-\pss, satisfy $\ACCw$ and are not sequentializable (\Cref{subfig:counterexample-imllu}).
%  Some pss of $\imll$ are not $\wten$-\pss. In \Cref{subfig:counterexample-imllu}, for instance, we have~a~\ps of $\imll$ which is not a $\wten$-\ps, satisfies $\ACCw$ and is not sequentializable in $\imll$. By \Cref{thm:untyped-sequentiality}, of course, $\ACCw$ is sufficient to sequentialize $\wten$-\pss of $\imll$.
  %Some pss of the fragment $\imll$ of $\imell$ (obtained by dropping the exponentials) are not $\wten$-\pss: the \ps of $\imll$ in \Cref{subfig:counterexample-imllu} satisfies $\ACCw$ but is not sequentializable. By \Cref{thm:untyped-sequentiality}, $\ACCw$ is enough to sequentialize $\wten$-\pss of $\imll$.
 \end{remark}
 
 Following the previous sections, to simplify the presentation we would consider $\imll$. We actually further restrict our analysis to the intuitionistic fragment $\icomll$ of $\comll$ (\cite{LincolnW94}), obtained from $\imll$ by dropping the propositional letters $X$ and $X^\perp$: we consider \seqc proofs with no \axsc rule and \pss with no $\axpn$ node. Inspired by the proof of \Cref{thm:untyped-sequentialityBottomTens}, we show that, for \pss of $\icomll$ satisfying $\ACCw$, there is a ``canonical''~position~for jumps (\Cref{def:canonical-jump-intuitionistic,fig:canonical-jumps}) allowing us to sequentialize (\Cref{thm:sequentialization-icomll}). In the same perspective of \Cref{sec:btenll}, a \ps $R$ comes with a partial jump function $J_R$, and \Cref{def:Rjumps,rmk:jump} are easily adapted to $\icomll$; we can also introduce the~notation~$R_n^m$~of~\Cref{def:canonical-jump}~for~$R$~\ps of $\icomll$, $\smash{n \in \w(R) \setminus \dom(J_R)}$ and $m \in \ve(R)$.

%We thus retrieve the variant of \pss we considered in \Cref{sec:btenll}~and, for simplicity, we reuse the same notations. There is no ambiguity for a \ps of both $\btenll$ and $\icomll$,~provided we choose $m$ as a terminal non-\er node in \Cref{def:canonical-jump}.

%The two following results are the intuitionistic counterpart of \Cref{lemma:erasing-btenll} and are immediate consequences of \Cref{def:polarities-arcs-nodes}.

 \begin{lemma}
  \label{lemma:output}
  Let $R$ be a \ps of $\icomll$ and let $n$ be an output node of $R$.
  \begin{enumerate}[(i)]
   \item
    Every output premise of $n$ is a conclusion of some $\onepn$, output $\otimes$ or output $\parr$ node of $R$;
   \item \label{itm:output-leaf}
    There exists a unique $\onepn$ or output $\otimes$ node of $R$, denoted by $\ta{n}$, such that $\ta{n} = n$ or $\ta{n} \prec n$ and,~for every node $m$ of $R$ such that $\ta{n} \prec m \prec n$, $m$ is an~output~$\parr$~node~of~$R$.
  \end{enumerate}
 \end{lemma}

 \begin{definition}
  \label{def:wparr-intuitionistic}
  Let $R$ be a \ps of $\icomll$. We split $\w(R)$ in two: $\wparr(R) \coloneq \{n : \text{$n \in \w(R)$}$ $\text{and there exists an output $\parr$ node $m$ of $R$ such that $n \prec m$}\}$ and $\smash{\wi(R) \coloneq \w(R) \setminus \wparr(R)}$.
 \end{definition}

 \begin{definition}
  \label{def:canonical-jump-intuitionistic}
  Let $R$ be a \ps of $\icomll$ s.t.~$\sharp \out(R) = 1$, $\smash{n \in \w(R) \setminus \dom(J_R)}$,~and~let~$m$~be:
  \begin{itemize}
   \item
    The least (\Cref{rmk:prec}) output $\parr$ node of $R$ such that~$n \prec m$, if $n \in \wparr(R)$;
   \item
    The node having the unique output conclusion of $R$ as its conclusion, if $n \in \wi(R)$.
  \end{itemize}
  We set $R_n \coloneq R_n^{\ta{m}}$ (see \Cref{lemma:output}-\ref{itm:output-leaf}) and $\smash{\can{R} \coloneq (\ujf{R}_{n_1} \cdots)_{n_k}}$, where $\w(R) = \{n_1, \dots, n_k\}$.
 \end{definition}

%We can characterize explicitly \seqc proofs $\pi$ such that $\pi \deseqjump \can{\pi^\deseq}$.

 \begin{theorem}[restate = sequentializationICOMLL, name = ]
  \label{thm:sequentialization-icomll}
  Let $R$ be a \jf \ps of $\icomll$ such that $R \models \ACCw$. Then~there~exists a \seqc proof $\pi$ in $\icomll$ such that $\pi \deseqjump \can{R}$.
 \end{theorem}
 
 \begin{fw}
  We believe that the analogue of \Cref{cor:equivalenceFeasible}~holds~for~$\icomll$:~two~\seqc proofs are equivalent if and only if their desequentialization is the same (\jf) \ps. This would entail that, for every provable\footnote{Notice, by the way, that the converse of \Cref{prop:exactly-one-output} holds for $\icomll$: a sequent with exactly one output formula is provable (the proof is straightforward, by induction on the complexity of the sequent).} sequent of $\icomll$, there exists a unique \seqc proof of it (up to the~equivalence),~because~there~exists~a~unique~\jf~\ps~with such a conclusion.
  
  The fragment $\icomll$ is indeed small, but recall that provability in $\comll$ (which is small too!) is $\mathit{NP}$-hard (\cite{LincolnW94}). Also, remember that the general aim of this work is ``to let connectivity talk'' and understand its relations with logic, no matter (at least at the beginning) the logical system concerned, provided it emerges naturally from our analysis. Finally, notice that an atomic $\axpn$ node with conclusions $X$ and $X^{\bot}$ can be thought as a jump from $\bot$ to $1$, and thus an $\imll$ \ps can be thought as an $\icomll$ \ps $R$ equipped with a (partial) jump function $J_R$. Therefore, finding a correctness correctness criterion~for $\imll$ comes down to finding a property $\Prp$ such that, for every \ps $R$ (equipped with a jump function $J_R$) of $\icomll$, $R \models \Prp$ if and only if there exist a \seqc proof $\pi$ in $\imll$ and a \jt \ps $R'$ of $\imll$ such that $\pi \deseqjump R'$ and $J_{R'}$ extends $J_R$.\footnote{Of course, in order for $\Prp$ to be a proper correctness criterion, $\Prp$ should also be checkable in polynomial~time and stable under \cutelim, but this is the same as usual.} \Cref{thm:sequentialization-icomll} can be seen as the solution of the base case (the \jf case) in this incremental approach to the problem. All of this corroborates our intuition that the correctness criterion~for~$\imll$~(and~thus~$\imell$)~of~\cite{lamarcheIMELL}~can~be~expressed~in terms of connected components (in~the~style~of~$\ACCw$).
 \end{fw}

 \bibliography{Bibliography}
 
 \appendix
 
 \crefalias{subsection}{appendix}

 \section{Technical appendix}
   \subsection{Complements of \texorpdfstring{\Cref{sec:ps}}{Section 2}}
 \label{subsec:cut-elimination-steps}
 \begin{definition}
  Let $R$ be a \ps. We say that a $\cutpn$ node is an \emph{\axcut} if one of its premises is a conclusion of an $\axpn$ node, a \emph{unit} (resp.~\emph{multiplicative}) \emph{cut} if its premises are the conclusions of a $\onepn$ (resp.~$\otimes$) node and of a $\bot$ (resp.~$\parr$) node. The~set~of~\pss~is~endowed~with~the~rewriting rules illustrated in \Cref{fig:cutelim}, called \emph{\cutelim steps}.\footnote{To be precise, in the \axcut elimination step it is required that there exists a unique \dpath between the $\axpn$ and $\cutpn$ nodes explicitly represented in \Cref{subfig:cutelim-axiom}: the one consisting only of the arc which is simultaneously a conclusion of the axiom node and a premise of the cut node (see~\cite{llhandbook}).}
  
  \begin{figure}
   \centering
   \begin{subfigure}{0.475\textwidth}
   	\centering
   	\includegraphics[scale=0.2]{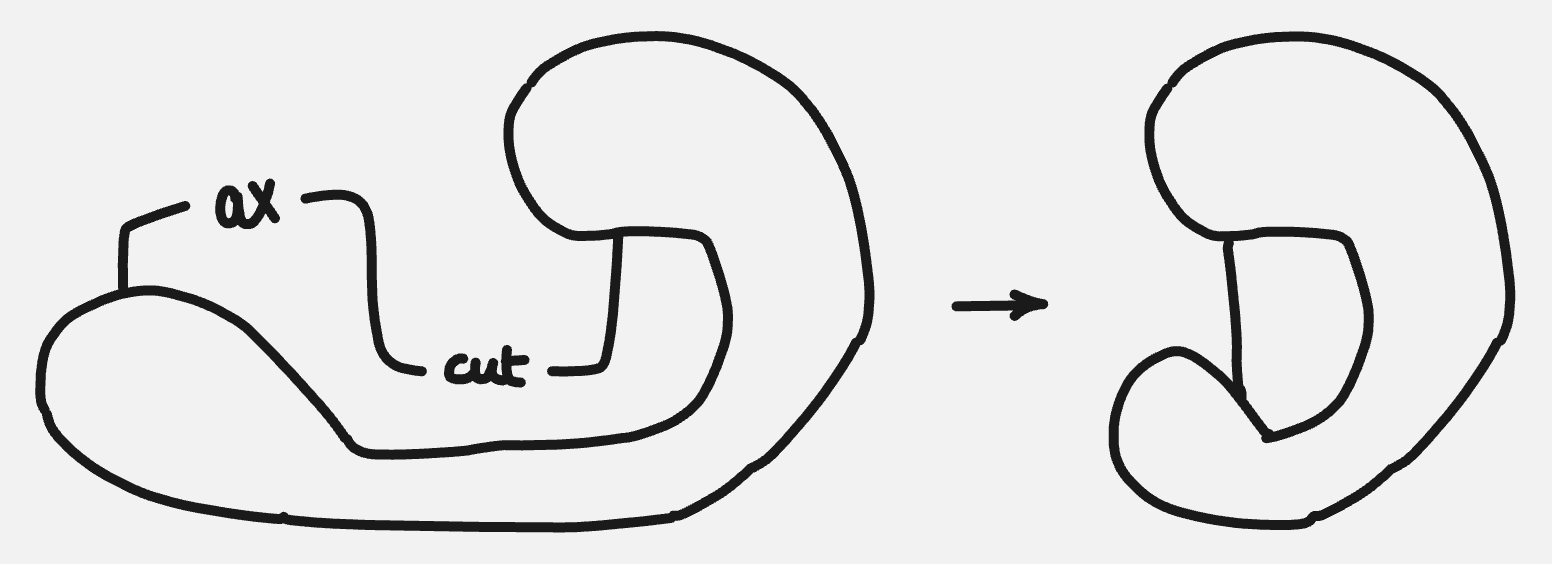}
   	\caption{\label{subfig:cutelim-axiom}\Axcut elimination step.}
   \end{subfigure}
   \begin{subfigure}{0.475\textwidth}
   	\centering
%   	\scalebox{\figscale}{
   	 \begin{tikzpicture}
	\begin{pgfonlayer}{nodelayer}
		\node [style=cutwLink] (2) at (1, -0.5) {};
		\node [style=none] (12) at (2, -0.375) {};
		\node [style=none] (13) at (2.25, -0.375) {};
		\node [style=none] (17) at (-1.25, 0.5) {};
		\node [style=none] (18) at (-0.1, 0.5) {};
		\node [style=none] (19) at (-1.25, 0) {};
		\node [style=none] (20) at (-0.1, 0) {};
		\node [style=none] (21) at (-1.25, 0.25) {};
		\node [style=none] (22) at (-0.1, 0.25) {};
		\node [style=none] (23) at (-0.1, -0.25) {};
		\node [style=none] (24) at (-1.25, -0.25) {};
		\node [style=none] (25) at (-1.05, -0.25) {};
		\node [style=none] (26) at (-0.3, -0.25) {};
		\node [style=none] (32) at (-0.675, -0.5) {\scriptsize$\hspace{0.45375ex}\dots$};
		\node [style=DOTagent] (33) at (-1.05, -0.75) {};
		\node [style=DOTagent] (34) at (-0.3, -0.75) {};
		\node [style=botwLink] (35) at (1.625, 0.125) {};
		\node [style=onewLink] (36) at (0.375, 0.125) {};
		\node [style=none] (37) at (2.5, 0.5) {};
		\node [style=none] (38) at (3.65, 0.5) {};
		\node [style=none] (39) at (2.5, 0) {};
		\node [style=none] (40) at (3.65, 0) {};
		\node [style=none] (41) at (2.5, 0.25) {};
		\node [style=none] (42) at (3.65, 0.25) {};
		\node [style=none] (43) at (3.65, -0.25) {};
		\node [style=none] (44) at (2.5, -0.25) {};
		\node [style=none] (45) at (2.7, -0.25) {};
		\node [style=none] (46) at (3.45, -0.25) {};
		\node [style=none] (47) at (3.075, -0.5) {\scriptsize$\hspace{0.45375ex}\dots$};
		\node [style=DOTagent] (48) at (2.7, -0.75) {};
		\node [style=DOTagent] (49) at (3.45, -0.75) {};
	\end{pgfonlayer}
	\begin{pgfonlayer}{edgelayer}
		\draw [style=reduction] (12.center) to (13.center);
		\draw [style=roundedCornerBlackFill] (19.center)
			 to (17.center)
			 to (18.center)
			 to (20.center);
		\draw [style=roundedCornerBlackFill] (21.center)
			 to (24.center)
			 to (23.center)
			 to (22.center);
		\draw [style=simpleB] (25.center) to (33);
		\draw [style=simpleB] (26.center) to (34);
		\draw [style=out1LEFT] (36) to (2);
		\draw [style=out1RIGHT] (35) to (2);
		\draw [style=roundedCornerBlackFill] (39.center)
			 to (37.center)
			 to (38.center)
			 to (40.center);
		\draw [style=roundedCornerBlackFill] (41.center)
			 to (44.center)
			 to (43.center)
			 to (42.center);
		\draw [style=simpleB] (45.center) to (48);
		\draw [style=simpleB] (46.center) to (49);
	\end{pgfonlayer}
\end{tikzpicture}
%   	}
   	\caption{\label{subfig:cutelim-unit}\Ucut elimination step.}
   \end{subfigure}
   \vskip 1em
   \begin{subfigure}{0.5\textwidth}
   	\centering
%   	\scalebox{\figscale}{
   	 \begin{tikzpicture}
	\begin{pgfonlayer}{nodelayer}
		\node [style=none] (0) at (-0.75, 2) {};
		\node [style=none] (1) at (2.25, 2) {};
		\node [style=none] (2) at (-0.75, 1.5) {};
		\node [style=none] (3) at (2.25, 1.5) {};
		\node [style=none] (5) at (-0.75, 1.75) {};
		\node [style=none] (6) at (2.25, 1.75) {};
		\node [style=none] (7) at (2.25, 1) {};
		\node [style=none] (9) at (-0.75, 1) {};
		\node [style=none] (10) at (-0.5, 1) {};
		\node [style=none] (14) at (0.5, 1) {};
		\node [style=parrwLink] (17) at (0, 0.75) {};
		\node [style=none] (19) at (1, 1) {};
		\node [style=none] (20) at (2, 1) {};
		\node [style=tensorwLink] (23) at (1.5, 0.75) {};
		\node [style=cutwLink] (24) at (0.75, 0.25) {};
		\node [style=whiteTest] (25) at (-0.5, 1.25) {\tiny$1$};
		\node [style=whiteTest] (26) at (0.5, 1.25) {\tiny$2$};
		\node [style=whiteTest] (27) at (1, 1.25) {\tiny$3$};
		\node [style=whiteTest] (28) at (2, 1.25) {\tiny$4$};
		\node [style=none] (29) at (3, 2) {};
		\node [style=none] (30) at (6, 2) {};
		\node [style=none] (31) at (3, 1.5) {};
		\node [style=none] (32) at (6, 1.5) {};
		\node [style=none] (33) at (3, 1.75) {};
		\node [style=none] (34) at (6, 1.75) {};
		\node [style=none] (35) at (6, 1) {};
		\node [style=none] (36) at (3, 1) {};
		\node [style=none] (37) at (3.25, 1) {};
		\node [style=none] (38) at (4.25, 1) {};
		\node [style=none] (40) at (4.75, 1) {};
		\node [style=none] (41) at (5.75, 1) {};
		\node [style=cutwLink] (43) at (4, 0.75) {};
		\node [style=whiteTest] (44) at (3.25, 1.25) {\tiny$1$};
		\node [style=whiteTest] (45) at (4.25, 1.25) {\tiny$2$};
		\node [style=whiteTest] (46) at (4.75, 1.25) {\tiny$3$};
		\node [style=whiteTest] (47) at (5.75, 1.25) {\tiny$4$};
		\node [style=cutwLink] (48) at (5, 0.75) {};
		\node [style=none] (49) at (2.5, 0.75) {};
		\node [style=none] (50) at (2.75, 0.75) {};
	\end{pgfonlayer}
	\begin{pgfonlayer}{edgelayer}
		\draw [style=roundedCornerBlackFill] (2.center)
			 to (0.center)
			 to (1.center)
			 to (3.center);
		\draw [style=roundedCornerBlackFill] (5.center)
			 to (9.center)
			 to (7.center)
			 to (6.center);
		\draw [style=roundedCornerBlackFill] (31.center)
			 to (29.center)
			 to (30.center)
			 to (32.center);
		\draw [style=roundedCornerBlackFill] (33.center)
			 to (36.center)
			 to (35.center)
			 to (34.center);
		\draw [style=Dout1LEFT] (10.center) to (17);
		\draw [style=Dout1LEFT] (17) to (24);
		\draw [style=Dout1LEFT] (19.center) to (23);
		\draw [style=Dout1LEFT] (37.center) to (43);
		\draw [style=Dout1LEFT] (38.center) to (48);
		\draw [style=Dout1RIGHT] (14.center) to (17);
		\draw [style=Dout1RIGHT] (20.center) to (23);
		\draw [style=Dout1RIGHT] (23) to (24);
		\draw [style=Dout1RIGHT] (40.center) to (43);
		\draw [style=Dout1RIGHT] (41.center) to (48);
		\draw [style=reduction] (49.center) to (50.center);
	\end{pgfonlayer}
\end{tikzpicture}
%   	}
   	\caption{\label{subfig:cutelim-multiplicative}\Mcut elimination step.}
   \end{subfigure}
   \caption{\label{fig:cutelim}The \axcut, \ucut and \mcut elimination steps.}
  \end{figure}
 \end{definition}
 
 \begin{remark}
  Not every $\cutpn$ node is reducible (see \Cref{fig:clash}). The $\cutpn$ nodes which are not reducible are called \emph{clashes} (see for example~\cite{Pag-TdF2010}).
  
  \begin{figure}
   \centering
   \includegraphics[scale=0.2]{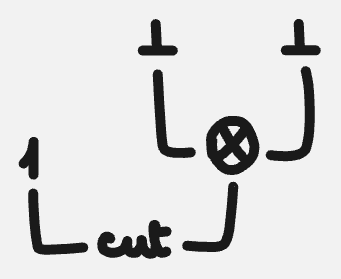}
   \caption{\label{fig:clash}Clash.}
  \end{figure}
 \end{remark}
 
 \begin{definition}
  \label{def:desequentialization}
  Let $\Sigma \coloneq A_1, \dots, A_k$ be a sequent of $\mllu$. For every \seqc proof $\pi$ in $\mllu$ with conclusion $\Sigma$, we define a \ps $\pi^\deseq$ of $\mllu$ with conclusions of types~$A_1, \dots, A_k$. The definition is by induction on $\pi$. Consider the last rule of $\pi$. Cases:
  \begin{itemize}
   \item
 	\axsc: Then $\pi$ is only made of an \axsc rule. Hence, we have $k = 2$ and $A_2 = A_1^\perp$. We~define~$\pi^\deseq$ as in \Cref{subfig:desequentialization-axiom} with $A \coloneq A_1$;
   \item
 	\cutsc: Then $\Sigma = \Gamma, \Delta$ and $\pi$ is obtained by applying a \cutsc rule to \seqc proofs $\pi_1$ and $\pi_2$ in $\mllu$ with conclusions $\Gamma, A$ and $A^\perp, \Delta$ respectively. By induction hypothesis on $\pi_1$ and on $\pi_2$, the \pss $\pi_1^\deseq$ and $\pi_2^\deseq$ of $\mllu$ are defined. We define $\pi^\deseq$ as in \Cref{subfig:desequentialization-cut}~with $R_i \coloneq \pi_i^\deseq$ for each $i \in \{1,2\}$;
   \item
    \exsc: Then $\Sigma = \Gamma, B, A, \Delta$ and $\pi$ is obtained by applying an \exsc rule to a \seqc proof $\pi_1$ in $\mllu$ with conclusion $\Gamma, A, B, \Delta$. By induction hypothesis on $\pi_1$, the \ps $\pi_1^\deseq$ of $\mllu$ is defined, with ordered conclusions $c_1, \dots, c_k, a, b, d_1, \dots, d_h$ such that $\Gamma$ is the sequence of the types of $c_1, \dots, c_k$, $A$ and $B$ are the types of $a$ and $b$ respectively, and $\Delta$ is the sequence of the types of $d_1, \dots, d_k$. We define $\pi^\deseq$ as the \ps of $\mllu$ obtained~from~$\pi_1^\deseq$~by just changing the total ordering of its conclusions as follows: $c_1, \dots, c_k, b, a, d_1, \dots, d_h$;
   \item
    $\one$: Then $\pi$ is only made of a $\one$ rule. We have $k = 1$, $A_1 = \one$: we define $\pi^\deseq$ as in \Cref{subfig:desequentialization-one};
   \item
    $\bot$: Then $\Sigma = \Gamma, \bot$ and $\pi$ is obtained by applying a $\bot$ rule to a \seqc proof $\pi_1$ in $\mllu$ with conclusion $\Gamma$. By induction hypothesis on $\pi_1$, the \ps $\pi_1^\deseq$ of $\mllu$ is defined. We define $\pi^\deseq$ as in \Cref{subfig:desequentialization-bottom} with $R_1 \coloneq \pi_1^\deseq$;
   \item
 	$\otimes$: Then $\smash{\Sigma = \Gamma, A \otimes B, \Delta}$ and $\pi$ is obtained by applying a $\otimes$ rule to \seqc proofs $\pi_1$ and $\pi_2$ in $\mllu$ with conclusions $\Gamma, A$ and $B, \Delta$ respectively. By induction hypothesis on $\pi_1$ and on $\pi_2$, the \pss $\pi_1^\deseq$ and $\pi_2^\deseq$ of $\mllu$ are defined. We define $\pi^\deseq$~as~in~\Cref{subfig:desequentialization-tensor}~with $R_i \coloneq \pi_i^\deseq$ for each $i \in \{1,2\}$;
   \item
 	$\parr$: Then $\smash{\Sigma = \Gamma, A \parr B}$ and $\pi$ is obtained by applying a $\parr$ rule to a \seqc proof $\pi_1$ in $\mllu$ with conclusion $\Gamma, A, B$. By induction hypothesis on $\pi_1$, the \ps $\pi_1^\deseq$ of $\mllu$~is defined. We define $\pi^\deseq$ as in \Cref{subfig:desequentialization-par} with $R_1 \coloneq \pi_1^\deseq$.
  \end{itemize}
 \end{definition}
 
 We recall the following well-known result, expressing the stability of $\AC$ with respect~to~the \cutelim steps (see~\cite{danos1990logique,llhandbook}).
 
 \begin{lemma}
  \label{lemma:acyclicity-preserved}
  Let $R,R'$ be \pss. If $R \tocut R'$, then: $R \models \AC$ $\Rightarrow$ $R' \models \AC$.
 \end{lemma}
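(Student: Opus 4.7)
The plan is to proceed by structural induction on the ps $R$, case analysing on the cut elimination rule that triggers $R \to R'$. Recall that by Notation~\ref{not:acyclicity-connectivity-ps}, $R \models \mathsf{AC}$ means $G_R \models \mathsf{AC}$ and $R_0 \models \mathsf{AC}$ for every $R_0 \sqsubset R$, so two things must be verified for $R'$: (i) the outer correctness graph $G_{R'}$ is acyclic under every switching, and (ii) each content $R_0' \sqsubset R'$ satisfies $\mathsf{AC}$. The outer induction on box depth handles (ii) whenever the reduction step happens strictly inside a box. For a reduction at depth $0$ of $R$, the box contents of $R$ and $R'$ match up directly (or, in the duplicating/erasing exponential cases, the contents of $R'$ are copies of — or submodules of — a content of $R$), so (ii) reduces to the hypothesis $R \models \mathsf{AC}$.

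The uniform technique for (i) is as follows. Given a switching $\varphi'$ of $G_{R'}$, I would construct a companion switching $\varphi$ of $G_R$ and a map from paths of $G_{R'}^{\varphi'}$ to paths of $G_R^\varphi$ that sends cycles to cycles; contradicting $R \models \mathsf{AC}$. For the \emph{axiom} cut, the reduction identifies an axiom-conclusion with the non-axiom premise of the cut; a cycle in $G_{R'}^{\varphi'}$ that uses the merged arc expands by inserting the bounce through the removed axiom and cut nodes, while one that avoids it is already a cycle of $G_R^\varphi$. For the \emph{unit} cut the three removed nodes are incident only to each other, so $G_{R'}^{\varphi'}$ is literally a subgraph of $G_R^\varphi$ and acyclicity is inherited. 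For the \emph{multiplicative} cut the tensor/par pair and the cut node are replaced by two cuts on the premises; any cycle in the reduct can be extended to a cycle in $G_R^\varphi$ by reinserting the path tensor–cut–par, and the switching of the par node is determined by which premise the cycle uses. The \emph{dereliction} cut is analogous, the par/tensor pair being replaced by the bang/dereliction pair together with their box.

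The \emph{exponential} cases require slightly more care because they act on box nodes of $G_R$. For the \emph{commutative} cut, the outer box absorbs an auxiliary door; at the level of $G_R$ and $G_{R'}$ this collapses two box nodes connected by an arc into one, which can only destroy cycles, not create them. For the \emph{weakening} cut, the entire box disappears and the arcs incident to its auxiliary doors become conclusions of fresh weakening nodes; since weakening nodes have no premise, no cycle of $G_{R'}^{\varphi'}$ can pass through them, so such a cycle already sits in the ambient structure and lifts to a cycle of $G_R^\varphi$ obtained by reattaching the box node.

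I expect the \emph{contraction} cut to be the main obstacle: the box is duplicated, the two copies $B_1$ and $B_2$ share a new cut on each of their principal doors with the two premises of the former contraction, and auxiliary doors are paired up through fresh contraction nodes. At the level of correctness graphs this replaces one box node by two, linked through a rather involved pattern of contractions and cuts. To handle this I would take a cycle in $G_{R'}^{\varphi'}$, trace it arc by arc, and use the switchings of the new contraction nodes to decide, for every passage between $B_1$ and $B_2$, a canonical ``collapse'' onto the single box node of $G_R$. The key lemma to prove is that the induced walk in $G_R^\varphi$ (for a suitable $\varphi$ extending $\varphi'$) is still a switching path whose endpoints coincide — i.e., that the collapse cannot shortcut the cycle into a mere backtrack along a contraction. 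Item (ii) in this case is granted by the induction hypothesis applied inside the duplicated content, together with the fact that the two copies of the duplicated content are isomorphic to the original $R_0 \sqsubset R$, which already satisfies $\mathsf{AC}$.
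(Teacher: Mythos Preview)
The paper does not actually prove this lemma: it simply records it as a well-known fact and cites Danos's thesis and the Linear Logic handbook. So there is no ``paper's own proof'' to compare against; your sketch is the standard case-by-case argument that those references contain, and its overall architecture is sound.

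Two places in your write-up are inaccurate, however. First, your blanket claim for part (ii) --- that for a depth-$0$ step ``the box contents of $R$ and $R'$ match up directly (or \ldots\ are copies of a content of $R$)'' --- fails in the commutative case. After a commutative step the outer box acquires a \emph{new} content $R_1'$ which is the old content $R_1$ with an extra box node, a cut node, and some fresh conclusions grafted on; $R_1'$ is not a content of $R$ and is not isomorphic to one. You therefore owe a direct check that $G_{R_1'}\models\mathsf{AC}$ (easy: the added piece is a tree hanging off a single arc of $G_{R_1}$), rather than invoking the hypothesis on $R$. Second, the dereliction step is not really ``analogous'' to the multiplicative one at the level of correctness graphs: opening the box replaces a single box node of $G_R$ by the whole correctness graph $G_{R_0}$ of its content, so a putative cycle in $G_{R'}^{\varphi'}$ may live partly (or entirely) inside this newly exposed subgraph. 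The right argument splits on whether the cycle stays in $G_{R_0}$ (contradicting $R_0\models\mathsf{AC}$) or crosses the frontier, in which case collapsing the $G_{R_0}$-part back to a single node produces a cycle in $G_R^{\varphi}$.

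Your instinct about the contraction case is right --- it is the only genuinely delicate one --- but your ``collapse'' plan needs one more idea. The new region (two box nodes, two cuts, $k$ fresh contractions) is, in any switching of $G_{R'}$, a forest with two trees whose external ports are exactly the $k+2$ arcs $a_1,a_2,b_1,\dots,b_k$ that also border the old region (box, cut, original contraction) in $G_R$. A simple cycle in $G_{R'}^{\varphi'}$ therefore enters and leaves each tree through at most two ports. If the cycle avoids $a_2$, switch the original contraction to $a_1$ and lift; symmetrically if it avoids $a_1$. If it uses both $a_1$ and $a_2$, then it visits each tree exactly once, and a short pigeonhole shows that one of the two rest-segments already joins two ports that are simultaneously available in the old region for \emph{some} choice of switching of $c$, yielding a cycle in $G_R^{\varphi}$. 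Making this explicit is what closes the case; the ``backtrack'' worry you mention dissolves once you stop trying to map the whole cycle and instead extract a shorter one.
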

 
 We now prove that also $\ACCw$ is stable under \cutelim.
 
% \begin{remark}
%  \label{rmk:correctness-criteria-boxes}
%  Let $R$ be a \ps and let $\Prp \in \{\C, \Cw\}$. Then $R \models \ACPrp$ if and only if $G_R \models \ACPrp$ and, for every $R_0 \sqsubset R$, we have $R_0 \models \ACPrp$.
% \end{remark}
 
 \stabilityCutElimination*
 
 \begin{proof}
  By Lemma~\ref{lemma:acyclicity-preserved}, we have $R' \models \AC$. Hence, it is sufficient to prove $R' \models \Cw$. Let $\varphi'$ be a switching of $R'$, and let $\varphi$ be any extension of $\varphi'$ to a switching of $R$. By Definitions~\ref{def:acyclicity-connectivity},~\ref{def:acyclicity-connectivity-global} and \Cref{itm:acyclicity-cc-number} of~\Cref{prop:acyclicity-cc}, from the hypothesis $R \models \ACCw$ and~from~$R' \models \AC$~we~deduce~that:
  \begin{gather}
   \sharp \w(R) + 1 \vphantom{{R'}^{\varphi'}} = \sharp \ve(R^\varphi) - \sharp \ar(R^\varphi) \vphantom{{R'}^{\varphi'}} \\
   \sharp \cc({R'}^{\varphi'}) = \sharp \ve({R'}^{\varphi'}) - \sharp \ar({R'}^{\varphi'})
  \end{gather}
  We distinguish the following cases:
  \begin{itemize}
   \item
    If $n$ is an \axcut (resp.~\mcut), then $R, R'$ have the shapes in Figure~\ref{subfig:cutelim-axiom} (resp.~Figure~\ref{subfig:cutelim-multiplicative}). Hence:
   \begin{gather*}
   	\sharp \ve({R'}^{\varphi'}) = \sharp \ve(R^\varphi) - 2 \\
   	\sharp \ar({R'}^{\varphi'}) = \sharp \ar(R^\varphi) - 2 \\
   	\sharp \w(R') = \sharp \w(R) \vphantom{{R'}^{\varphi'}}
   \end{gather*}
   Therefore:
   \[
    \sharp \cc(G_{R'}^{\varphi'}) \underset{2}{=} \sharp \ve(G_{R'}^{\varphi'}) - \sharp \ar(G_{R'}^{\varphi'}) = \sharp \ve(G_R^\varphi) - \sharp \ar(G_R^\varphi) \underset{1}{=} \sharp \w(G_R) + 1 = \sharp \w(G_{R'}) + 1
   \]
  \item
   If $n$ is a \ucut, then $R, R'$ have the shapes in Figure~\ref{subfig:cutelim-unit}. Hence:
   \begin{gather*}
   	\sharp \ve(G_{R'}^{\varphi'}) = \sharp \ve(G_R^\varphi) - 3 \\
   	\sharp \ar(G_{R'}^{\varphi'}) = \sharp \ar(G_R^\varphi) - 2 \\
   	\sharp \w(G_{R'}) = \sharp \w(G_R) - 1 \vphantom{G_{R'}^{\varphi'}}
   \end{gather*}
   Therefore:
   \[
   \sharp \cc(G_{R'}^{\varphi'}) \underset{2}{=} \sharp \ve(G_{R'}^{\varphi'}) - \sharp \ar(G_{R'}^{\varphi'}) = \sharp \ve(G_R^\varphi) - \sharp \ar(G_R^\varphi) - 1 \underset{1}{=} \sharp \w(G_R) = \sharp \w(G_{R'}) + 1 \qedhere
   \]
  \end{itemize}
 \end{proof}
 
 \begin{remark}
  \label{rmk:sequential-terminal}
  If $R$ is an $n$-\seq \ps, then $n$ is a \emph{terminal} node of $R$.
 \end{remark}
 
 \begin{remark}
  \label{rmk:exactly-one-premise}
  \begin{figure}
   \centering \hfill
   \begin{subfigure}{0.1125\textwidth}
 	\centering
 	%  	\scalebox{\figscale}{
 	\begin{tikzpicture}
	\begin{pgfonlayer}{nodelayer}
		\node [style=DOTagent] (1) at (-0.5, -0.5) {};
		\node [style=DOTagent] (2) at (0.5, -0.5) {};
		\node [style=axwLink] (6) at (0, 0.25) {};
		\node [style=none] (7) at (0.3, 0.325) {\tiny $n$};
	\end{pgfonlayer}
	\begin{pgfonlayer}{edgelayer}
		\draw [style=EoutLEFT] (6) to (1);
		\draw [style=EoutRIGHT] (6) to (2);
	\end{pgfonlayer}
\end{tikzpicture}
 	%  	}
 	\caption{$\ell = \axpn$}
   \end{subfigure}
   \hfill
   \begin{subfigure}{0.1875\textwidth}
   	\centering
   	%  	\scalebox{\figscale}{
   	\begin{tikzpicture}
	\begin{pgfonlayer}{nodelayer}
		\node [style=none] (0) at (-1.175, 1.75) {};
		\node [style=none] (1) at (-0.375, 1.75) {};
		\node [style=none] (2) at (-1.175, 1.25) {};
		\node [style=none] (3) at (-0.375, 1.25) {};
		\node [style=none] (4) at (-1.175, 1.5) {};
		\node [style=none] (5) at (-0.375, 1.5) {};
		\node [style=none] (6) at (-0.375, 1) {};
		\node [style=none] (7) at (-1.175, 1) {};
		\node [style=none] (10) at (-0.625, 1) {};
		\node [style=whiteTest] (15) at (-0.775, 1.375) {\scriptsize$R_1$};
		\node [style=none] (16) at (0.375, 1.75) {};
		\node [style=none] (17) at (1.175, 1.75) {};
		\node [style=none] (18) at (0.375, 1.25) {};
		\node [style=none] (19) at (1.175, 1.25) {};
		\node [style=none] (20) at (0.375, 1.5) {};
		\node [style=none] (21) at (1.175, 1.5) {};
		\node [style=none] (22) at (1.175, 1) {};
		\node [style=none] (23) at (0.375, 1) {};
		\node [style=none] (26) at (0.625, 1) {};
		\node [style=whiteTest] (31) at (0.775, 1.375) {\scriptsize$R_2$};
		\node [style=cutwLink] (32) at (0, 0.625) {};
		\node [style=none] (33) at (0.3, 0.3) {\tiny$n$};
		\node [style=none] (34) at (-0.95, 1) {};
		\node [style=none] (35) at (-0.95, 0.7) {};
		\node [style=none] (36) at (-0.9, 0.7) {};
		\node [style=none] (37) at (-0.9, 1) {};
		\node [style=DOTagent] (38) at (-0.925, 0.5) {};
		\node [style=none] (39) at (0.9, 1) {};
		\node [style=none] (40) at (0.9, 0.7) {};
		\node [style=none] (41) at (0.95, 0.7) {};
		\node [style=none] (42) at (0.95, 1) {};
		\node [style=DOTagent] (43) at (0.925, 0.5) {};
	\end{pgfonlayer}
	\begin{pgfonlayer}{edgelayer}
		\draw [style=roundedCornerBlackFill] (2.center)
			 to (0.center)
			 to (1.center)
			 to (3.center);
		\draw [style=roundedCornerBlackFill] (4.center)
			 to (7.center)
			 to (6.center)
			 to (5.center);
		\draw [style=roundedCornerBlackFill] (18.center)
			 to (16.center)
			 to (17.center)
			 to (19.center);
		\draw [style=roundedCornerBlackFill] (20.center)
			 to (23.center)
			 to (22.center)
			 to (21.center);
		\draw [style=out1LEFT] (10.center) to (32);
		\draw [style=out1RIGHT] (26.center) to (32);
		\draw [style=simpleB] (34.center) to (35.center);
		\draw [style=simpleB] (37.center) to (36.center);
		\draw [style=simpleB] (39.center) to (40.center);
		\draw [style=simpleB] (42.center) to (41.center);
	\end{pgfonlayer}
\end{tikzpicture}
 	%  	}
 	\caption{\label{subfig:sequential-cut}$\ell = \cutpn$}
   \end{subfigure}
   \hfill
   \begin{subfigure}{0.0875\textwidth}
   	\centering
   	%  	\scalebox{\figscale}{
    \begin{tikzpicture}
	\begin{pgfonlayer}{nodelayer}
		\node [style=DOTagent] (1) at (0, -0.75) {};
		\node [style=onewLink] (7) at (0, 0) {};
		\node [style=none] (8) at (0.25, 0.075) {\tiny$n$};
	\end{pgfonlayer}
	\begin{pgfonlayer}{edgelayer}
		\draw [style=simpleB] (7) to (1);
	\end{pgfonlayer}
\end{tikzpicture}
 	%  	}
 	\caption{$\ell = \onepn$}
   \end{subfigure}
   \hfill
   \begin{subfigure}{0.1\textwidth}
   	\centering
    %  	\scalebox{\figscale}{
 	\begin{tikzpicture}
	\begin{pgfonlayer}{nodelayer}
		\node [style=none] (0) at (-1.375, 1.25) {};
		\node [style=none] (1) at (-0.875, 1.25) {};
		\node [style=none] (2) at (-1.375, 0.75) {};
		\node [style=none] (3) at (-0.875, 0.75) {};
		\node [style=none] (4) at (-1.375, 1) {};
		\node [style=none] (5) at (-0.875, 1) {};
		\node [style=none] (6) at (-0.875, 0.5) {};
		\node [style=none] (7) at (-1.375, 0.5) {};
		\node [style=whiteTest] (15) at (-1.125, 0.875) {\scriptsize$R_1$};
		\node [style=botwLink] (16) at (-0.5, 0.75) {};
		\node [style=DOTagent] (17) at (-0.5, 0) {};
		\node [style=none] (18) at (-0.2, 0.825) {\tiny $n$};
		\node [style=none] (19) at (-1.15, 0.5) {};
		\node [style=none] (20) at (-1.15, 0.2) {};
		\node [style=none] (21) at (-1.1, 0.2) {};
		\node [style=none] (22) at (-1.1, 0.5) {};
		\node [style=DOTagent] (23) at (-1.125, 0) {};
	\end{pgfonlayer}
	\begin{pgfonlayer}{edgelayer}
		\draw [style=roundedCornerBlackFill] (2.center)
			 to (0.center)
			 to (1.center)
			 to (3.center);
		\draw [style=roundedCornerBlackFill] (4.center)
			 to (7.center)
			 to (6.center)
			 to (5.center);
		\draw [style=simpleB] (16) to (17);
		\draw [style=simpleB] (19.center) to (20.center);
		\draw [style=simpleB] (22.center) to (21.center);
	\end{pgfonlayer}
\end{tikzpicture}
 	%  	}
 	\caption{\label{subfig:sequential-bottom}$\ell = \bot$}
   \end{subfigure}
   \hfill
   \begin{subfigure}{0.1625\textwidth}
   	\centering
   	%  	\scalebox{\figscale}{
   	\begin{tikzpicture}
	\begin{pgfonlayer}{nodelayer}
		\node [style=none] (0) at (-1.05, 1.75) {};
		\node [style=none] (1) at (-0.25, 1.75) {};
		\node [style=none] (2) at (-1.05, 1.25) {};
		\node [style=none] (3) at (-0.25, 1.25) {};
		\node [style=none] (4) at (-1.05, 1.5) {};
		\node [style=none] (5) at (-0.25, 1.5) {};
		\node [style=none] (6) at (-0.25, 1) {};
		\node [style=none] (7) at (-1.05, 1) {};
		\node [style=none] (10) at (-0.5, 1) {};
		\node [style=whiteTest] (15) at (-0.65, 1.375) {\scriptsize$R_1$};
		\node [style=none] (16) at (0.25, 1.75) {};
		\node [style=none] (17) at (1.05, 1.75) {};
		\node [style=none] (18) at (0.25, 1.25) {};
		\node [style=none] (19) at (1.05, 1.25) {};
		\node [style=none] (20) at (0.25, 1.5) {};
		\node [style=none] (21) at (1.05, 1.5) {};
		\node [style=none] (22) at (1.05, 1) {};
		\node [style=none] (23) at (0.25, 1) {};
		\node [style=none] (26) at (0.5, 1) {};
		\node [style=whiteTest] (31) at (0.65, 1.375) {\scriptsize$R_2$};
		\node [style=tensorwLink] (32) at (0, 0.625) {};
		\node [style=none] (33) at (0.225, 0.275) {\tiny$n$};
		\node [style=DOTagent] (34) at (0, -0.125) {};
		\node [style=none] (35) at (-0.825, 1) {};
		\node [style=none] (36) at (-0.825, 0.7) {};
		\node [style=none] (37) at (-0.775, 0.7) {};
		\node [style=none] (38) at (-0.775, 1) {};
		\node [style=DOTagent] (39) at (-0.8, 0.5) {};
		\node [style=none] (40) at (0.775, 1) {};
		\node [style=none] (41) at (0.775, 0.7) {};
		\node [style=none] (42) at (0.825, 0.7) {};
		\node [style=none] (43) at (0.825, 1) {};
		\node [style=DOTagent] (44) at (0.8, 0.5) {};
	\end{pgfonlayer}
	\begin{pgfonlayer}{edgelayer}
		\draw [style=roundedCornerBlackFill] (2.center)
			 to (0.center)
			 to (1.center)
			 to (3.center);
		\draw [style=roundedCornerBlackFill] (4.center)
			 to (7.center)
			 to (6.center)
			 to (5.center);
		\draw [style=roundedCornerBlackFill] (18.center)
			 to (16.center)
			 to (17.center)
			 to (19.center);
		\draw [style=roundedCornerBlackFill] (20.center)
			 to (23.center)
			 to (22.center)
			 to (21.center);
		\draw [style=out1LEFT] (10.center) to (32);
		\draw [style=out1RIGHT] (26.center) to (32);
		\draw [style=simpleB] (32) to (34);
		\draw [style=simpleB] (35.center) to (36.center);
		\draw [style=simpleB] (38.center) to (37.center);
		\draw [style=simpleB] (40.center) to (41.center);
		\draw [style=simpleB] (43.center) to (42.center);
	\end{pgfonlayer}
\end{tikzpicture}
    %  	}
 	\caption{\label{subfig:sequential-tensor}$\ell = \otimes$}
   \end{subfigure}
   \hfill
   \begin{subfigure}{0.15\textwidth}
    \centering
 	%  	\scalebox{\figscale}{
 	 \begin{tikzpicture}
	\begin{pgfonlayer}{nodelayer}
		\node [style=none] (0) at (-1.55, 1.75) {};
		\node [style=none] (1) at (0.25, 1.75) {};
		\node [style=none] (2) at (-1.55, 1.25) {};
		\node [style=none] (3) at (0.25, 1.25) {};
		\node [style=none] (4) at (-1.55, 1.5) {};
		\node [style=none] (5) at (0.25, 1.5) {};
		\node [style=none] (6) at (0.25, 1) {};
		\node [style=none] (7) at (-1.55, 1) {};
		\node [style=none] (10) at (-1, 1) {};
		\node [style=whiteTest] (15) at (-0.65, 1.375) {\scriptsize$R_1$};
		\node [style=none] (16) at (0, 1) {};
		\node [style=parrwLink] (17) at (-0.5, 0.625) {};
		\node [style=DOTagent] (18) at (-0.5, -0.125) {};
		\node [style=none] (19) at (-0.275, 0.275) {\tiny $n$};
		\node [style=none] (20) at (-1.325, 1) {};
		\node [style=none] (21) at (-1.325, 0.7) {};
		\node [style=none] (22) at (-1.275, 0.7) {};
		\node [style=none] (23) at (-1.275, 1) {};
		\node [style=DOTagent] (24) at (-1.3, 0.5) {};
	\end{pgfonlayer}
	\begin{pgfonlayer}{edgelayer}
		\draw [style=roundedCornerBlackFill] (2.center)
			 to (0.center)
			 to (1.center)
			 to (3.center);
		\draw [style=roundedCornerBlackFill] (4.center)
			 to (7.center)
			 to (6.center)
			 to (5.center);
		\draw [style=out1LEFT] (10.center) to (17);
		\draw [style=out1RIGHT] (16.center) to (17);
		\draw [style=simpleB] (17) to (18);
		\draw [style=simpleB] (20.center) to (21.center);
		\draw [style=simpleB] (23.center) to (22.center);
	\end{pgfonlayer}
\end{tikzpicture}
 	%  	}
 	\caption{\label{subfig:sequential-par}$\ell = \parr$}
   \end{subfigure}
   \hfill \null
   \caption{\label{fig:sequential}Possible shapes of an $n$-\seq \ps, according to the label $\ell$ of $n$.}
  \end{figure}
 	
  Let $R$ be an $n$-\seq \ps with $n$ a $\cutpn$ (resp.~$\otimes$) node, and let $R_1, R_2$ be \seq \pss such that $R$ has the shape illustrated in \Cref{subfig:sequential-cut} (resp.~\Cref{subfig:sequential-tensor}). Then, for each $i \in \{1, 2\}$, exactly one premise of $n$ is in $R_i$.
 \end{remark}
 
 The proof of \Cref{prop:sequentiality-cut-elimination} relies on the following result.
 
 \begin{corollary}
  \label{cor:sequential}
  Let $R$ be an $n$-\seq \ps with $n$ a $\cutpn$ (resp.~$\otimes$) node, and let $R_1, R_2$ be \seq \pss such that $R$ has the shape in \Cref{subfig:sequential-cut} (resp.~\Cref{subfig:sequential-tensor}). Let $i \in \{1, 2\}$, and let $n_i$ be the node of $R_i$ having a premise of $n$ among its conclusions ($n_i$ is uniquely~determined~by \Cref{rmk:exactly-one-premise}). If, for every node $m$ of $R_i$, $R$ is not $m$-\seq, then $R_i$ is $n_i$-\seq.
 \end{corollary}
 
 \begin{proof}
  If $R_i$ is not $n_i$-\seq, then, since $R_i$ is \seq, there exists a node $m \neq n_i$ of $R_i$ such that $R_i$ is $m$-\seq. From $m \neq n_i$ and \Cref{rmk:sequential-terminal} it follows~that~$m$~is~a~terminal~node of $R$. Then $R$ is $m$-\seq, by \Cref{lemma:sequential}.
 	%It is enough to observe that, if $R_i$ is not $n_i$-\seq, then, since $R_i$ is \seq, there is a node $m \neq n_i$ of $G_{R_i}$ such that $R_i$ is $m$-\seq. From $m \neq n_i$ and \Cref{rmk:exactly-one-premise,rmk:sequential-terminal} we deduce that $m$ is a terminal node of $G_R$. Then $R$ is $m$-\seq, by \Cref{lemma:sequential}.
 \end{proof}
 
 The following key result is essentially a variant of Item~1 of Proposition~2 in~\cite{maieli2022proof}.
 
 \begin{lemma}
  \label{lemma:sequential}
  Let $R$ be an $n$-\seq \ps with $n$ a $\cutpn$, $\bot$, $\otimes$ or $\parr$ node. Let $k \coloneq 2$ if $n$ is a $\cutpn$ or $\otimes$ node, $k \coloneq 1$ otherwise, and let $R_1, \dots, R_k$ be \seq \pss such that $R$ has the shape in the sub-figure of \Cref{fig:sequential} determined by the label of $n$. Let $i \in \{1, \dots, k\}$, and let $m$ be a node of $R_i$ such that $m$ is a terminal node of $R$. Finally, suppose that either $n$ is not a $\bot$ node, or $m$ is not an $\axpn$ or $\onepn$ node, and that~either~$n$~is~not~a~$\parr$~node,~or~$m$~is~not~a~$\cutpn$~or~$\otimes$~node. If $R_i$ is $m$-\seq, then $R$ is $m$-\seq.
 \end{lemma}
 
 \begin{proof}
  \begin{figure}
   \centering
   \begin{subfigure}{0.9875\textwidth}
  	\centering
  	% 	\scalebox{\figscale}{
  	\begin{tikzpicture}
	\begin{pgfonlayer}{nodelayer}
		\node [style=none] (72) at (-1.65, 1.25) {};
		\node [style=none] (73) at (-0.85, 1.25) {};
		\node [style=none] (74) at (-1.65, 0.75) {};
		\node [style=none] (75) at (-0.85, 0.75) {};
		\node [style=none] (76) at (-1.65, 1) {};
		\node [style=none] (77) at (-0.85, 1) {};
		\node [style=none] (78) at (-0.85, 0.5) {};
		\node [style=none] (79) at (-1.65, 0.5) {};
		\node [style=none] (82) at (-1.1, 0.5) {};
		\node [style=whiteTest] (86) at (-1.25, 0.875) {\scriptsize$R_1'$};
		\node [style=none] (87) at (-0.1, 1.25) {};
		\node [style=none] (88) at (0.7, 1.25) {};
		\node [style=none] (89) at (-0.1, 0.75) {};
		\node [style=none] (90) at (0.7, 0.75) {};
		\node [style=none] (91) at (-0.1, 1) {};
		\node [style=none] (92) at (0.7, 1) {};
		\node [style=none] (93) at (0.7, 0.5) {};
		\node [style=none] (94) at (-0.1, 0.5) {};
		\node [style=none] (97) at (0.15, 0.5) {};
		\node [style=whiteTest] (101) at (0.3, 0.875) {\scriptsize$R_2'$};
		\node [style=cutwLink] (102) at (-0.475, 0.125) {};
		\node [style=none] (103) at (-0.175, -0.2) {\tiny$m$};
		\node [style=none] (189) at (1.075, 0.5) {$=$};
		\node [style=none] (190) at (-3.775, 1.25) {};
		\node [style=none] (191) at (-2.975, 1.25) {};
		\node [style=none] (192) at (-3.775, 0.75) {};
		\node [style=none] (193) at (-2.975, 0.75) {};
		\node [style=none] (194) at (-3.775, 1) {};
		\node [style=none] (195) at (-2.975, 1) {};
		\node [style=none] (196) at (-2.975, 0.5) {};
		\node [style=none] (197) at (-3.775, 0.5) {};
		\node [style=whiteTest] (198) at (-3.375, 0.875) {\scriptsize$R_1$};
		\node [style=botwLink] (199) at (-4.15, 0.75) {};
		\node [style=DOTagent] (200) at (-4.15, 0) {};
		\node [style=none] (201) at (-3.9, 0.4) {\tiny $n$};
		\node [style=none] (202) at (-3.25, 0.5) {};
		\node [style=none] (203) at (-3.25, 0.2) {};
		\node [style=none] (204) at (-3.2, 0.2) {};
		\node [style=none] (205) at (-3.2, 0.5) {};
		\node [style=DOTagent] (206) at (-3.225, 0) {};
		\node [style=whiteTest] (207) at (-3.525, 0.6) {\tiny$m$};
		\node [style=none] (208) at (-2.6, 0.5) {$=$};
		\node [style=botwLink] (209) at (-2.025, 0.75) {};
		\node [style=DOTagent] (210) at (-2.025, 0) {};
		\node [style=none] (211) at (-1.775, 0.4) {\tiny $n$};
		\node [style=none] (212) at (-1.425, 0.5) {};
		\node [style=none] (213) at (-1.425, 0.2) {};
		\node [style=none] (214) at (-1.375, 0.2) {};
		\node [style=none] (215) at (-1.375, 0.5) {};
		\node [style=DOTagent] (216) at (-1.4, 0) {};
		\node [style=none] (217) at (0.425, 0.5) {};
		\node [style=none] (218) at (0.425, 0.2) {};
		\node [style=none] (219) at (0.475, 0.2) {};
		\node [style=none] (220) at (0.475, 0.5) {};
		\node [style=DOTagent] (221) at (0.45, 0) {};
		\node [style=none] (222) at (1.45, 1.25) {};
		\node [style=none] (223) at (2.55, 1.25) {};
		\node [style=none] (224) at (1.45, 0.75) {};
		\node [style=none] (225) at (2.55, 0.75) {};
		\node [style=none] (226) at (1.45, 1) {};
		\node [style=none] (227) at (2.55, 1) {};
		\node [style=none] (228) at (2.55, 0.5) {};
		\node [style=none] (229) at (1.45, 0.5) {};
		\node [style=none] (230) at (2.3, 0.5) {};
		\node [style=whiteTest] (231) at (2, 0.875) {\scriptsize$R'\vphantom{R_1'}$};
		\node [style=none] (232) at (3.3, 1.25) {};
		\node [style=none] (233) at (4.1, 1.25) {};
		\node [style=none] (234) at (3.3, 0.75) {};
		\node [style=none] (235) at (4.1, 0.75) {};
		\node [style=none] (236) at (3.3, 1) {};
		\node [style=none] (237) at (4.1, 1) {};
		\node [style=none] (238) at (4.1, 0.5) {};
		\node [style=none] (239) at (3.3, 0.5) {};
		\node [style=none] (240) at (3.55, 0.5) {};
		\node [style=whiteTest] (241) at (3.7, 0.875) {\scriptsize$R_2'$};
		\node [style=cutwLink] (242) at (2.925, 0.125) {};
		\node [style=none] (243) at (3.225, -0.2) {\tiny$m$};
		\node [style=none] (244) at (1.975, 0.5) {};
		\node [style=none] (245) at (1.975, 0.2) {};
		\node [style=none] (246) at (2.025, 0.2) {};
		\node [style=none] (247) at (2.025, 0.5) {};
		\node [style=DOTagent] (248) at (2, 0) {};
		\node [style=none] (249) at (3.825, 0.5) {};
		\node [style=none] (250) at (3.825, 0.2) {};
		\node [style=none] (251) at (3.875, 0.2) {};
		\node [style=none] (252) at (3.875, 0.5) {};
		\node [style=DOTagent] (253) at (3.85, 0) {};
		\node [style=none] (254) at (1.7, 0.5) {};
		\node [style=DOTagent] (255) at (1.7, 0) {};
		\node [style=whiteTest] (256) at (1.7, 0.6) {\tiny$n$};
	\end{pgfonlayer}
	\begin{pgfonlayer}{edgelayer}
		\draw [style=roundedCornerBlackFill] (74.center)
			 to (72.center)
			 to (73.center)
			 to (75.center);
		\draw [style=roundedCornerBlackFill] (76.center)
			 to (79.center)
			 to (78.center)
			 to (77.center);
		\draw [style=roundedCornerBlackFill] (89.center)
			 to (87.center)
			 to (88.center)
			 to (90.center);
		\draw [style=roundedCornerBlackFill] (91.center)
			 to (94.center)
			 to (93.center)
			 to (92.center);
		\draw [style=out1LEFT] (82.center) to (102);
		\draw [style=out1RIGHT] (97.center) to (102);
		\draw [style=roundedCornerBlackFill] (192.center)
			 to (190.center)
			 to (191.center)
			 to (193.center);
		\draw [style=roundedCornerBlackFill] (194.center)
			 to (197.center)
			 to (196.center)
			 to (195.center);
		\draw [style=simpleB] (199) to (200);
		\draw [style=simpleB] (202.center) to (203.center);
		\draw [style=simpleB] (205.center) to (204.center);
		\draw [style=simpleB] (209) to (210);
		\draw [style=simpleB] (212.center) to (213.center);
		\draw [style=simpleB] (215.center) to (214.center);
		\draw [style=simpleB] (217.center) to (218.center);
		\draw [style=simpleB] (220.center) to (219.center);
		\draw [style=roundedCornerBlackFill] (224.center)
			 to (222.center)
			 to (223.center)
			 to (225.center);
		\draw [style=roundedCornerBlackFill] (226.center)
			 to (229.center)
			 to (228.center)
			 to (227.center);
		\draw [style=roundedCornerBlackFill] (234.center)
			 to (232.center)
			 to (233.center)
			 to (235.center);
		\draw [style=roundedCornerBlackFill] (236.center)
			 to (239.center)
			 to (238.center)
			 to (237.center);
		\draw [style=out1LEFT] (230.center) to (242);
		\draw [style=out1RIGHT] (240.center) to (242);
		\draw [style=simpleB] (244.center) to (245.center);
		\draw [style=simpleB] (247.center) to (246.center);
		\draw [style=simpleB] (249.center) to (250.center);
		\draw [style=simpleB] (252.center) to (251.center);
		\draw [style=simpleB] (254.center) to (255);
	\end{pgfonlayer}
\end{tikzpicture}
  	% 	}
   	\caption{$n$ is a $\bot$ node and $m$ is a $\cutpn$ node.}
   \end{subfigure} \vskip 1em
   \begin{subfigure}{0.9875\textwidth}
 	\centering
 	% 	\scalebox{\figscale}{
 	\begin{tikzpicture}
	\begin{pgfonlayer}{nodelayer}
		\node [style=none] (72) at (-3, 1.25) {};
		\node [style=none] (73) at (-0.9, 1.25) {};
		\node [style=none] (74) at (-3, 0.75) {};
		\node [style=none] (75) at (-0.9, 0.75) {};
		\node [style=none] (76) at (-3, 1) {};
		\node [style=none] (77) at (-0.9, 1) {};
		\node [style=none] (78) at (-0.9, 0.5) {};
		\node [style=none] (79) at (-3, 0.5) {};
		\node [style=none] (80) at (-2.775, 0.5) {};
		\node [style=none] (81) at (-1.15, 0.5) {};
		\node [style=DOTagent] (84) at (-1.15, 0) {};
		\node [style=whiteTest] (86) at (-1.95, 0.875) {\scriptsize$R_1$};
		\node [style=whiteTest] (105) at (-1.15, 0.6) {\tiny$m$};
		\node [style=none] (189) at (-0.525, 0.5) {$=$};
		\node [style=none] (191) at (-2.775, 0.2) {};
		\node [style=none] (192) at (-2.725, 0.2) {};
		\node [style=none] (193) at (-2.725, 0.5) {};
		\node [style=DOTagent] (194) at (-2.75, 0) {};
		\node [style=none] (195) at (-2.45, 0.5) {};
		\node [style=none] (196) at (-1.45, 0.5) {};
		\node [style=parrwLink] (197) at (-1.95, 0.125) {};
		\node [style=DOTagent] (198) at (-1.95, -0.625) {};
		\node [style=none] (199) at (-1.725, -0.225) {\tiny$n$};
		\node [style=none] (200) at (-0.15, 1.25) {};
		\node [style=none] (201) at (2.95, 1.25) {};
		\node [style=none] (202) at (-0.15, 0.75) {};
		\node [style=none] (203) at (2.95, 0.75) {};
		\node [style=none] (204) at (-0.15, 1) {};
		\node [style=none] (205) at (2.95, 1) {};
		\node [style=none] (206) at (2.95, 0.5) {};
		\node [style=none] (207) at (-0.15, 0.5) {};
		\node [style=none] (208) at (0.075, 0.5) {};
		\node [style=whiteTest] (211) at (1.4, 0.875) {\scriptsize$R_1'$};
		\node [style=none] (213) at (0.075, 0.2) {};
		\node [style=none] (214) at (0.125, 0.2) {};
		\node [style=none] (215) at (0.125, 0.5) {};
		\node [style=DOTagent] (216) at (0.1, 0) {};
		\node [style=none] (217) at (1.7, 0.5) {};
		\node [style=none] (218) at (2.7, 0.5) {};
		\node [style=parrwLink] (219) at (2.2, 0.125) {};
		\node [style=DOTagent] (220) at (2.2, -0.625) {};
		\node [style=none] (221) at (2.425, -0.225) {\tiny$m$};
		\node [style=none] (222) at (0.4, 0.5) {};
		\node [style=none] (223) at (1.4, 0.5) {};
		\node [style=parrwLink] (224) at (0.9, 0.125) {};
		\node [style=DOTagent] (225) at (0.9, -0.625) {};
		\node [style=none] (226) at (1.125, -0.225) {\tiny$n$};
		\node [style=none] (227) at (3.325, 0.5) {$=$};
		\node [style=none] (228) at (3.7, 1.25) {};
		\node [style=none] (229) at (5.8, 1.25) {};
		\node [style=none] (230) at (3.7, 0.75) {};
		\node [style=none] (231) at (5.8, 0.75) {};
		\node [style=none] (232) at (3.7, 1) {};
		\node [style=none] (233) at (5.8, 1) {};
		\node [style=none] (234) at (5.8, 0.5) {};
		\node [style=none] (235) at (3.7, 0.5) {};
		\node [style=none] (236) at (3.925, 0.5) {};
		\node [style=whiteTest] (237) at (4.75, 0.875) {\scriptsize$R'\vphantom{R_1'}$};
		\node [style=none] (238) at (3.925, 0.2) {};
		\node [style=none] (239) at (3.975, 0.2) {};
		\node [style=none] (240) at (3.975, 0.5) {};
		\node [style=DOTagent] (241) at (3.95, 0) {};
		\node [style=none] (247) at (4.55, 0.5) {};
		\node [style=none] (248) at (5.55, 0.5) {};
		\node [style=parrwLink] (249) at (5.05, 0.125) {};
		\node [style=DOTagent] (250) at (5.05, -0.625) {};
		\node [style=none] (251) at (5.275, -0.225) {\tiny$m$};
		\node [style=none] (252) at (4.25, 0.5) {};
		\node [style=DOTagent] (253) at (4.25, 0) {};
		\node [style=whiteTest] (254) at (4.25, 0.6) {\tiny$n$};
	\end{pgfonlayer}
	\begin{pgfonlayer}{edgelayer}
		\draw [style=roundedCornerBlackFill] (74.center)
			 to (72.center)
			 to (73.center)
			 to (75.center);
		\draw [style=roundedCornerBlackFill] (76.center)
			 to (79.center)
			 to (78.center)
			 to (77.center);
		\draw [style=simpleB] (81.center) to (84);
		\draw [style=simpleB] (80.center) to (191.center);
		\draw [style=simpleB] (193.center) to (192.center);
		\draw [style=out1LEFT] (195.center) to (197);
		\draw [style=out1RIGHT] (196.center) to (197);
		\draw [style=simpleB] (197) to (198);
		\draw [style=roundedCornerBlackFill] (202.center)
			 to (200.center)
			 to (201.center)
			 to (203.center);
		\draw [style=roundedCornerBlackFill] (204.center)
			 to (207.center)
			 to (206.center)
			 to (205.center);
		\draw [style=simpleB] (208.center) to (213.center);
		\draw [style=simpleB] (215.center) to (214.center);
		\draw [style=out1LEFT] (217.center) to (219);
		\draw [style=out1RIGHT] (218.center) to (219);
		\draw [style=simpleB] (219) to (220);
		\draw [style=out1LEFT] (222.center) to (224);
		\draw [style=out1RIGHT] (223.center) to (224);
		\draw [style=simpleB] (224) to (225);
		\draw [style=roundedCornerBlackFill] (230.center)
			 to (228.center)
			 to (229.center)
			 to (231.center);
		\draw [style=roundedCornerBlackFill] (232.center)
			 to (235.center)
			 to (234.center)
			 to (233.center);
		\draw [style=simpleB] (236.center) to (238.center);
		\draw [style=simpleB] (240.center) to (239.center);
		\draw [style=out1LEFT] (247.center) to (249);
		\draw [style=out1RIGHT] (248.center) to (249);
		\draw [style=simpleB] (249) to (250);
		\draw [style=simpleB] (252.center) to (253);
	\end{pgfonlayer}
\end{tikzpicture}
 	% 	}
 	\caption{$n$ and $m$ are $\parr$ nodes.}
   \end{subfigure} \vskip 1em
   \begin{subfigure}{0.9875\textwidth}
   	\centering
   	% 	\scalebox{\figscale}{
    \begin{tikzpicture}
	\begin{pgfonlayer}{nodelayer}
		\node [style=none] (72) at (-1.575, 1.25) {};
		\node [style=none] (73) at (-0.475, 1.25) {};
		\node [style=none] (74) at (-1.575, 0.75) {};
		\node [style=none] (75) at (-0.475, 0.75) {};
		\node [style=none] (76) at (-1.575, 1) {};
		\node [style=none] (77) at (-0.475, 1) {};
		\node [style=none] (78) at (-0.475, 0.5) {};
		\node [style=none] (79) at (-1.575, 0.5) {};
		\node [style=none] (82) at (-0.725, 0.5) {};
		\node [style=whiteTest] (86) at (-1.025, 0.875) {\scriptsize$R_1$};
		\node [style=none] (87) at (0.275, 1.25) {};
		\node [style=none] (88) at (1.125, 1.25) {};
		\node [style=none] (89) at (0.275, 0.75) {};
		\node [style=none] (90) at (1.125, 0.75) {};
		\node [style=none] (91) at (0.275, 1) {};
		\node [style=none] (92) at (1.125, 1) {};
		\node [style=none] (93) at (1.125, 0.5) {};
		\node [style=none] (94) at (0.275, 0.5) {};
		\node [style=none] (97) at (0.525, 0.5) {};
		\node [style=whiteTest] (101) at (0.7, 0.875) {\scriptsize$R_2$};
		\node [style=cutwLink] (102) at (-0.1, 0.125) {};
		\node [style=none] (103) at (0.2, -0.2) {\tiny$n$};
		\node [style=whiteTest] (105) at (-1.325, 0.6) {\tiny$m$};
		\node [style=none] (106) at (3.475, 1.25) {};
		\node [style=none] (107) at (4.575, 1.25) {};
		\node [style=none] (108) at (3.475, 0.75) {};
		\node [style=none] (109) at (4.575, 0.75) {};
		\node [style=none] (110) at (3.475, 1) {};
		\node [style=none] (111) at (4.575, 1) {};
		\node [style=none] (112) at (4.575, 0.5) {};
		\node [style=none] (113) at (3.475, 0.5) {};
		\node [style=none] (116) at (4.325, 0.5) {};
		\node [style=whiteTest] (120) at (4.025, 0.875) {\scriptsize$R_1'$};
		\node [style=none] (121) at (5.325, 1.25) {};
		\node [style=none] (122) at (6.175, 1.25) {};
		\node [style=none] (123) at (5.325, 0.75) {};
		\node [style=none] (124) at (6.175, 0.75) {};
		\node [style=none] (125) at (5.325, 1) {};
		\node [style=none] (126) at (6.175, 1) {};
		\node [style=none] (127) at (6.175, 0.5) {};
		\node [style=none] (128) at (5.325, 0.5) {};
		\node [style=none] (131) at (5.575, 0.5) {};
		\node [style=whiteTest] (135) at (5.75, 0.875) {\scriptsize$R_2$};
		\node [style=cutwLink] (136) at (4.95, 0.125) {};
		\node [style=none] (137) at (5.25, -0.2) {\tiny$n$};
		\node [style=none] (138) at (1.875, 1.25) {};
		\node [style=none] (139) at (2.725, 1.25) {};
		\node [style=none] (140) at (1.875, 0.75) {};
		\node [style=none] (141) at (2.725, 0.75) {};
		\node [style=none] (142) at (1.875, 1) {};
		\node [style=none] (143) at (2.725, 1) {};
		\node [style=none] (144) at (2.725, 0.5) {};
		\node [style=none] (145) at (1.875, 0.5) {};
		\node [style=none] (148) at (2.475, 0.5) {};
		\node [style=whiteTest] (152) at (2.3, 0.875) {\scriptsize$R_2'$};
		\node [style=cutwLink] (153) at (3.1, 0.125) {};
		\node [style=none] (154) at (3.725, 0.5) {};
		\node [style=none] (156) at (3.4, -0.2) {\tiny$m$};
		\node [style=none] (189) at (1.5, 0.5) {$=$};
		\node [style=none] (190) at (6.55, 0.5) {$=$};
		\node [style=none] (191) at (-1.05, 0.5) {};
		\node [style=none] (192) at (-1.05, 0.2) {};
		\node [style=none] (193) at (-1, 0.2) {};
		\node [style=none] (194) at (-1, 0.5) {};
		\node [style=DOTagent] (195) at (-1.025, 0) {};
		\node [style=none] (196) at (0.8, 0.5) {};
		\node [style=none] (197) at (0.8, 0.2) {};
		\node [style=none] (198) at (0.85, 0.2) {};
		\node [style=none] (199) at (0.85, 0.5) {};
		\node [style=DOTagent] (200) at (0.825, 0) {};
		\node [style=none] (201) at (2.15, 0.5) {};
		\node [style=none] (202) at (2.15, 0.2) {};
		\node [style=none] (203) at (2.2, 0.2) {};
		\node [style=none] (204) at (2.2, 0.5) {};
		\node [style=DOTagent] (205) at (2.175, 0) {};
		\node [style=none] (206) at (4, 0.5) {};
		\node [style=none] (207) at (4, 0.2) {};
		\node [style=none] (208) at (4.05, 0.2) {};
		\node [style=none] (209) at (4.05, 0.5) {};
		\node [style=DOTagent] (210) at (4.025, 0) {};
		\node [style=none] (211) at (5.85, 0.5) {};
		\node [style=none] (212) at (5.85, 0.2) {};
		\node [style=none] (213) at (5.9, 0.2) {};
		\node [style=none] (214) at (5.9, 0.5) {};
		\node [style=DOTagent] (215) at (5.875, 0) {};
		\node [style=none] (216) at (6.925, 1.25) {};
		\node [style=none] (217) at (7.725, 1.25) {};
		\node [style=none] (218) at (6.925, 0.75) {};
		\node [style=none] (219) at (7.725, 0.75) {};
		\node [style=none] (220) at (6.925, 1) {};
		\node [style=none] (221) at (7.725, 1) {};
		\node [style=none] (222) at (7.725, 0.5) {};
		\node [style=none] (223) at (6.925, 0.5) {};
		\node [style=none] (224) at (7.475, 0.5) {};
		\node [style=whiteTest] (225) at (7.325, 0.875) {\scriptsize$R_2'$};
		\node [style=none] (226) at (8.475, 1.25) {};
		\node [style=none] (227) at (9.575, 1.25) {};
		\node [style=none] (228) at (8.475, 0.75) {};
		\node [style=none] (229) at (9.575, 0.75) {};
		\node [style=none] (230) at (8.475, 1) {};
		\node [style=none] (231) at (9.575, 1) {};
		\node [style=none] (232) at (9.575, 0.5) {};
		\node [style=none] (233) at (8.475, 0.5) {};
		\node [style=none] (234) at (8.725, 0.5) {};
		\node [style=whiteTest] (235) at (9.025, 0.875) {\scriptsize$R'\vphantom{R_1'}$};
		\node [style=cutwLink] (236) at (8.1, 0.125) {};
		\node [style=none] (237) at (8.4, -0.2) {\tiny$m$};
		\node [style=whiteTest] (238) at (9.325, 0.6) {\tiny$n$};
		\node [style=none] (239) at (7.15, 0.5) {};
		\node [style=none] (240) at (7.15, 0.2) {};
		\node [style=none] (241) at (7.2, 0.2) {};
		\node [style=none] (242) at (7.2, 0.5) {};
		\node [style=DOTagent] (243) at (7.175, 0) {};
		\node [style=none] (244) at (9, 0.5) {};
		\node [style=none] (245) at (9, 0.2) {};
		\node [style=none] (246) at (9.05, 0.2) {};
		\node [style=none] (247) at (9.05, 0.5) {};
		\node [style=DOTagent] (248) at (9.025, 0) {};
	\end{pgfonlayer}
	\begin{pgfonlayer}{edgelayer}
		\draw [style=roundedCornerBlackFill] (74.center)
			 to (72.center)
			 to (73.center)
			 to (75.center);
		\draw [style=roundedCornerBlackFill] (76.center)
			 to (79.center)
			 to (78.center)
			 to (77.center);
		\draw [style=roundedCornerBlackFill] (89.center)
			 to (87.center)
			 to (88.center)
			 to (90.center);
		\draw [style=roundedCornerBlackFill] (91.center)
			 to (94.center)
			 to (93.center)
			 to (92.center);
		\draw [style=out1LEFT] (82.center) to (102);
		\draw [style=out1RIGHT] (97.center) to (102);
		\draw [style=roundedCornerBlackFill] (108.center)
			 to (106.center)
			 to (107.center)
			 to (109.center);
		\draw [style=roundedCornerBlackFill] (110.center)
			 to (113.center)
			 to (112.center)
			 to (111.center);
		\draw [style=roundedCornerBlackFill] (123.center)
			 to (121.center)
			 to (122.center)
			 to (124.center);
		\draw [style=roundedCornerBlackFill] (125.center)
			 to (128.center)
			 to (127.center)
			 to (126.center);
		\draw [style=out1LEFT] (116.center) to (136);
		\draw [style=out1RIGHT] (131.center) to (136);
		\draw [style=roundedCornerBlackFill] (140.center)
			 to (138.center)
			 to (139.center)
			 to (141.center);
		\draw [style=roundedCornerBlackFill] (142.center)
			 to (145.center)
			 to (144.center)
			 to (143.center);
		\draw [style=out1LEFT] (148.center) to (153);
		\draw [style=out1RIGHT] (154.center) to (153);
		\draw [style=simpleB] (191.center) to (192.center);
		\draw [style=simpleB] (194.center) to (193.center);
		\draw [style=simpleB] (196.center) to (197.center);
		\draw [style=simpleB] (199.center) to (198.center);
		\draw [style=simpleB] (201.center) to (202.center);
		\draw [style=simpleB] (204.center) to (203.center);
		\draw [style=simpleB] (206.center) to (207.center);
		\draw [style=simpleB] (209.center) to (208.center);
		\draw [style=simpleB] (211.center) to (212.center);
		\draw [style=simpleB] (214.center) to (213.center);
		\draw [style=roundedCornerBlackFill] (218.center)
			 to (216.center)
			 to (217.center)
			 to (219.center);
		\draw [style=roundedCornerBlackFill] (220.center)
			 to (223.center)
			 to (222.center)
			 to (221.center);
		\draw [style=roundedCornerBlackFill] (228.center)
			 to (226.center)
			 to (227.center)
			 to (229.center);
		\draw [style=roundedCornerBlackFill] (230.center)
			 to (233.center)
			 to (232.center)
			 to (231.center);
		\draw [style=out1LEFT] (224.center) to (236);
		\draw [style=out1RIGHT] (234.center) to (236);
		\draw [style=simpleB] (239.center) to (240.center);
		\draw [style=simpleB] (242.center) to (241.center);
		\draw [style=simpleB] (244.center) to (245.center);
		\draw [style=simpleB] (247.center) to (246.center);
	\end{pgfonlayer}
\end{tikzpicture}
 	% 	}
 	\caption{$n$ and $m$ are $\cutpn$ nodes.}
   \end{subfigure}
   \caption{\label{fig:sequential-proof}A few particular and interesting cases of the proof of \Cref{lemma:sequential}.}
  \end{figure}
 	
  We can suppose $i \coloneq 1$ without loss of generality. Since $m$ is a terminal node of~$R$,~$m$ has no premise of $n$ among its conclusions.
 	
  We claim that $m$ is not an $\axpn$ or $\onepn$ node. If $n$ is a $\bot$ node, then this is known by hypothesis. Otherwise, $n$ is a $\cutpn$, $\otimes$ or $\parr$ node. Then $R_1$ contains not only $m$, but also a node having a premise of $n$ among its conclusions. Since $R_1$ is $m$-\seq, $m$ cannot be an $\axpn$ or $\onepn$ node.
 	
  We have then established that $m$ is a $\cutpn$, $\bot$, $\otimes$ or $\parr$ node. Let $h \coloneq 2$ if $m$ is a $\cutpn$ or $\otimes$ node, $h \coloneq 1$ otherwise. Since $R_1$ is $m$-\seq, there exist \seq \pss $R_1', \dots R_h'$ such that $R_1$ has the shape in the sub-figure of \Cref{fig:sequential} (in the picture, $R_j'$ replaces $R_j$ for every $j \in \{1, \dots, h\}$, and $m$ replaces $n$) determined by the label of $m$. Since $m$ has no premise of $n$ among its conclusions, for every premise $a$ of $n$ in~$R_1$,~there~exists~$j \in \{1, \dots, h\}$~such~that~$a$~is in $R_j'$.
 	
  We claim that we can swap these quantifiers: there exists $j \in \{1, \dots, h\}$ such that, for every premise $a$ of $n$ in $R_1$, $a$ is in $R_j'$. This is trivial if $n$ is a $\bot$ node, or if $h = 1$. We then assume that $n$ is a $\cutpn$, $\otimes$ or $\parr$ node, and that $h = 2$. Then $m$ is a $\cutpn$ or $\otimes$ node. By hypothesis, $n$ is not a $\parr$ node. In other words, $n$ is a $\cutpn$ or $\otimes$ node. By~\Cref{rmk:exactly-one-premise},~exactly~one~premise~$a$~of~$n$~is~in $R_1$,~and~we~already~know~that~there~exists~$j \in \{1, \dots, h\}$~such~that~$a$~is~in $R_j'$.
 	
  We can now suppose, without loss of generality, that every premise of $n$ in $R_1$ is in $R_1'$. Let $R'$ be the \ps obtained from $R$ by replacing $R_1$ with $R_1'$. Since $R_1', R_2, \dots, R_k$ are \seq, we obtain that $R'$ is $n$-\seq. Finally, since~$R', R_2', \dots, R_h'$~are~\seq,~we~can conclude that $R$ is $m$-\seq (see \Cref{fig:sequential-proof}).
 \end{proof}
 
 Consider the $n$-\seq \pss in \Cref{fig:sequential-uniqueness}.
 
 \begin{figure}
  \centering
  \begin{subfigure}{0.2\textwidth}
   \centering
   %   \scalebox{\figscale}{
   \begin{tikzpicture}
	\begin{pgfonlayer}{nodelayer}
		\node [style=onewLink] (10) at (-0.75, 1) {};
		\node [style=botwLink] (24) at (-1.5, 1) {};
		\node [style=onewLink] (26) at (0.5, 1) {};
		\node [style=DOTagent] (30) at (-1.5, 0.25) {};
		\node [style=cutwLink] (32) at (-0.125, 0.5) {};
		\node [style=none] (33) at (0.175, 0.175) {\tiny$n$};
		\node [style=none] (35) at (-1.2, 1.075) {\tiny$m$};
	\end{pgfonlayer}
	\begin{pgfonlayer}{edgelayer}
		\draw [style=simpleB] (24) to (30);
		\draw [style=out1LEFT] (10) to (32);
		\draw [style=out1RIGHT] (26) to (32);
	\end{pgfonlayer}
\end{tikzpicture}
   %   }
   \caption{$R_\cutpn$.}
  \end{subfigure}
  \begin{subfigure}{0.2\textwidth}
   \centering
   %   \scalebox{\figscale}{
   \begin{tikzpicture}
	\begin{pgfonlayer}{nodelayer}
		\node [style=botwLink] (24) at (1.25, 1) {};
		\node [style=onewLink] (26) at (0.5, 1) {};
		\node [style=DOTagent] (30) at (1.25, 0.25) {};
		\node [style=DOTagent] (31) at (0.5, 0.25) {};
		\node [style=none] (32) at (0.75, 1.075) {\tiny$m$};
		\node [style=none] (33) at (1.55, 1.075) {\tiny$n$};
	\end{pgfonlayer}
	\begin{pgfonlayer}{edgelayer}
		\draw [style=simpleB, in=90, out=-90] (24) to (30);
		\draw [style=simpleB] (26) to (31);
	\end{pgfonlayer}
\end{tikzpicture}
   %   }
   \caption{$R_\rob$.}
  \end{subfigure}
  \begin{subfigure}{0.2\textwidth}
   \centering
   %   \scalebox{\figscale}{
   \begin{tikzpicture}
	\begin{pgfonlayer}{nodelayer}
		\node [style=DOTagent] (1) at (-0.5, -0.75) {};
		\node [style=tensorwLink] (7) at (-0.5, 0) {};
		\node [style=parrwLink] (8) at (0.5, 0) {};
		\node [style=axwLink] (9) at (-0.75, 0.625) {};
		\node [style=axwLink] (10) at (0.75, 0.625) {};
		\node [style=DOTagent] (11) at (0.5, -0.75) {};
		\node [style=none] (12) at (0.725, -0.35) {\tiny$n$};
		\node [style=none] (13) at (-0.275, -0.35) {\tiny$m$};
	\end{pgfonlayer}
	\begin{pgfonlayer}{edgelayer}
		\draw [style=simpleB] (7) to (1);
		\draw [style=DoACRIGHT] (9) to (8);
		\draw [style=DoACLEFT] (10) to (7);
		\draw [style=DoAC1LEFT] (9) to (7);
		\draw [style=DoAC1RIGHT] (10) to (8);
		\draw [style=simpleB] (8) to (11);
	\end{pgfonlayer}
\end{tikzpicture}
   %   }
   \caption{$R_\rtp$.}
  \end{subfigure}
  \caption{\label{fig:sequential-uniqueness}$n$-\seq \pss.}
 \end{figure}
 
 \begin{example}
  We can apply \Cref{lemma:sequential} to $R_\cutpn$. In this case, we have that $n$ is a~$\cutpn$~node and $m$ is a $\bot$ node. \Cref{lemma:sequential} entails that $R_\cutpn$ is $m$-\seq.
 \end{example}
 
 \begin{remark}
  The \pss $R_\rob$ and $R_\rtp$ are not $m$-\seq. These crucial examples (and obvious variants) show that \Cref{lemma:sequential} cannot be extended to the cases in which~$n$~is~a~$\bot$~node and $m$ is an $\axpn$ or $\onepn$ node, or $n$ is a $\parr$ node and $m$ is a $\cutpn$ or $\otimes$ node.
 \end{remark}
 
 \sequentialityCutElimination*
 
 \begin{proof}
  We reason by induction on $\sharp \ar(R)$. Let $n$ be the $\cutpn$ node reduced in the \cutelim step $R \to R'$. Suppose that there exists a node $m$ of $R$ such that $m \neq n$ and $R$ is $m$-\seq. Since $R$ contains $n$ and $m \neq n$, $m$ cannot be an $\axpn$ or $\onepn$ node. Therefore, $m$ is a $\cutpn$, $\bot$, $\otimes$ or $\parr$ node. By definition, there exist \seq \pss $R_1, R_2$ such that $R$ has the shape in the sub-figure of \Cref{fig:sequential} (in the figure, $m$ replaces $n$) determined by the label of $m$. Since $m \neq n$, we can suppose, without loss of generality, that $n$ is a node of $R_1$. Let~$k \coloneq 2$~if~$n$~is~a~$\cutpn$~or~$\otimes$ node, $k \coloneq 1$ otherwise. Then:
  \[
   \sharp \ar(R_1) \leq \sum_{i = 1}^k \sharp \ar(R_i) \leq \sharp \ar(R)
  \]
  If $n$ is a $\cutpn$ node, then the first inequality is strict, otherwise the second inequality is strict. In every case, $\sharp \ar(R_1) < \sharp \ar(R)$. Now consider the \ps $R_1'$ obtained from $R_1$ by reducing $n$. Since we have $R_1 \tocut R_1'$, we can apply the induction hypothesis on $R_1$, from which we deduce that $R_1'$ is \seq. Since $R'$ has the shape in \Cref{fig:sequential} (in the picture,~$R_1'$~replaces~$R_1$,~and~$m$ replaces $n$) determined by the label of $m$, and~since~$R_1', R_2$~are~\seq,~we~have~that~$R'$~is~\seq.
 	
  Now suppose that, for every node $m$ of $R$, we have that $R$ is not $m$-\seq or $m = n$. Since $R$ is \seq, we obtain that $R$ is $n$-\seq. Hence, there exist \seq \pss~$R_1, R_2$ such that $R$ has the shape in \Cref{subfig:sequential-cut}. For each $i \in \{1, 2\}$, let $n_i$ be the node of $R_i$ having a premise of $n$ among its conclusions. By \Cref{cor:sequential}, we know that $R_i$ is $n_i$-\seq. Hence, there are \seq \pss $R_i', R_i''$ such that $R_i$ has the shape~in~\Cref{fig:sequential}~(in~the~figure,~$R_i'$~replaces $R_1$, $R_i''$ replaces $R_2$, and $n_i$ replaces $n$) determined by the label of $n_i$. We consider the~cases:
  \begin{itemize}
   \item
  	\emph{\Axcut:} Then at least one between $n_1$ and $n_2$ is an $\axpn$ node. We can suppose that $n_1$ is an $\axpn$ node without loss of generality. Then $R' = R_2$. Hence, we have that $R'$~is~\seq;
   \item
  	\emph{\Ucut:} By possibly exchanging the roles of $n_1$ and $n_2$, one may assume that $n_1$~is~a~$\onepn$~node and $n_2$ is a $\bot$ node. Then $R' = R_2'$,~so $R'$ is \seq;
   \item
  	\emph{\Mcut:} By possibly exchanging the roles of $n_1$ and $n_2$, one may suppose that $n_1$ is a $\otimes$ node and $n_2$ is a $\parr$ node. For each $i \in \{1, 2\}$, let $a_i'$ (resp.~$a_i''$) be the~left~(resp.~right) premise of $n_i$. Then $a_1'$ is an arc of $R_1'$, $a_1''$ is an arc of $R_1''$, and $a_2', a_2''$ are arcs of $R_2'$. Let $R_0$ be the \ps obtained from $R_1'$ and $R_2'$ by adding a $\cutpn$~node~$n_0$~having~$a_1'$~and~$a_2'$~as~premises. Since $R_0$ has the shape in \Cref{subfig:sequential-cut} (in the picture, $R_i'$ replaces $R_i$ for each $i \in \{1, 2\}$,~and $n_0$ replaces $n$), and since $R_1', R_2'$ are \seq, $R_0$ is $n_0$-\seq. We now notice that $R'$ is obtained from $R_1''$ and $R_0$ by adding a $\cutpn$ node $n'$ having $a_1''$ and $a_2''$ as premises. Since $R'$ has the shape~in~\Cref{subfig:sequential-cut}~(in~the picture, $R_1''$ replaces $R_1$,~$R_0$~replaces~$R_2$,~and~$n'$~replaces $n$), and since $R_1'', R_0$ are \seq, we can conclude that $R'$ is $n'$-\seq. \qedhere
 	\end{itemize}
 \end{proof}

 \subsection{Complements of \texorpdfstring{\Cref{sec:untyped-sequentiality-theorem}}{Section 3}}
 \label{subsec:cw-forall}
 
 \begin{remark}
  \label{rmk:switching-nodes}
  Let $R$ be a \ps, let $\varphi$ be a switching of $R$ and let $n$ be a node of $R^\varphi$. Then $n$ is either a node of $R$, or a $\bullet$ node.
 \end{remark}
 
 \begin{lemma}
  \label{lemma:erasing-tree}
  Let $R$ be a \ps, let $n$ be an \er node of $R$, and let $T$ be the~subgraph~of~$R$~that has the set $\smash{\{n\} \cup \{m : m \prec n\}}$ as set of nodes. Then $T$ is a rooted~tree~with~root~$n$, such that:
  \begin{enumerate}[(i)]
   \item
    Every node of $T$ is \er;
   \item
    A node of $T$ is a $\bot$ node if and only if it is a leaf of $T$.
  \end{enumerate}
 \end{lemma}
 
 \begin{proof}
  By a straightforward induction on the order $\prec$ (see \Cref{def:erasing-order}).
 \end{proof}
 
 \begin{corollary}
  \label{cor:erasing-above}
  Let $R$ be a \ps, let $n_1$ and $n_2$ be nodes of $R$, and let $\gamma$~be~a~path~of~$R$~from~$n_1$~to $n_2$ having a premise of $n_1$ as its first arc. If $n_1$ is \er, then $n_2$ is \er.
 \end{corollary}
 
 \begin{proof}
  Immediate consequence of \Cref{lemma:erasing-tree}.
 \end{proof}
 
 \cwForall*
 
 \begin{proof}
  Let us assume that $R \models \Cwf$ and consider a \wsp $\gamma$ of $R$ starting from an \er node $n$ and having the conclusion of $n$ as its first arc. Then,~by~\Cref{rmk:w-switching-paths},~there~exists a \wcs $\psi$ of $R$ such that $\gamma$ is a path of $R^\psi$. Since $R \models \Cwf$,~we~have~$R^\psi \models \Cwf$, meaning that every connected component of $R^\psi$ either has no \er nodes of $R$, or is a \thr. Since $n$ is an \er node of $R$, the connected component of $R^\psi$ which contains~$n$~is~a \thr. We can then conclude that $\gamma$ crosses only \er nodes of $R$ by~\Cref{rmk:thread}.
  
  Conversely, assume that, for every \wsp $\gamma$ of $R$ starting from an \er node $n$ of $R$ and having the conclusion of $n$ as its first arc, $\gamma$ crosses only \er nodes of $R$. Let us consider a \wcs $\psi$ of $R$ and $C \in \cc(R^\psi)$. By \Cref{rmk:thread}, it suffices to prove that, if $C$ has an \er node of $R$, then $C$ contains only \er nodes of $R$ and an (\er) $\bullet$ node of $R^\psi$. We then suppose that $C$ has an \er node $n_1$~of~$R$~and~we~consider~a~node $n_2$ of $C$ different from $n_1$. By \Cref{rmk:switching-nodes}, we know that $n_2$ is either a node of $R$, or a $\bullet$ node. It is enough to prove that, if $n_2$ is a node of $R$, then $n_2$ is \er in $R$. Since~$C$~is~connected,~there exists a path $\gamma$ of $R^\psi$ from $n_1$ to $n_2$. And~since~$n_1$~and $n_2$ are both nodes of $R$, we have that $\gamma$ is a path of $R$. By~\Cref{rmk:w-switching-paths},~$\gamma$~is~a~\wsp of $R$. There~are~two~possibilities:~either $\gamma$ has a premise of $n_1$ as its first arc, and then $n_2$ is an \er node of $R$ by \Cref{cor:erasing-above}, or $\gamma$ has the conclusion of $n_1$ as its first arc, and then $n_2$ is \er in $R$ because our hypothesis ensures that $\gamma$ crosses only \er nodes of $R$.
 \end{proof}
 
 \cwForallTensor*
 
 \begin{proof}
  The implication~\ref{itm:cw-forall}~$\Rightarrow$~\ref{itm:cw-forall-existential} is trivial. We observe that~\ref{itm:cw-forall-existential}~$\Rightarrow$~\ref{itm:wten-graph}: if $R$ is not a $\wten$-\ps, then there is a $\cutpn$ or $\otimes$ node $n$ of $R$ such that a premise of $n$ is the conclusion of an \er node $m$ of $R$. Then, for every (\wc) switching $\psi$ of $R$, the connected component $C$ of $R^\psi$ containing $m$ (an \er node of $R$) also contains $n$, which is a node with exactly~two premises in $C$. By \Cref{def:thread}, $C$ is not a \thr. Hence, there~is~no~(\wc)~switching~$\psi$~of~$R$ such that $R^\psi \models \Cwf$.
  
  We now prove that \ref{itm:wten-graph}~$\Rightarrow$~\ref{itm:cw-forall}. By~\Cref{lemma:cw-forall}, it is enough to prove that, for every \wsp $\gamma$ of $R$ starting from an \er node $n$ of $R$ and having the conclusion $a$ of $n$ as its first arc, $\gamma$ crosses only \er nodes of $R$. We proceed by induction on the length of $\gamma$. Consider a \wsp $\gamma_0$ of $R$ starting from an \er node $n_0$ of $R$ and having~the~conclusion $a_0$ of $n_0$ as its first arc. Since $R$ is a $\wten$-\ps, and by \Cref{rmk:w-switching-erasing}, we know that $a_0$~is~a~premise of an \er $\parr$ or $\bullet$ node $n_1$ of $R$. If $n_1$ is the last node of $\gamma$, then the only nodes of $\gamma$ are $n_0$ and $n_1$, which are both \er nodes, and we are done. Otherwise, we have~$\gamma_0 = n_0 a_0 \gamma_1$~with $\gamma_1$ a \wsp of $R$ starting from $n_1$, and $n_1$ is an \er $\parr$ node, so we can consider its conclusion $a_1$. Since $\gamma_1$ is, in particular, a \spath of $R$, we have that $a_1$ is~the~first arc of $\gamma_1$. And since $\gamma_1$ is strictly~shorter~than~$\gamma_0$,~by~induction~hypothesis~$\gamma_1$~crosses~only~\er nodes of $R$. Thus, $\gamma_0$ crosses only \er nodes of $R$.
 \end{proof}
 
 \wtenConnected*
 
 \begin{proof}
  As for \Cref{itm:existence-non-erasing}, by \Cref{lemma:cw-forall-tensor} we have $R \models \Cwf$. Let $\psi$ be a \wcs of $R$. From $R \models \Cwf$ and $R \models \Cw$ we deduce that $R^\psi \models \Cwf$ and $R^\psi \models \Cw$. By \Cref{rmk:cw-one}, exactly one connected component $C$ of $R^\psi$ has no erasing node of $R$: any~node~of~$C$~is~non-\er~in~$R$.
 	
  As for \Cref{itm:erasing-component}, let $m$ be a node of $R_0$ and let $\psi$ be a \wcs of $R$. Then $n$ and $m$ belong to two different connected components of $R^\psi$. Since $\cutpn$ and $\otimes$ nodes are never \er, $n$ is not \er in $R$. Then,~by~\Cref{rmk:cw-one,lemma:cw-forall-tensor},~the~connected~component~of $R^\psi$ that contains $m$ is a \thr. Hence, by \Cref{rmk:thread}, $m$~is~an~\er~node~of~$R$.
 	
  As for \Cref{itm:connectivity}, either $R$ has no terminal $\otimes$ or $\cutpn$ nodes, in which case $R$ consists of a unique terminal $\axpn$ or $\onepn$ node, and we are done, or there exists a terminal $\cutpn$ or $\otimes$ node $n$. In this case, by \Cref{itm:erasing-component}, a connected component of $R$ which does not contain~$n$~would~only~have~\er nodes: this would imply that $R$ has a terminal \er node, which contradicts the hypothesis. Thus, $R$ has a unique connected component.
 \end{proof}
 
 \begin{lemma}
  \label{lemma:ac}
  Let $R$ be a \ps, let $n$ be a $\cutpn$, $\bot$, $\otimes$ or $\parr$ node of $R$, and let $R_1, R_2$ be \pss such that $R$ has the shape in the sub-figure of \Cref{fig:sequential} determined by the label of $n$. Let $k \coloneq 2$ if $n$ is a $\cutpn$ or $\otimes$ node, $k \coloneq 1$ otherwise. Then~$R \models \AC$~if~and~only~if,~for~every~$i \in \{1, \dots, k\}$,~we have $R_i \models \AC$.
 \end{lemma}
 
 \begin{proof}
  For every switching $\varphi$ of $R$, and for every $i \in \{1, \dots, k\}$, let $\varphi_i$ be the restriction of $\varphi$ to a switching of $R_i$. Then, for every cycle $\gamma$, $\gamma$ is in $R^\varphi$ if and only if there exists $i \in \{1, \dots, k\}$ such that $\gamma$ is in $R_i^{\varphi_i}$. Therefore, $R^\varphi \models \AC$ if and only if, for every $i \in \{1, \dots, k\}$, $R_i^{\varphi_i} \models \AC$. Since this holds for an arbitrary switching $\varphi$ of $R$, we obtain that $R \models \AC$ if and only if, for every $i \in \{1, \dots, k\}$, $R_i \models \AC$.
 \end{proof}
 
 \untypedSequentiality*
 
 \begin{proof}
  We reason by induction on $\sharp \ar(R)$. Consider the following cases:
  \begin{itemize}
   \item
 	\emph{$R$ has a terminal $\bot$ node $n$.} There exists a \ps $R_1$ such that $R$ has the shape in \Cref{subfig:sequential-bottom}. Observe that $R_1 \models \Cw$. Indeed, for every switching $\varphi$ of $R_1$ (or, equivalently, of $R$),~from $R^\varphi \models \Cw$ we deduce that $R_1^\varphi \models \Cw$:
 	\[
 	 \sharp \cc(R_1^\varphi) = \sharp \cc(R^\varphi) - 1 = (\sharp \w(R) + 1) - 1 = \sharp \w(R) = \sharp \w(R_1) + 1
 	\]
   \item
 	\emph{$R$ has a terminal $\parr$ node $n$.} There exists a \ps $R_1$ such that $R$ has the shape in \Cref{subfig:sequential-par}. Observe that $R_1 \models \Cw$. Indeed, for every switching $\varphi_1$ of $R_1$, and for every extension~$\varphi$~of $\varphi_1$ to a switching of $R$, from $R^\varphi \models \Cw$ we deduce that $R_1^{\varphi_1} \models \Cw$:
 	\[
 	 \sharp \cc(R_1^{\varphi_1}) = \sharp \cc(R^\varphi) = \sharp \w(R) + 1 = \sharp \w(R_1) + 1
 	\]
  \end{itemize}
  In both cases, by \Cref{lemma:ac}, $R_1 \models \AC$, so $R_1 \models \ACCw$. Moreover, since $R$ is a $\wten$-\ps, so is $R_1$. And since $\smash{\sharp \ar(R_1) < \sharp \ar(R_1) + 1 = \sharp \ar(R)}$, we can apply the induction hypothesis~to~obtain that $R_1$ is \seq. It follows immediately that $R$ is $n$-\seq.
  
  Now suppose that $R$ has no terminal $\bot$ or $\parr$ node. Since $R$ is a $\wten$-\ps and $R \models \Cw$, from \Cref{itm:connectivity} of \Cref{lemma:wten-connected} we deduce that $R$ is a connected graph. There~are~two~possibilities:
  \begin{itemize}
   \item
 	\emph{$R$ has no terminal $\cutpn$ or $\otimes$ node.} Then every terminal node of $R$ is an $\axpn$ or $\onepn$ node. Since $R$ is connected, $R$ is made up of a unique terminal $\axpn$ or $\onepn$ node~$n$,~and~thus~we~can immediately conclude that $R$ is $n$-\seq;
   \item
    \emph{$R$ has a terminal $\cutpn$ or $\otimes$ node.} By \Cref{lemma:splitting}, there exists a splitting $\cutpn$ or $\otimes$ node $n$ of $R$. By \Cref{def:splitting}, there exist two \pss $R_1, R_2$ such that $n$ splits $R$ into $R_1$ and $R_2$ and, since $R$ is connected, $R_1, R_2$ are uniquely determined. It suffices to prove that $R_1, R_2$ are \seq: then $R$ is $n$-\seq. We only prove that $R_1$ is \seq: $R_2$ is \seq by an analogous argument. Since $\smash{\sharp \ar(R_1) < \sharp \ar(R_1) + 1 \leq \sharp \ar(R)}$, we can conclude by applying the induction hypothesis on $R_1$. By~\Cref{lemma:ac},~$R_1 \models \AC$,~and~$R_1$~is~a~$\wten$-\ps~because $R$ is.
    
    The only thing left to prove is that $R_1 \models \Cw$. By \Cref{lemma:cw-forall-tensor}, $R \models \Cwf$ and $R_1 \models \Cwf$. Let $\psi_1$ be a \wcs of $R_1$, and let $\psi$ be any extension of $\psi_1$ to a \wcs of $R$. Then we have $R^\psi \models \Cwf$ and $R_1^{\psi_1} \models \Cwf$. Since $R^\psi \models \Cwf$ and $R^\psi \models \Cw$, by \Cref{rmk:cw-one} exactly one connected component $C_{\neg \w}$ of $R^\psi$ has no \er node of $R$. And since $\cutpn$ and $\otimes$ nodes are never \er, $n$ is a node of $C_{\neg \w}$. Then the connected component $C_{\neg \w}'$ of $R_1^{\psi_1}$ which contains a premise of $n$ has no \er node of $R$. Moreover, for every $C_1 \in \cc(R_1^{\psi_1})$ with $C_1 \neq C_{\neg \w}'$, $C_1$ is actually a connected component of $R^\psi$, and $C_1 \neq C_{\neg \w}$. Since $R^\psi \models \Cwf$, $C_1$ is a \thr, so $C_1$ has a $\bot$ node. Therefore, $C_{\neg \w}'$ is the only connected component of $R_1^{\psi_1}$ with no \er node of $R$. Since~$R_1^{\psi_1} \models \Cwf$,~by~\Cref{rmk:cw-one}~we have~$R_1^{\psi_1} \models \Cw$. We~can~then~conclude,~by~\Cref{itm:acyclicity-cc-existential}~of~\Cref{prop:acyclicity-cc},~that~$R_1 \models \Cw$. \qedhere
  \end{itemize}
 \end{proof}

 \subsection{Complements of \texorpdfstring{\Cref{sec:btenll}}{Section 4}}
 \label{subsec:equivalence-canonical-jumps}
 
% \begin{example}
%  \begin{figure}
%   \centering
%%   \scalebox{\figscale}{
%   	\input{Figures/canonical-jump.tikz}
%%   }
%   \caption{\label{fig:canonical-jump}A \ps of $\btenll$.}
%  \end{figure}
%  
%  Let $R$ be the \ps of $\btenll$ in \Cref{fig:canonical-jump}, and let $n$ be the~unique~$\bot$~node~of $R$. Then $R_n$ is obtained from $R$ by defining $J_{R_n}(n)$ as the unique $\otimes$ node of $R$, and $\can{R} = R_n$.
% \end{example}
 
 \sequentializationCanonicalJump*
 
 \begin{proof}
  We reason by induction on $\sharp \ar(R)$.
 	
  Suppose $R$ has a terminal \er node $n$. Then $n$ is either a $\bot$ or $\parr$ node. Therefore,~there exists a \ps $R_1$ such that $R$ has the shape in \Cref{subfig:desequentialization-bottom} or~\ref{subfig:desequentialization-par}. Since $n$ is \er and $m$ is non-\er, we have $n \neq m$, thus $m$ is a node of $R_1$. By following the same argument seen in the proof of \Cref{thm:untyped-sequentiality}, we obtain that $R_1$ satisfies the induction hypothesis. Hence,~there exists a \seqc proof $\pi_1$ in $\btenll$ such that $\pi_1 \deseqjump \canext{R_1}{m}$. Let $\pi$ be the \seqc proof obtained from $\pi_1$ by applying a $\bot$ or~$\parr$~rule~introducing~the~type~of~the~conclusion~of~$n$. Then $\pi \deseqjump \canext{R}{m}$, because:
  \begin{itemize}
   \item
 	If $n$ is a $\bot$ node, by \Cref{def:canonical-jump}, we have $J_{\canext{R}{m}}(n) = m$;
   \item
 	If $n$ is a $\parr$ node, we have $\wparr(R) = \wparr(R_1)$, and thus $J_{\canext{R}{m}} = J_{\canext{R_1}{m}}$.
  \end{itemize}
 	
  We can now assume that $R$ has no terminal \er node. By \Cref{itm:canonical-jump-terminal} of \Cref{rmk:canonical-jump},~we~have $\w(R) = \wparr(R)$. By \Cref{itm:canonical-jump} of \Cref{rmk:canonical-jump}, $\canext{R}{m} = \can{R}$. We suppose that $R$ has a terminal non-\er $\parr$ node $n$. Then, by \Cref{def:erasing}, we can consider a non-\er node $m_1$ having a premise of $n$ among its conclusions. As before, there exists a \ps $R_1$ such that~$R$~has~the~shape in \Cref{subfig:desequentialization-par} and, since $m_1 \in \ve(R_1)$, we can apply the induction hypothesis on $R_1, m_1$. Hence, there exists a \seqc proof $\pi_1$ in $\btenll$ such that $\pi_1 \deseqjump \canext{R_1}{{m_1}}$. Let $\pi$ be the \seqc proof obtained from $\pi_1$ by applying a $\parr$ rule which introduces the type~of~the conclusion of $n$. Then $\pi \deseqjump \can{R}$. Indeed, let $n_1 \in \w(R_1) = \w(R)$:
  \begin{itemize}
   \item
    If $n_1 \in \wparr(R_1)$, then we immediately get $J_{\can{R}}(n_1) = J_{\canext{R_1}{{m_1}}}(n_1)$;
   \item
    If $n_1 \in \we(R_1)$, then $n$ is the least non-\er node of $R$ such that $n_1 \prec n$, and therefore:
    \[
     J_{\can{R}}(n_1) = m_1 = J_{\canext{R_1}{{m_1}}}(n_1)
    \]
  \end{itemize}
 	
  We can now suppose that $R$ has no terminal $\bot$ or $\parr$ node. As in the proof of \Cref{thm:untyped-sequentiality}, we can apply \Cref{itm:connectivity} of \Cref{lemma:wten-connected} to deduce that $R$ is a connected graph. There~are~two~cases:
  \begin{itemize}
   \item
    \emph{$R$ has no terminal $\otimes$ node.} Then every terminal node of $R$ is an $\axpn$ or $\onepn$ node. Therefore, we trivially have $\can{R} = R$. Since $R$ is connected, $R$ is made up of~a~unique terminal $\axpn$ or $\onepn$ node $n$. Let $\pi$ be the \seqc proof consisting of~a~unique~\axsc~or~$\one$~rule~introducing the types of the conclusions of $n$. Then we immediately get $\pi \deseqjump R$;
   \item
  	\emph{$R$ has a terminal $\otimes$ node.} By \Cref{lemma:splitting}, there exists a splitting $\otimes$ node $n$ of $R$. By \Cref{def:splitting}, there exist two \pss $R_1, R_2$ such that $n$ splits $R$ into $R_1$ and $R_2$ and,~since $R$ is connected, $R_1, R_2$ are uniquely determined. We prove that there exists~a~\seqc proof $\pi_1$ in $\btenll$ such that $\pi_1 \deseqjump \can{R_1}$. An analogous argument determines a \seqc proof $\pi_2$ in $\btenll$ such that $\pi_2 \deseqjump \can{R_2}$. We will then define $\pi$ as the \seqc proof obtained from $\pi_1$ and $\pi_2$ by applying a $\otimes$ rule~that~introduces~the~type~of~the conclusion of $n$, and deduce $\pi \deseqjump \can{R}$ because, for every $n_0 \in \w(R)$:
 	\begin{equation}
     \label{eqn:canonical-jumps-tensor}
     J_{\can{R}}(n_0) =
     \begin{cases}
      J_{\can{R_1}}(n_0) & \text{if $n_0 \in \w(R_1)$,} \\
      J_{\can{R_2}}(n_0) & \text{otherwise (i.e.~$n_0 \in \w(R_2)$).}
     \end{cases}
    \end{equation}
    First of all, by following the same argument we made in the proof of~\Cref{thm:untyped-sequentiality},~we~obtain that $R_1$ satisfies the induction hypothesis, and in particular $R_1 \models \Cw$. We~observe~that,~by \Cref{def:wten}, the premise of $n$ in $R_1$ is a conclusion of a non-\er node $m_1$. Thus,~$R_1$~has no terminal \er node, and then $\canext{R_1}{{m_1}} = \can{R_1}$ by \Cref{itm:canonical-jump-terminal} of \Cref{rmk:canonical-jump}. We~conclude by applying the induction hypothesis on $R_1$, $m_1$. \qedhere
  \end{itemize}
 \end{proof}
 
 \begin{lemma}
  \label{lemma:erasing-btenll}
  Let $R$ be a \ps of $\btenll$ and let $n$ be an \er node of $R$.
  \begin{enumerate}[(i)]
   \item
    Either $n$ is terminal, or its conclusion is a premise of a $\parr$ node of $R$;
   \item \label{itm:erasing-btenll-descent}
    Either every node $m$ such that $n \prec m$ is an \er $\parr$ or $\bullet$ node, or~there~is~a~non-\er $\parr$ node $n_0$ of $R$ such that $n \prec n_0$.
  \end{enumerate}
 \end{lemma}
 
 \begin{proof}
  Immediate consequence of \Cref{def:btenll}.
 \end{proof}
 
 \begin{lemma}
  \label{lemma:adding-jump-ac}
  Let $R$ be a \ps of $\btenll$, let $\smash{n \in \w(R) \setminus \dom(J_R)}$, and let $m \in \ve(R)$. If $R \models \ACCw$ and $R_n^m \models \AC$, then $R_n^m \models \ACCw$.
 \end{lemma}
 
 \begin{proof}
  We set $R_0 \coloneq R_n^m$. It suffices to observe that, if $R_0 \models \AC$, then, by the hypothesis that $R \models \ACCw$, and by \Cref{itm:acyclicity-cc-number} of \Cref{prop:acyclicity-cc}, for any~switching~$\varphi$~of~$R_0$~(or,~equivalently,~of~$R$):
  \begin{align*}
   \sharp \cc(R_0^\varphi) & = \sharp \ve(R_0^\varphi) - \sharp \ar(R_0^\varphi) = \sharp \ve(R^\varphi) - (\sharp \ar(R^\varphi) + 1) = (\sharp \ve(R^\varphi) - \sharp \ar(R^\varphi)) - 1 \\
   & = \sharp \cc(R^\varphi) - 1 = (\sharp \w(R) - \sharp \dom(J_R) + 1) - 1 = \sharp \w(R) - \sharp \dom(J_R) \\
   & = \sharp \w(R_0) - (\sharp \dom(J_{R_0}) - 1) = \sharp \w(R_0) - \sharp \dom(J_{R_0}) + 1 \qedhere
  \end{align*}
 \end{proof}
 
 \begin{definition}
  The jump function $J_R$ of a \ps $R$ of $\mllu$ is \emph{non-\er} (resp.~\emph{initial}) when, for every $\bot$ node $n$ of $R$, we have that $J_R(n)$ is a non-\er (resp.~an~$\axpn$~or~$\onepn$)~node~of~$R$.
 \end{definition}
 
 \begin{remark}
  \label{rmk:initial-non-erasing}
  If the jump function $J_R$ of a \ps $R$ of $\mllu$ is initial, then $J_R$ is non-\er.
 \end{remark}
 
 \begin{lemma}
  \label{lemma:adding-jump}
  Let $R$ be a \ps of $\btenll$ such that $J_R$ is non-\er and $R \models \ACCw$.
  \begin{enumerate}[(i)]
   \item
    For every $n \in \wparr(R) \setminus \dom(J_R)$, we have $R_n \models \ACCw$;
   \item
 	For every $\smash{n \in \we(R) \setminus \dom(J_R)}$ and $m$ non-\er node of $R$, we have $R_n^m \models \ACCw$.
  \end{enumerate}
 \end{lemma}
 
 \begin{proof}
  Let $\smash{n \in \w(R) \setminus \dom(J_R)}$. We define $R_0 \coloneq R_n$ if $n \in \wparr(R)$, otherwise $R_0 \coloneq R_n^m$. By \Cref{lemma:adding-jump-ac}, it is sufficient to prove that $R_0 \models \AC$. Let $\varphi$ be a switching of $R_0$ (or, equivalently, a switching of $R$), let $C$ be the connected component of $R^\varphi$~containing~$n$,~and~let~$j$~be~the~only arc of $R_0^\varphi$ which is not in $R^\varphi$.
  \begin{enumerate}[(i)]
   \item
    We set $n' \coloneq J_{R_n}(n)$ and consider the least non-\er $\parr$ node $n_0$ of $R$ such that~$n \prec n_0$. Then $j$ is an arc from $n$ to $n'$, and we distinguish two cases:
    \begin{itemize}
     \item
      \emph{$n_0$ does not belong to $C$.} By \Cref{lemma:erasing-btenll} and by minimality of $n_0$, we know that $C$ is a \thr (\Cref{def:thread}) and that every node of $C$ is \er. Hence, $n'$ is not a node of $C$. Since $R^\varphi \models \AC$ and $j$ is an arc between two~distinct~connected~components~of~$R^\varphi$,~we obtain that $R_n^\varphi \models \AC$;
     \item
      \emph{$n_0$ belongs to $C$.} Suppose, for the sake of contradiction, that $R_n^\varphi$ has a cycle $\gamma$. Since $R^\varphi \models \AC$, and since $R^\varphi$ is obtained from $R_n^\varphi$ by erasing $j$, we have $j \in \gamma$. Moreover, by \Cref{lemma:erasing-btenll}, by minimality of $n_0$, by the assumption that $n_0$~belongs~to~$C$,~and~by~the hypothesis that $J_R$ is non-\er, we can write $\gamma = \delta \, \gamma' j \, n$ with $\delta$ a \dpath of $R^\varphi$ from $n$ to $n_0$ and $\gamma'$ a path of $R^\varphi$ from $n_0$ to $n'$. Let $a \coloneq \varphi(n_0)$,~let~$\varphi'$~be~the~switching of $R$ such that $\varphi'(n_0) \neq a$~and,~for~every~$\parr$~node~$n'' \neq n_0$, $\varphi'(n'') = \varphi(n'')$, and~finally let $a' \coloneq \varphi'(n_0)$. We observe that $a'$ is the conclusion of $n'$ and that $\gamma'$ does~not~contain $a'$, because $\gamma'$ is a path of $R^\varphi$. Moreover, $\gamma'$ does not contain $a$, because $\delta$ contains $a$. Hence, $\gamma'$ is also a path of $R^{\varphi'}$. But then $\gamma_0 \coloneq \gamma' a' n_0$ is a cycle of $R^{\varphi'}$, contradicting $R \models \AC$. We have then proven that $R_n^\varphi \models \AC$.
    \end{itemize}
   \item
    We have that $j$ is an arc from $n$ to $m$. Since $n \in \we(R)$, by \Cref{lemma:erasing-btenll} every node $m'$ such that $n \prec m'$ is an erasing $\parr$ or $\bullet$ node. Therefore, $C$ is a thread and contains only \er nodes. Since $m$ is not \er, $m$ does not belong to $C$. Since $R^\varphi \models \AC$, and since $j$ is an arc between~two~distinct~connected~components~of~$R^\varphi$,~we~can~conclude~that~$R_0^\varphi \models \AC$.
  \end{enumerate}
  In every case, since our choice of the switching $\varphi$ was arbitrary, we obtain that $R_0 \models \AC$.
 \end{proof}
 
% \begin{corollary}
%  \label{cor:canonical-jump-accw}
%  Let $R$ be a \ps of $\btenll$. If $R \models \ACCw$, then $\can{R} \models \ACCw$.
% \end{corollary}
 
 \begin{corollary}
  Let $R$ be a \ps of $\btenll$ such that $J_R$ is non-\er and $R \models \ACCw$,~and let $m$ be a non-\er node of $R$. Then $\canext{R}{m} \models \ACC$.
 \end{corollary}
 
 \begin{proof}
  Immediate consequence of \Cref{itm:acc-jump} of \Cref{rmk:jump}, \Cref{itm:canonical-jump-total} of \Cref{rmk:canonical-jump}, \Cref{lemma:adding-jump}.
 \end{proof}
 
 \begin{lemma}
  \label{lemma:equivalence-canonical-jump}
  Let $R$ be a \jc \ps of $\btenll$ such that $J_R$ is non-\er,~and~let~$m$ be a non-\er node of $R$. Then $R \equivpns \canext{R}{m}$.
 \end{lemma}
 
 \begin{proof}
  By induction on $k_R + h_R$, where:
  \begin{align*}
   k_R & \coloneq \sharp \{n : \text{$n \in \wparr(R)$ and $R \neq (R_{\check{n}})_n$}\} \\
   h_R & \coloneq \sharp \{n : \text{$n \in \we(R)$ and $R \neq (R_{\check{n}})_n^m$}\}
  \end{align*}
  If $h_R = k_R = 0$, then we are done, because $R = \canext{R}{m}$. If $h_R > 0$~(resp.~$k_R > 0$),~let~$n \in \wparr(R)$ (resp.~$n \in \we(R)$) such that $R \neq R'$, where $R' \coloneq (R_{\check{n}})_n$ (resp.~$R' \coloneq (R_{\check{n}})_n^m$). Then~$R'$~is~\jt, $\ujf{R'} = \ujf{R}$, and, by \Cref{itm:acc-jump} of \Cref{rmk:jump} and \Cref{lemma:adding-jump}, we obtain $R' \models \ACC$. Hence, $R'$ is \jc, and then $R \rewpns R'$. Since $\smash{h_{R'} = h_R - 1 < h_R}$ (resp.~$\smash{k_{R'} = k_R - 1 < k_R}$), we can apply the induction hypothesis on $R'$, from which we deduce that $R' \equivpns \canext{R}{m}$. We can then conclude that $R \equivpns \canext{R}{m}$.
 \end{proof}
 
 \begin{proposition}[3.7 in \cite{heijltjes2014no}]
  \label{prop:equivalence-initial}
  For every \jc \ps $R$ of $\mllu$, there is a \jc \ps $R'$ of $\mllu$ such that $R \equivpns R'$ and $J_{R'}$ is initial.
 \end{proposition}
 
 \equivalenceCanonicalJump*
 
 \begin{proof}
  Straightforward consequence of \Cref{rmk:initial-non-erasing}, \Cref{lemma:equivalence-canonical-jump} and \Cref{prop:equivalence-initial}.
 \end{proof}

 \equivalence*
 
 \begin{proof}
  By \Cref{itm:existence-non-erasing} of \Cref{lemma:wten-connected} and \Cref{itm:wten-proof-structure} of \Cref{rmk:erasing}, there exists a non-\er node $m$ of $R_1$. Since $\ujf{R_1} = \ujf{R_2}$, $m$ is also a node of $R_2$, and $\canext{R_1}{m} = \canext{R_2}{m}$. Hence,~by~\Cref{prop:equivalence-canonical-jump}:
  \[
   R_1 \equivpns \canext{R_1}{m} = \canext{R_2}{m} \equivpns R_2 \qedhere
  \]
 \end{proof}
 
 \rulePermutationsRewiring*
 
 \begin{proof}
  (1) is Proposition~6 in~\cite{heijltjes2014no}. By \Cref{rmk:underlying-jump-free}, we have $\pi_1^\deseq = \pi_2^\deseq$ if and only if $\langle R_1 \rangle = \langle R_2 \rangle$. It is sufficient to observe that, if $R_1 \approx R_2$, then $\langle R_1 \rangle = \langle R_2 \rangle$ by \Cref{def:rewiring}, and conversely,~if $\langle R_1 \rangle = \langle R_2 \rangle$, then $R_1 \approx R_2$ by \Cref{cor:equivalence}.
 \end{proof}
 
 \subsection{Complements of \texorpdfstring{\Cref{sec:imell}}{Section 6}}
 \label{subsec:connectivity-polarities}
 
 We recall the following result:
 
 \accwOutputConclusion*
 
 In the spirit of the previous sections, we give the proof in the simplified framework~of~$\imll$ (\Cref{cor:accw-output-conclusion}). Every result we present up to \Cref{cor:accw-output-conclusion} holds in presence of $\cutpn$ nodes. We consider the connected components of \sgs induced by particular switchings, called \emph{\intu}, introduced by the following definition.
 
 \begin{definition}
  Let $R$ be a \ps of $\imll$. A switching $\varphi$ of $R$ is \emph{\intu} when, for every (output) $\parr$ node $n$ of $R$ with output premise $a$, we have that $\varphi(n) = a$. A~path~$\gamma$~of~$R^\varphi$~is~\emph{output} (resp.~\emph{input}) if, for every $a \in \gamma$, $a$ is output (resp.~input).
 \end{definition}
 
 \begin{notation}
  \label{not:unique-path-ac}
  Let $R$ be a \ps. We denote by $\nof{a}$ the node of $R$ having among its conclusions the arc $a$. If $\varphi$ is a switching of $R$ such that $R^\varphi \models \AC$, and $n_1, n_2$ are nodes of~the~same~connected component of $R^\varphi$, the unique path of $R^\varphi$ between $n_1$ and $n_2$ is denoted by $\up{n_1}{n_2}{\varphi}$. Finally,~if~$a, b$ are arcs of the same connected component of $R^\varphi$, we write $\up{a}{b}{\varphi} \coloneq \up{\nof{a}}{\nof{b}}{\varphi}$.
 \end{notation}
 
 \begin{remark}
  \label{rmk:output-input-arc}
  Let $R$ be a \ps of $\imll$, and let $a$ be an arc of $R$.
  \begin{enumerate}[(i)]
   \item \label{itm:output-premise}
    If $a$ is output, then $a$ is a premise of a $\cutpn$, $\otimes$, output $\parr$ or $\bullet$ node of $R$;
   \item \label{itm:input-conclusion}
    If $a$ is input, then $\nof{a}$ is an $\axpn$, $\bot$, input $\parr$ or input $\otimes$ node of $R$.
  \end{enumerate}
 \end{remark}
 
 \begin{definition}
  Let $G$ be a graph. The set of paths of $G$ is denoted by $\pa(G)$. The binary relation $\subseteq$ on $\pa(G)$ is defined in the following way: we have $\gamma_1 \subseteq \gamma_2$ if and only if, for every $a \in \gamma_1$, we have $a \in \gamma_2$.
 \end{definition}
 
 \begin{remark}
  The binary relation $\subseteq$ is a partial order on $\pa(G)$. Moreover, since $\subseteq$ is finite, $\subseteq$ is well-founded and converse well-founded,~meaning~that~every~non-empty~subset~of~$\pa(G)$~has~a minimal and a maximal element.
 \end{remark}
 
 We now identify notable polarized paths in the \pss of $\imll$.
 
 \begin{lemma}
  \label{lemma:maximal-polarized-descent}
  Let $R$ be a \ps of $\imll$, let $a$ be an arc of $R$, let $\varphi$~be~a~switching~of~$R$~such~that $R^\varphi \models \AC$, and let $C$ be the connected component of $R^\varphi$ which contains $a$.
  \begin{enumerate}[(i)]
   \item \label{itm:maximal-output-descent}
    If $a$ is output, and if $\varphi$ is \intu, then there exists~a~$\cutpn$,~input~$\otimes$~or~$\bullet$~node~$n$~of~$C$ such that $\up{\nof{a}}{n}{\varphi}$ is the maximal output \dpath of $R^\varphi$ issued from $\nof{a}$;
   \item \label{itm:maximal-input-descent}
    If $a$ is input, then there is an $\axpn$ or $\bot$ node $n$ of $C$ such that $\up{n}{\nof{a}}{\varphi}$~is~the~maximal input \dpath of $R^\varphi$ reaching $\nof{a}$.
  \end{enumerate}
 \end{lemma}
 
 \begin{proof}
  Let us denote by $\succ$ the dual order of $\prec$ (see \Cref{def:erasing-order} and \Cref{rmk:prec}).
  \begin{enumerate}[(i)]
   \item
    By induction on $\succ$. By \Cref{itm:output-premise} of \Cref{rmk:output-input-arc}, $a$ is a premise of a $\cutpn$, $\otimes$, output $\parr$ or $\bullet$ node $m$ of $R$. We observe that $m$ is a node of $C$: this is obvious if $m$ is a $\cutpn$, $\otimes$ or $\bullet$ node, and uses the hypothesis that $\varphi$ is \intu if $m$ is an output~$\parr$~node~because,~in that case, we have $\varphi(m) = a$. Observe that $\up{\nof{a}}{m}{\varphi}$ only contains the~arc~$a$,~which~is~output, so $\up{\nof{a}}{m}{\varphi}$ is an output \dpath. Now, if $m$ is a $\cutpn$, input $\otimes$ or $\bullet$ node, then either $m$ has no conclusion, or $m$ has exactly one input conclusion. Thus, for every output \dpath $\gamma$ of $R^\varphi$ starting from $\nof{a}$, we have $\gamma \subseteq \up{\nof{a}}{m}{\varphi}$. Hence,~$\up{\nof{a}}{m}{\varphi}$~is~the~maximal~output \dpath of $R^\varphi$ issued from $\nof{a}$.
    
    Now suppose that $m$ is not a $\cutpn$, input $\otimes$ or $\bullet$ node. Then $m$ is either an output $\otimes$ or an output $\parr$ node. In particular, the conclusion $b$ of $m$ is output. Since~$m \succ \nof{a}$,~and~since $b$ is output, we can apply the induction hypothesis on $b$, from~which~it~follows~that~there~is~a $\cutpn$, input $\otimes$ or $\bullet$ node $n$ of $C$ such that $\up{m}{n}{\varphi}$ is the maximal output \dpath of $R^\varphi$ starting from $m$. And since $\up{\nof{a}}{n}{\varphi} = \up{\nof{a}}{m}{\varphi} \up{m}{n}{\varphi}$, $\up{\nof{a}}{n}{\varphi}$ is an output \dpath. Also, for every output \dpath $\gamma$ of $R^\varphi$ starting from $\nof{a}$, we have $\gamma = \up{\nof{a}}{m}{\varphi} \gamma'$,~where~$\gamma'$~is~an output \dpath issued from $m$. Therefore, by~maximality~of~$\up{m}{n}{\varphi}$,~we~obtain~that:
    \[
     \gamma = \up{\nof{a}}{m}{\varphi} \gamma' \subseteq \up{\nof{a}}{m}{\varphi} \up{m}{n}{\varphi} = \up{\nof{a}}{n}{\varphi}
    \]
    We then conclude that $\up{\nof{a}}{n}{\varphi}$ is the maximal output \dpath~of~$R^\varphi$~starting~from~$\nof{a}$.
   \item
    By induction on $\prec$. By \Cref{itm:input-conclusion} of \Cref{rmk:output-input-arc}, $\nof{a}$ is an $\axpn$, $\bot$, input $\parr$ or input $\otimes$ node of $R$. If $\nof{a}$ is an $\axpn$ or $\bot$ node, then the empty path $\up{\nof{a}}{\nof{a}}{\varphi}$~is~trivially~an input \dpath. Moreover, $\nof{a}$ has no premise. Thus, every input \dpath~$\gamma$~of~$R$~reaching~$\nof{a}$~is~empty. Then $\up{\nof{a}}{\nof{a}}{\varphi}$ is the maximal input \dpath of $R^\varphi$ reaching $\nof{a}$.
    
    Now suppose that $\nof{a}$ is not an $\axpn$ or $\bot$ node. Then $\nof{a}$ is an input $\parr$ or input $\otimes$ node of $R$. Let $b$ be the only input premise of $\nof{a}$ if $\nof{a}$ is an input $\otimes$ node, $b \coloneq \varphi(\nof{a})$ otherwise. Then $n_b$ is a node of $C$. Since $\nof{b} \prec \nof{a}$, and since $b$ is input, we can apply the induction hypothesis on $b$, from which we deduce that there exists an $\axpn$ or $\bot$ node $n$ of $C$ such that $\up{n}{\nof{b}}{\varphi}$ is the maximal input \dpath of $R^\varphi$ reaching $\nof{b}$. From the fact that $b$ is input, and from  $\up{n}{\nof{a}}{\varphi} = \up{n}{\nof{b}}{\varphi} b$, we deduce that $\up{n}{\nof{a}}{\varphi}$ is an input \dpath. Moreover, for every input \dpath $\gamma$ of $R^\varphi$ reaching $\nof{a}$, we~have~that~$\gamma = \gamma' b$,~where~$\gamma'$~is~an~input \dpath reaching $\nof{b}$. Therefore,~by~maximality~of~$\up{n}{\nof{b}}{\varphi}$,~we~obtain:
    \[
     \gamma = \gamma' b \subseteq \up{n}{\nof{b}}{\varphi} b = \up{n}{\nof{a}}{\varphi}
    \]
    We then conclude that $\up{n}{\nof{a}}{\varphi}$ is the maximal input \dpath of $R^\varphi$ reaching $n_a$. \qedhere
  \end{enumerate}
 \end{proof}
 
 \begin{definition}
  Let $R$ be a \ps of $\imll$, and let $\varphi$ be an \is of $R$ such that $R^\varphi \models \AC$. We define a binary relation $\less{\varphi}$ on $\{n : \text{$n$ $\axpn$ or $\onepn$ node of $R^\varphi$}\}$ as follows:~we have $n_1 \less{\varphi} n_2$ if and only if $n_1$ and $n_2$ belong to the same connected component of $R^\varphi$, and $\up{n_1}{n_2}{\varphi}$ contains an input conclusion of $n_1$ and the output conclusion of $n_2$.
 \end{definition}
 
 \begin{remark}
  Let $R$ be a \ps of $\imll$, and let $\varphi$ be an \is of $R$ such that $R^\varphi \models \AC$. Then $\less{\varphi}$ is a strict order relation. Therefore, since $\less{\varphi}$ is~finite,~$\less{\varphi}$~is~well-founded.
 \end{remark}
 
 \begin{lemma}
  \label{lemma:bottom-weakening-output}
  If $R$ is a \ps of $\imll$, if $\varphi$ is an \is of $R$, and if $R^\varphi \models \AC$, then every connected component of $R^\varphi$ contains at least a $\bot$~node~or~an~output~conclusion.
 \end{lemma}
 
 \begin{proof}
  Let $C$ be a connected component of $R^\varphi$, and suppose that $C$ contains no $\bot$ node. Then $C$ contains an $\axpn$ or $\onepn$ node, so we can consider an $\axpn$ or $\onepn$ node $n$ which is minimal with respect to $\less{\varphi}$. Let $a$ be the output conclusion of $n$. By \Cref{itm:maximal-output-descent} of \Cref{lemma:maximal-polarized-descent}, there exists a $\cutpn$, input $\otimes$ or $\bullet$ node $n_0$ of $C$ such that $\up{n}{n_0}{\varphi}$ is the maximal output \dpath of $R^\varphi$ issued from $n_a$. It is sufficient to prove that $n_0$ is a $\bullet$ node: the premise of $n_0$ is then an output conclusion of $C$. Suppose that $n_0$ is a $\cutpn$ or input $\otimes$ node, and consider the input premise $b$ of $n_0$. By \Cref{itm:maximal-input-descent} of \Cref{lemma:maximal-polarized-descent}, there exists an $\axpn$, or $\bot$ node $m$ of $C$ such that $\up{m}{n_b}{\varphi}$ is the maximal input \dpath of $R^\varphi$ reaching $n_b$. By the assumption that $C$ contains no $\bot$ node, $m$ is an $\axpn$ node. And since $\up{m}{n}{\varphi} = \up{m}{n_b}{\varphi} b \up{n_0}{n}{\varphi}$,~the~path~$\up{m}{n}{\varphi}$~contains~both~$a$,~which~is~the~output~conclusion~of~$n$, and an input conclusion of $m$. Therefore,~$m \less{\varphi} n$,~which~contradicts~the~minimality~of~$n$.
 \end{proof}
 
 The previous result does not exclude the possibility that a connected component of a \sg induced by an \is contains both a $\bot$ node \emph{and} an output conclusion. In the sequel, we prove that the disjunction expressed by \Cref{lemma:bottom-weakening-output} is exclusive (\Cref{cor:connected-components-intuitionistic}). The following results, relating connectivity and polarities,~hold~for~\emph{any}~\ps~$R$~of $\imll$: it is not required that $R \models \AC$.
 
 \begin{lemma}
  \label{lemma:connectivity-polarities}
  Let $R$ be a \ps of $\imll$, let $a, b$ be distinct arcs of $R$ such that $a$ is input, let $\varphi$ be an \is of $R$, and let $\gamma$ be a path of $R^\varphi$ from $\nof{a}$ to $\nof{b}$~having~$a$~as~its~first~arc. Then $b \in \gamma$ if and only if $b$ is output.
 \end{lemma}
 
 %False if a = b: ax-cut cycle with a input, then b in gamma and b input
 
 \begin{proof}
  We need to prove that:
  \begin{enumerate}[(i)]
   \item \label{itm:inside-output}
    If $b \in \gamma$, then $b$ is output;
   \item \label{itm:outside-input}
    If $b \notin \gamma$, then $b$ is input.
  \end{enumerate}
  We proceed by mutual induction on the length $k$ of $\gamma$. We observe that $k \geq 1$ because $a \in \gamma$.
  \begin{enumerate}[(i)]
   \item
    Suppose $b \in \gamma$. We have $k \geq 2$. Let $n$ be the node of $R^\varphi$ having $b$ as a premise. Cases:
    \begin{itemize}
     \item
      \emph{$\gamma$ contains a premise $b' \neq b$ of $n$.} Then we have $\gamma = \gamma' b' n \, b \, \nof{b}$~and~$n$~is~a~$\cutpn$~or~$\otimes$~node. At most one of the arcs $b, b'$ is input, so it is enough to prove that $b'$ is input. If~$a = b'$, we are done. Otherwise, $a \in \gamma'$. Since the length of $\gamma'$ is $\smash{k - 2 < k}$~and~$b' \notin \gamma'$,~we~can apply \Cref{itm:outside-input} of the induction hypothesis on $b', \gamma'$, from which~it~follows~that~$b'$~is~input;
     \item
      \emph{$\gamma$ contains the conclusion $c$ of $n$.} Then $n = \nof{c}$, and $\gamma = \gamma' b \, \nof{b}$ with $\gamma' = \gamma'' c \, n$. Hence, $a \in \gamma'$ and $a \neq c$. Since the length of $\gamma'$ is $\smash{k - 1 < k}$ and $c \in \gamma'$,~we~can apply \Cref{itm:inside-output} of the induction hypothesis on $c, \gamma'$ to obtain that $c$ is output. Then~$n$~is~an~output $\otimes$ or output $\parr$ node. And since $\gamma$ is a path of $R^\varphi$ with $\varphi$ \is,~we~can conclude that $b$ is output;
    \end{itemize}
   \item
    Suppose $b \notin \gamma$. Cases:
    \begin{itemize}
     \item
      \emph{$\gamma$ contains a conclusion $b' \neq b$ of $\nof{b}$.} Then $\nof{b} = \nof{b'}$ is an $\axpn$ node and, since~$b \notin \gamma$,~we have $b' \in \gamma$ and $\nof{a} \neq \nof{b}$. In particular, $a \neq b'$. We observe that exactly~one~of~the~arcs $b, b'$ is input, so it is enough to prove that $b'$ is output. It suffices to apply~\Cref{itm:inside-output}~(which is already established for paths of length $k$) on $b', \gamma$;
     \item
      \emph{$\gamma$ contains a premise $c$ of $\nof{b}$.} Then $\gamma = \gamma' c \, \nof{b}$. We claim that $c$ is input. If~$a = c$,~then this is obvious. Otherwise, since the length of $\gamma'$ is $\smash{k - 1 < k}$ and $c \notin \gamma'$,~we~can~apply \Cref{itm:outside-input} of the induction hypothesis on $c, \gamma'$, from which we deduce that $c$ is input. Since $\varphi$ is \intu, we have that $\nof{b}$ is an input $\otimes$ or input~$\parr$~node. Thus,~$b$~is~input. \qedhere
    \end{itemize}
  \end{enumerate}
 \end{proof}
 
 \begin{lemma}
  \label{lemma:connectivity-polarities-up}
  Let $R$ be a \ps of $\imll$, let $a, b$ be distinct arcs of $R$, let $\varphi$ be an \is of $R$ and let $\gamma$ be a path of $R^\varphi$ from $\nof{a}$ to $\nof{b}$. If $a, b \notin \gamma$, then either $a$ or~$b$~is~input.
 \end{lemma}
 
 \begin{proof}
  By induction on the length $k$ of $\gamma$. If $k = 0$ or, equivalently, if $\gamma$ is empty, then $\nof{a} = \nof{b}$. Since $a \neq b$, we have that $\nof{a} = \nof{b}$ is an $\axpn$ node. Then exactly one of the~arcs~$a, b$~is~input. If $k \geq 1$, we consider the two following cases:
  \begin{itemize}
   \item
    \emph{$\gamma$ contains a conclusion $a' \neq a$ of $\nof{a}$.} Then $\nof{a} = \nof{a'}$ is an $\axpn$ node and, since $a \notin \gamma$, we have that $a' \in \gamma$ and $\nof{a} \neq \nof{b}$. In particular, $a' \neq b$. We observe that exactly one~of~the~arcs $a, a'$ is input. If $a$ is input, then we are done. If $a'$ is input, then~$b$~is~input~by~\Cref{lemma:connectivity-polarities};
   \item
    \emph{$\gamma$ contains a premise $c$ of $\nof{a}$.} Then $\gamma = \nof{a} c \, \gamma'$. In particular, $c \neq b$. Since the length of $\gamma'$ is $\smash{k - 1 < k}$ and $c, b \notin \gamma'$, we can apply the induction hypothesis on~$c, b, \gamma'$~to~obtain~that either $c$ or $b$ is input. If $b$ is input, then we are done. If $c$ is input, then $\nof{a}$ is an input~$\otimes$~or input $\parr$ node, because $\varphi$ is \intu. Therefore, $a$ is input. \qedhere
  \end{itemize}
 \end{proof}
 
 \begin{corollary}
  \label{cor:connectivity-polarities}
  Let $R$ be a \ps of $\imll$, let $a, b$ be distinct arcs of $R$, let $\varphi$ be an \is of $R$. If there exists a path $\gamma$ of $R^\varphi$ from $\nof{a}$ to $\nof{b}$ such that:
  \begin{enumerate}[(i)]
   \item \label{itm:down}
    $a, b \in \gamma$, then either $a$ or $b$ is output;
   \item \label{itm:mixed}
  	$a \in \gamma$ and $b \notin \gamma$, then either $a$ is output, or $b$ is input;
   \item \label{itm:up}
 	$a, b \notin \gamma$, then either $a$ or $b$ is input.
  \end{enumerate}
 \end{corollary}
 
 \begin{proof}
  Straightforward consequence of \Cref{lemma:connectivity-polarities,lemma:connectivity-polarities-up}.
 \end{proof}

 \begin{notation}
  Let $R$ be a \ps of $\imll$, let $\varphi$ be a switching of $R$, and let $C$ be a connected component of $R^\varphi$. We denote by $\out(C)$ the set of output conclusions of $C$.
 \end{notation}
 
 \begin{corollary}
  \label{cor:connected-components-intuitionistic}
  Let $R$ be a \ps of $\imll$, let $\varphi$ be an \is of $R$ such that $R^\varphi \models \AC$. For any connected component $C$ of $R^\varphi$, we have $\smash{\sharp \w(C) + \sharp \out(C) = 1}$. In~particular:%
  \begin{equation}
   \label{eqn:connected-components-imell}
   \sharp \cc(R^\varphi) = \sharp \w(R) + \sharp \out(R)
  \end{equation}
 \end{corollary}
 
 \begin{proof}
  Let $C$ be a connected component of $R^\varphi$. By \Cref{lemma:bottom-weakening-output}, $\smash{\sharp \w(C) + \sharp \out(C) \geq 1}$ (here~the hypothesis $R^\varphi \models \AC$ is crucial). We prove $\smash{\sharp \w(C) + \sharp \out(C) \leq 1}$. This follows straightforwardly from \Cref{cor:connectivity-polarities}: we have $\sharp \w(C) \leq 1$ by \Cref{itm:down}, $\sharp \out(C) \leq 1$ by \Cref{itm:up}, and either $\sharp \w(C) = 0$ or $\sharp \out(C) = 0$, by \Cref{itm:mixed}. Hence, $\smash{\sharp \w(C) + \sharp \out(C) = 1}$. \Cref{eqn:connected-components-imell} is obtained by summing on all the connected components of $R^\varphi$ and by observing that $\sharp \out(R^\varphi) = \sharp \out(R)$ by the hypothesis that $\varphi$ is an \is.
 \end{proof}
 
 By applying \Cref{itm:acyclicity-cc-same} of \Cref{prop:acyclicity-cc}, we immediately obtain the following result.
 
 \begin{corollary}
  \label{cor:connected-components-imell}
  Let $R$ be a \ps of $\imll$ such that $R \models \AC$, and let $\varphi$ be \emph{any} switching~of~$R$. Then \Cref{eqn:connected-components-imell} holds.
 \end{corollary}
 
 \begin{corollary}
  \label{cor:accw-output-conclusion}
  Let $R$ be a \ps of $\imll$ such that $R \models \AC$. Then $R \models \ACCw$ $\Leftrightarrow$ $\sharp \out(R) = 1$.
 \end{corollary}

 \begin{proof}
  Straightforward consequence of \Cref{cor:connected-components-imell}.
  %The implication $\Rightarrow$ follows immediately from \Cref{cor:connected-components-imell}. The converse $\Leftarrow$ is proven by induction on $\dpt(R)$, using \Cref{rmk:output-conclusion-box} and \Cref{cor:connected-components-imell}.
 \end{proof}
 
% \begin{lemma}
%  A sequent of $\imll$ is provable if and only if it has exactly one~output~formula.
% \end{lemma}
% 
% \begin{definition}
%  A \seqc proof $\pi$ in $\imll$ is called \emph{output context light} (\emph{\ol} for short) if, for every input $\otimes$ in a sequent of $\pi$, its output premise is derived~with~no~context.
% \end{definition}
% 
% \begin{lemma}
%  Let $\Gamma$ be a sequent of $\imll$. Then $\Gamma$ is provable if and only if there exists~an \ol \seqc proof in $\imll$ with conclusion $\Gamma$.
% \end{lemma}
 
 The rest of the appendix is consecrated to the proof of \Cref{thm:sequentialization-icomll}. Therefore, from now on, we only consider \seqc proofs with no \axsc or \cutsc rule, and \pss~with~no~$\axpn$~or~$\cutpn$~node.

 \begin{lemma}
  \label{lemma:wparr-intuitionistic}
  Let $R$ be a \ps of $\icomll$, and let $n \in \w(R)$. If there exists an output~node~$n_0$ of $R$ such that $n \prec n_0$, then $n \in \wparr(R)$.
 \end{lemma}
 
 \begin{proof}
  By \Cref{def:erasing-order}, there exists a non-empty \dpath $\delta$ of $R$ from $n$ to $n_0$. Let $n'$ be the first output node of $\delta$. Of course, a prefix of $\delta$ is a non-empty \dpath of $R$ from $n$ to $n'$, thus $n \prec n'$. Moreover, by minimality of $n'$, we know that $n'$ has an input premise and an output conclusion: $n'$ is an output $\parr$ node of $R$. Therefore,~by~\Cref{def:wparr-intuitionistic},~$n \in \wparr(R)$.
 \end{proof}

 \begin{remark}
  \label{rmk:icomll}
  Let $R$ be a \ps of $\icomll$. Then:
  \begin{enumerate}[(i)]
   \item \label{itm:icomll-one-conclusion}
    Every node of $R$ has exactly one conclusion;
   \item \label{itm:icomll-components-conclusions}
    Any two conclusions of $R$ belong to different connected components of $R$ (by \Cref{itm:icomll-one-conclusion});
   \item \label{itm:icomll-acyclicity}
    $R$ is an acyclic graph (by \Cref{itm:icomll-one-conclusion});
   \item \label{itm:icomll-ac}
    $R \models \AC$ (immediate consequence of \Cref{itm:icomll-acyclicity}).
  \end{enumerate}
 \end{remark}
 
 \begin{lemma}
  \label{lemma:sequentialization-icomll}
  Let $R$ be a \jf \ps of $\icomll$ such that $\sharp \out(R) = 1$. Then there exists~a \seqc proof $\pi$ in $\icomll$ such that $\pi \deseqjump \can{R}$.
 \end{lemma}
 
 \begin{proof}
  We reason by induction on $\sharp \ar(R)$.
  
  Suppose $R$ has a terminal input node $n$. By definition of $\icomll$, $n$ is either a $\bot$,~input~$\parr$ or input $\otimes$ node. We thus study these cases:
  \begin{itemize}
   \item
    \emph{$\bot$ (resp.~input $\parr$) node.} There exists a \ps $R_1$ such that $R$ is as in \Cref{subfig:desequentialization-bottom} (resp.~\ref{subfig:desequentialization-par}). Since $\sharp \out(R_1) = 1$ and $\smash{\sharp \ar(R_1) < \sharp \ar(R_1) + 1 = \sharp \ar(R)}$, we can apply the induction hypothesis on $R_1$ to obtain that there exists a \seqc proof $\pi_1$ in $\icomll$ such that $\pi_1 \deseqjump \can{R_1}$. We define $\pi$ as the \seqc proof obtained from $\pi_1$ by applying a $\bot$ (resp.~$\parr$)~rule introducing the type of the conclusion of $n$. We now consider the~two~sub-cases~separately:
    \begin{itemize}
     \item
      \emph{$\bot$ node.} Let $R_0$ be the \ps such that $\ujf{R_0} = \ujf{R}$, the restriction of $J_{R_0}$~to~$\w(R_1)$~is~$J_{\can{R_1}}$, and $n \notin \dom(J_R)$. From $\pi_1 \deseqjump \can{R_1}$ we deduce that $\pi \deseqjump (R_0)_n^m$ for every $m \in \ve(R_1)$. In particular, we have $\pi \deseqjump (R_0)_n$. Since $\can{R_1}$ is obtained from $\can{R}$ by removing~$n$,~we know that $(R_0)_n = \can{R}$. Therefore, we obtain $\pi \deseqjump \can{R}$;
     \item
      \emph{Input $\parr$ node.} We have $\wparr(R) = \wparr(R_1)$, and thus $J_{\can{R}} = J_{\can{R_1}}$. Since $\pi_1 \deseqjump \can{R_1}$, we immediately deduce that $\pi \deseqjump \can{R}$;
    \end{itemize}
   \item
    \emph{Input $\otimes$ node.}  By \Cref{def:imell}, $n$ has exactly one output premise $a$. We consider the \ps $R'$ obtained by removing $n$ from $R$, the connected component $R_1$ of $R'$ containing $a$, and the \ps $R_2$ obtained from $R'$ by removing $R_1$. By \Cref{itm:icomll-acyclicity} of \Cref{rmk:icomll}, $n$ splits $R$ into $R_1$ and $R_2$ (\Cref{def:splitting}). By \Cref{itm:icomll-components-conclusions} of \Cref{rmk:icomll}, and since $R_1$ is a connected graph, $a$ is the only conclusion of $R_1$. Now, since $a$ is output, $\sharp \out(R_1) = 1$. On the~other~hand,~since $n$ is input and $\sharp \out(R) = 1$, we also have $\sharp \out(R_2) = 1$. And since, for each $i \in \{1, 2\}$, we have $\smash{\sharp \ar(R_i) < \sharp \ar(R_i) + 2 \leq \sharp \ar(R)}$, we can apply the induction hypothesis on $R_i$, from which we deduce that there exists a \seqc proof $\pi_i$ in $\icomll$ such that $\pi_i \deseqjump \can{R_i}$. We prove that, for every $n_0 \in \w(R)$, \Cref{eqn:canonical-jumps-tensor} holds. This suffices to conclude: by taking $\pi$ as the \seqc~proof~obtained~from~$\pi_1$~and~$\pi_2$~by~applying~a~$\otimes$~rule~introducing~the type of the conclusion of $n$, we get $\pi \deseqjump \can{R}$. We thus consider any $n_0 \in \w(R)$. There~are two possibilities:
    \begin{itemize}
     \item
      \emph{$n_0 \in \w(R_1)$.} Since $a$ is the only conclusion of $R_1$, we know that $n_0 \prec n_a$ (\Cref{not:unique-path-ac}). And since $a$ is output, by \Cref{lemma:wparr-intuitionistic} we have $n_0 \in \wparr(R_1) \subseteq \wparr(R)$. We then obtain that $J_{\can{R}}(n_0) = J_{\can{R_1}}(n_0)$;
     \item
      \emph{$n_0 \in \w(R_2)$.} It is sufficient to observe that $\wparr(R_2) \subseteq \wparr(R)$ and $\wi(R_2) \subseteq \wi(R)$,~from which it follows immediately that $J_{\can{R}}(n_0) = J_{\can{R_2}}(n_0)$.
    \end{itemize}
  \end{itemize}
  
  We can now suppose $R$ has no terminal input node. In other words, every terminal node of $R$ is output. Then, by \Cref{lemma:wparr-intuitionistic}, we have $\w(R) = \wparr(R)$ (or, equivalently, the~set~$\wi(R)$~is empty). Moreover, by the hypothesis that $\sharp \out(R) = 1$, $R$ has exactly one terminal~output~node $n$. By definition of $\icomll$, $n$ is either a $\onepn$, output $\parr$ or output $\otimes$ node. Cases:
  \begin{itemize}
   \item
    \emph{$\onepn$ node.} The set $\w(R)$ is empty, thus $R = \can{R}$. By defining $\pi$ as the \seqc~proof consisting of a unique $\one$ rule, we immediately get $\pi \deseqjump R$;
   \item
    \emph{Output $\parr$ node.} There exists a \ps $R_1$ such that $R$ has the shape in \Cref{subfig:desequentialization-par}. It~is~immediate to check that $\sharp \out(R_1) = 1$ and that $\smash{\sharp \ar(R_1) < \sharp \ar(R_1) + 1 = \sharp \ar(R)}$. Therefore, we can apply the induction hypothesis on $R_1$ to obtain a \seqc proof $\pi_1$ in~$\icomll$~such~that $\pi_1 \deseqjump \can{R_1}$. We define $\pi$ as the \seqc proof obtained from $\pi_1$ by applying a $\parr$ rule introducing the type of the conclusion of $n$. It is enough to prove that $J_{\can{R}} = J_{\can{R_1}}$: from $\pi_1 \deseqjump \can{R_1}$ we will then immediately deduce that $\pi \deseqjump \can{R}$. Let $m$ be the node of $R_1$ having the output premise of $n$ as its conclusion. Then $\ta{m} = \ta{n}$ (notation introduced in \Cref{itm:output-leaf} of \Cref{lemma:output}), by uniqueness of $\ta{n}$. To~conclude~that~$J_{\can{R}} = J_{\can{R_1}}$,~consider~an arbitrary node $n_1 \in \w(R_1) = \w(R)$. There are two possibilities:
    \begin{itemize}
     \item
      \emph{$n_1 \in \wparr(R_1)$.} Then we immediately get $J_{\can{R}}(n_1) = J_{\can{R_1}}(n_1)$;
     \item
      \emph{$n_1 \in \wi(R_1)$.} Then $n$ is the least output $\parr$ node of $R$ such that $n_1 \prec n$. Since~$\ta{m} = \ta{n}$, we obtain $J_{\can{R}}(n_1) = J_{\can{R_1}}(n_1)$.
    \end{itemize}
   \item
    \emph{Output $\otimes$ node.} By \Cref{itm:icomll-acyclicity} of \Cref{rmk:icomll}, there exist \pss $R_1$ and $R_2$ such that $n$ splits $R$ into $R_1$ and $R_2$. Since $R$ has exactly one output conclusion, we know that $R$ is connected, so $R_1$ and $R_2$ are uniquely determined and connected. From \Cref{itm:icomll-components-conclusions}~of~\Cref{rmk:icomll}~we~then deduce that, for each $i \in \{1, 2\}$, $R_i$ has exactly one conclusion $a_i$. And since $n$ is~an~output $\otimes$ node, $a_i$ is output. Hence,~$\sharp \out(R_i) = 1$~and,~by \Cref{lemma:wparr-intuitionistic}, $\w(R_i) = \wparr(R_i)$. Notice that \Cref{eqn:canonical-jumps-tensor} trivially holds. Since~we~have~$\smash{\sharp \ar(R_i) < \sharp \ar(R_i) + 2 \leq \sharp \ar(R)}$,~we~can~apply~the induction hypothesis on $R_i$, from which we deduce that there exists a \seqc~proof $\pi_i$ in $\icomll$ such that $\pi_i \deseqjump \can{R_i}$. By taking $\pi$ as the \seqc proof obtained from $\pi_1$ and $\pi_2$ by applying a $\otimes$ rule introducing the type of the conclusion of~$n$,~we~can~now conclude that $\pi \deseqjump \can{R}$. \qedhere
  \end{itemize}
 \end{proof}
 
 By \Cref{cor:accw-output-conclusion} and \Cref{itm:icomll-ac} of \Cref{rmk:icomll}, we can reformulate \Cref{lemma:sequentialization-icomll} as follows.
 
 \sequentializationICOMLL*

\end{document}